\documentclass[12pt]{wlscirep}
\usepackage[utf8]{inputenc}
\usepackage{float}
\usepackage[T1]{fontenc}
\usepackage[draft]{changes}
\usepackage{csquotes}
\usepackage{amsmath}
\usepackage{xspace}
\usepackage{bm}
\usepackage{amssymb}
\usepackage[ruled,vlined]{algorithm2e}
\setlength{\marginparwidth}{2cm}
\usepackage{todonotes}
\usepackage{amsthm}

\newcommand{\supercite}[1]{\textsuperscript{\cite{#1}}}

\newcommand{\ourModel}{PhyE2E\xspace}

\newcommand{\ourModelDCM}{PhyE2E(D\&C+MCTS)\xspace}
\newtheorem{definition}{Definition}
\newtheorem{theorem}{Theorem}
\newtheorem{lemma}[theorem]{Lemma}

\newcommand{\appendixref}[1]{%
    \ifthenelse{\equal{#1}{fig:Divide-and-Conquer-performance}}{S1}{%
    \ifthenelse{\equal{#1}{tab:feynman_dataset}}{S1}{%
    \ifthenelse{\equal{#1}{tab:constants_phy1}}{S2}{%
    \ifthenelse{\equal{#1}{tab:constants_phy2}}{S3}{%
    \ifthenelse{\equal{#1}{tab:constants_phy3}}{S4}{%
    \ifthenelse{\equal{#1}{tab:constants_phy4}}{S5}{%
    \ifthenelse{\equal{#1}{tab:constants_phy5}}{S6}{%
    \ifthenelse{\equal{#1}{tab:formulas_phy1_otherYear}}{S7}{%
    \ifthenelse{\equal{#1}{tab:constants_phy1_otherYear}}{S8}{%
    \ifthenelse{\equal{#1}{tab:phy1_baseline}}{S9}{%
    \ifthenelse{\equal{#1}{tab:phy1_consts}}  {S10}{%
    \ifthenelse{\equal{#1}{tab:phy2_baseline}}{S11}{%
    \ifthenelse{\equal{#1}{tab:phy2_consts}}  {S12}{%
    \ifthenelse{\equal{#1}{tab:phy3_baseline}}{S13}{%
    \ifthenelse{\equal{#1}{tab:phy3_consts}}  {S14}{%
    \ifthenelse{\equal{#1}{tab:phy4_baseline}}{S15}{%
    \ifthenelse{\equal{#1}{tab:phy4_consts}}  {S16}{%
    \ifthenelse{\equal{#1}{tab:phy5_baseline}}{S17}{%
    \ifthenelse{\equal{#1}{tab:phy5_consts}}  {S18}{%
    \ifthenelse{\equal{#1}{sec:baselines}}{3}{%
    \ifthenelse{\equal{#1}{sec:proof-decomposition}}{5.1}{%
    \ifthenelse{\equal{#1}{sec:proof-aggregation}}  {5.2}{%
    \ifthenelse{\equal{#1}{sec:algorism}}{4}{%
    \ifthenelse{\equal{#1}{algorism1}}{1}{%
    ???}}}}}}}}}}}}}}}}}}}}}}}}}

\newcommand{\defeq}{\overset{\text{\tiny def}}{=}}
\begin{document}

\title{A Neural Symbolic Model for Space Physics}

\author[1,$\dag$]{Jie Ying}
\author[2,$\dag$]{Haowei Lin}
\author[3,$\dag$]{Chao Yue}
\author[4]{Yajie Chen}
\author[5]{Chao Xiao}
\author[5]{Quanqi Shi}
\author[2]{Yitao Liang}
\author[6,7,*]{Shing-Tung Yau}
\author[6,7,8,*]{Yuan Zhou}
\author[9,10,*]{Jianzhu Ma}

\affil[1]{Qiuzhen College, Tsinghua University, Beijing, China.}
\affil[2]{Institute for Artificial Intelligence, Peking University, Beijing, China.}
\affil[3]{School of Earth and Space Sciences, Peking University, Beijing, China}
\affil[4]{Max-Planck Institute for Solar System Research, Göttingen, Germany.}
\affil[5]{Institute of Space Sciences, Shandong University, Weihai, China}
\affil[6]{Yau Mathematical Sciences Center, Tsinghua University, Beijing, China.}
\affil[7]{Beijing Institute of Mathematical Sciences and Applications, Beijing, China.}
\affil[8]{Department of Mathematical Sciences, Tsinghua University, Beijing, China.}
\affil[9]{Department of Electronic Engineering, Tsinghua University, Beijing, China.}
\affil[10]{Institute for AI Industry Research, Tsinghua University, Beijing, China.}
\affil[$\dag$]{Equal contribution.}
\affil[*]{Correspondence should be addressed to: \href{mailto:styau@tsinghua.edu.cn}{styau@tsinghua.edu.cn}, \href{mailto:yuan-zhou@tsinghua.edu.cn}{yuan-zhou@tsinghua.edu.cn}, \href{mailto:majianzhu@tsinghua.edu.cn}{majianzhu@tsinghua.edu.cn}.}

\begin{abstract}
Symbolic regression, a key problem in discovering physics formulas from observational data, faces persistent challenges in scalability and interpretability. We introduce \ourModel, an AI framework designed to discover physically meaningful symbolic expressions. \ourModel decomposes the symbolic regression problem into sub-problems via second-order neural network derivatives, and employs a transformer architecture to translate data into symbolic formulas in an end-to-end manner. The generated expressions are further refined via Monte-Carlo Tree Search and Genetic Programming. We leverage a large language model to synthesize extensive expressions resembling real physics, and train the model to recover these formulas directly from data. Comprehensive evaluations demonstrate that \ourModel outperforms existing state-of-the-art approaches, delivering superior symbolic accuracy, fitting precision, and unit consistency. We deployed \ourModel to five critical applications in space physics. The AI-derived formulas exhibit excellent agreement with empirical data from satellites and astronomical telescopes. We improved NASA's 1993 formula for solar activity and provided an explicit symbolic explanation of the long-term solar cycle. We also found that the decay of near-Earth plasma pressure is proportional to $r^2$ to Earth, with subsequent mathematical derivations validated by independent satellite observations. Furthermore, we found symbolic formulas relating solar EUV emission lines to temperature, electron density and magnetic field variations. The formulas obtained are consistent with properties previously hypothesized by physicists.
\end{abstract}

\maketitle

\section{Introduction}\label{sec3}
The primary distinction between physics and other data sciences is the pursuit of the discovery of fundamental laws behind the world using concise symbolic formulas.  
The discovery of physics formulas is a lengthy process of trial and error.
For instance, after about 40 attempts to match Mars data with various elliptical shapes, Kepler discovered that Mars' orbit was elliptical and proposed the three laws of Kepler\supercite{cranmer2023interpretablePHD}. 
Similarly, through meticulous experimentation of electric and magnetic phenomena, Faraday unveiled the laws of electromagnetic induction. He demonstrated that the intricate relationship between electricity and magnetism could be articulated through elegant, fundamental principles derived from empirical data\supercite{pearce1963faraday}.

Discovering laws in physics or other specialized fields often demands years of domain expertise. Numerous methods have been specifically developed and proposed to aid in the discovery of physics formulas and to enhance the understanding of the underlying principles\supercite{brunton2016discovering, de2024ai, ahmadi2024ai}.
Since AI has achieved tremendous success in multiple domains, a naturally arising question is whether we can leverage AI to automatically extract physical laws from experimental observation data to better understand our physical world. 
This task is known as symbolic regression (SR) within the AI field and has received widespread attention in recent years.
Genetic algorithms (GAs) were first adopted to address SR problems by evolving a population of candidate symbolic expressions (typically represented as trees) to minimize a fitness function, which measures how well the expression fits the data\supercite{schmidt2010age, schmidt2009distilling, la2019probabilistic, lalearning, virgolin2021improving, mccormick2019gplearn, de2021interaction, arnaldo2014multiple, kommenda2020parameter, virgolin2019linear}. 
Monte Carlo Tree Search (MCTS) predicts the expression by exploring a search tree of possible expressions, simulating paths down the tree by randomly selecting mathematical operations to extend expressions. For each path, it generates complete expressions and assesses fitness based on how well they approximate the target function\supercite{kamienny2023deep, lu2021incorporating, xu2023rsrm, xie2024efficient, sun2022symbolic}. 
Compared to GAs, MCTS offers more dynamic and fine-grained control over reward signals. Rewards can be customized to reflect the quality of partial solutions during tree exploration, facilitating a more efficient search for the optimal solution.
Recent advances in deep neural networks have paved the way for the development of end-to-end methodologies.
Unlike Genetic Algorithms (GAs) and Monte Carlo Tree Search (MCTS), which are symbolic search techniques that frequently struggle with the expansive search space, the end-to-end approach streamlines the process by eliminating the need for iterative searching and refinement.
It predicts mathematical expressions as sequences of symbols (operators, variables, constants) from data, allowing symbolic expressions to be generated in a single forward pass through the neural network, which significantly improves speed, especially for large datasets and complex functions\supercite{valipour2021symbolicgpt, chen2024bootstrapping, xing2021automated, kamienny2022end, biggio2021neural, shojaee2023transformer}. 
End-to-end approaches require large datasets for training, yet the number of physical formulas discovered by humanity remains relatively limited. There are still many unknown areas in the physical world waiting for humanity to discover and formulate new laws and equations.
Therefore, current data-driven end-to-end algorithms are limited in reasoning about simple mathematical equations, with only a few preliminary works successfully applied to real physical data\supercite{makke2025inferring, 10.1093/pnasnexus/pgae467}.
Additionally, symbols in physical law have physical unit systems, and the entire expression should ensure the correctness of these units, which is not yet considered by current computational methods. 

To address these limitations, we propose a new framework, named \ourModel, 
to achieve accurate symbolic regression for space physics. The \ourModel framework includes the following key components.
First, we fine-tuned a large language model (LLM) with existing physics formulas, enabling it to generate a diverse array of formulas that align with the statistical distribution of physics formulas. 
By harnessing the common knowledge acquired by LLMs from the internet, the generative model efficiently learns the underlying distribution from a small set of seed formulas, thereby overcoming the challenge of data sparsity in the training process.
Second, we trained an end-to-end formula regression model based on the transformer model, which directly converts data matrices into symbolic formulas encoded in Polish representation\supercite{shojaee2023transformer, kamienny2022end, biggio2021neural}. With the aid of the LLM, our end-to-end model is trained with a large volume of formulas that ``look physical'' and adhere to consistent unit systems. 
Third, to reduce the search complexity, we developed a formula-splitting technique that is able to group variables without nonlinear (or logarithmically nonlinear) relationships, producing a series of simpler sub-formulas for a nested and more simplified symbolic regression task. 
This technique uses an oracle neural network to fit the data and then analyzes its second-order derivatives to uncover relationships among the input variables. 
Finally, we leveraged the state-of-the-art GAs and MCTS methods to further refine the predicted formulas.
To evaluate our performance, we conducted comparisons on both an LLM-synthesized dataset and another real-world physics dataset AI-Feynman against GA-based, Transformer-based, and NN-based methods. Experimental results indicate that our method outperforms all others in terms of data fitting accuracy, the correctness of the mathematical form of the formulas, and unit accuracy.
We further applied our model to various important applications in space physics, including predicting sunspot numbers, plasma sheet pressure, solar rotational angular velocity, emission line contribution functions from the Sun, and lunar tidal signals in the plasma layer.
Compared to formulas proposed by physicists, the formulas derived from \ourModel exhibit better generalizability, a more concise mathematical form, more precise physical units, and more importantly, provide physically meaningful insights and explanations for instrumental observations.

\section{Results}\label{sec4}

\paragraph{Overview of \ourModel.} 
\ourModel comprises an end-to-end transformer-based physical model designed to take observed data points as input and predict both the operators and the physical units involved in a formula directly (Fig.~\ref{fig1}). 
A set of 264,000 synthetic formulas were generated by an LLM (OpenLLaMA2-3B) that had been fine-tuned using real physics formulas from the Feynman Symbolic Regression Database (FSReD)\supercite{udrescu2020ai} (Fig.~\ref{fig1} top). 
We randomly selected 180,000 formulas from the synthetic dataset for training, reserving the remaining data for testing.
The inference process of \ourModel involves three stages: a variable interaction detection method to decompose the target formula into sub-formulas (Fig.~\ref{fig1} bottom left), end-to-end prediction of each sub-formula using the trained model (Fig.~\ref{fig1} middle), and GA and MCTS approaches to refine the predicted formulas (Fig.~\ref{fig1} bottom right). 
More specifically, we first fit the data using a standard multi-layer neural network, and used its Hessian matrix to determine the nonlinear interactions between all pairs of variables (Sec.~\ref{methods:D&CDivide} in Methods).
The core of our model leverages a transformer architecture, synthesizing formulas as a prefix sequence with attached physical units (Sec.~\ref{methods:PhyE2E} in Methods). This approach incorporates both observed data points and physical prior knowledge about the target formula. 
In the final stage, we utilized GA and MCTS approaches to minimize the root-mean-square error (RMSE) of the predicted formula (Sec.~\ref{sec:MCTS-GP-refinement} in Methods). This was achieved by using a grammar pool of context-free rules that incorporates basic operators like exp, as well as the most promising sub-formulas, which were automatically constructed within the search tree by \ourModel.

\paragraph{Performance on the synthetic and AI Feynman datasets.}  

We divided the synthetic dataset containing 264,000 formulas into training, validation, and test sets with a ratio of 80\%, 10\%, and 10\%.
We ensure that all validation and test formulas are unseen during training by removing formulas that are either identical, or become completely equivalent to the formulas in the training set after simple mathematical transformations (Sec.~\ref{methods:dataDividing} in Methods). 
We compared \ourModel with 15 state-of-the-art symbolic regression baseline models, including 4 GA-based models (PySR\supercite{cranmer2023interpretable}, GP-GOMEA\supercite{virgolin2021improving}, Operon\supercite{10.1145/3377929.3398099}, and GPLearn\supercite{mccormick2019gplearn}), 3 Transformer-based models (TPSR\supercite{shojaee2023transformer}, EndToEnd\supercite{kamienny2022end}, NeSymReS\supercite{biggio2021neural}), 2 LLM-based models (LaSR\supercite{grayeli2024symbolic}, LLM-SR\supercite{shojaee2024llm}), and 6 NN-based models (uDSR\supercite{landajuela2022unified}, PhySO\supercite{tenachi2023deep}, AIFeynman\supercite{udrescu2020ai, udrescu2020ai2}, ParFam\supercite{scholl2023parfam}, KAN\supercite{liu2024kan, liu2024kan_v2}, BSR\supercite{jin2019bayesian}).
The technical details of running these models are provided in Supplementary Notes \appendixref{sec:baselines}.
The detailed comparisons with PySR under different configurations are provided in Supplementary Notes \appendixref{sec:comparison_pysr}.
We also included two variants of \ourModel in our comparison, including versions without using the formula decomposition module (D\&C) and the MCTS refinement module (MCTS).
We evaluated the performance of all the models on 6 metrics\supercite{la2021contemporary} including symbolic accuracy, average accuracy($R^2>0.99$), unit accuracy, complexity, relative complexity to the ground truth formulas and elapsed times (Sec.~\ref{methods:metrics} in Methods).

First, we evaluated whether the physics formulas synthesized by the LLM were consistent with the real physics formulas from the Feynman Symbolic Regression Database (FSReD).
It can be observed that the formulas generated by the LLM closely match the distribution of real physics formulas in terms of the number of variables, formula complexity, depth, and operator types measured by the Jensen-Shannon divergence ($D_{\mathrm{JS}}$) (Sec.~\ref{methods:metrics} in Methods, Fig.~\ref{fig2}a). 
Then, we studied the symbolic accuracy of the formulas generated by the model, that is, whether the mathematical forms of the formulas correspond to the true formulas used to generate the data. 
\ourModelDCM exceeds the second-best model PySR by 26.48\%, outperforms the best NN-based model, uDSR, by 31.75\%, the best LLM-based model, LaSR, by 37.89\%, and the best Transformer-based model, TPSR, by 39.83\% (Fig. \ref{fig2}b).
The generated formulas by \ourModel also demonstrate a more powerful ability to fit data compared to other methods.
\ourModelDCM outperforms all the state-of-the-art approaches by at least 20.00\% in terms of Avg.Acc.($R^2>0.99$).
Regarding the accuracy of (physical) units, we observe that \ourModel achieves 99.27\% of accuracy, leading all the other approaches. The performance drops to 93.30\% by including the D\&C and MCTS modules. 
This decline is due to the absence of unit constraints during the D\&C and MCTS refinement stage.
The most compatible baseline of unit accuracy is PhySO, which reaches a comparable 89.70\% with a strictly unit constraint during its search process\supercite{tenachi2023deep}.

The data-fitting capabilities of the formulas can be enhanced by increasing their complexity.
According to Occam's Razor, complex formulas tend to have weaker generalizations compared to simpler ones and lose their interpretability in a physical context. Therefore, in addition to studying the formulas' ability to fit data, we also focus on the complexity of the formulas generated by different models.
Our model produces formulas with lower model complexity compared to the formulas generated by other models.
For 42.17\% of the test formulas, their depth is less than 3, suggesting that they predominantly represent linear relationships. 
For these low complexity formulas, our model successfully recovers the mathematical form of 98.02\% of them (Fig.~\ref{fig2}d). 
The relative complexity to the ground truth formula of \ourModel is 2, which is 33.27\% better than the second-best model PySR. By introducing the D\&C and MCTS modules, \ourModel increases the complexity of formulas by 6.79\%, while still maintaining a similar complexity with PySR.
Only a handful of other baseline models demonstrate the capability to predict formulas of appropriate complexity (e.g., PySR, GP-GOMEA, AIFeynman, PhySO), primarily because of their constraints on complexity or the inclusion of physical units.
Others either generate formulas with high complexity (complexity>50) or formulas deviate far away from the target formula (relative complexity > 10), making it difficult to interpret in practice (Fig.~\ref{fig2}b).

Next, we evaluated the performance of different models on the formulas collected from the Feynman Symbolic Regression Database (FSReD), referred to as the Feynman Dataset for brevity.
The Feynman dataset contains 100 real-world physics formulas sampled from the seminal Feynman Lectures on Physics\supercite{feynman1963a, feynman1963b, feynman1963c} covering core physics topics like classical mechanics, electromagnetism, and quantum mechanics.
Although our model was not directly trained on formulas from the Feynman dataset, our training dataset was generated by LLM based on formulas from the Feynman dataset.
To inhibit data leakage, we removed the formulas in the training dataset that were either identical, or became completely equivalent after simple mathematical transformations compared to those in the Feynman dataset. A comprehensive list of the test formulas is provided in Supplementary Table \appendixref{tab:feynman_dataset}.

First, we observe that all computational methods, including ours, show similar performance on the Feynman dataset and on our synthetic data, which demonstrates that the distribution of formulas generated by the LLM is essentially consistent with those from the Feynman dataset.
In terms of symbolic accuracy, \ourModelDCM exceeds the best classical model, PySR, by 10.09\%, and the best NN-based model, uDSR, by 18.49\%, the best Transformer-based model, TPSR, by 21.77\%, the best LLM-based model, LaSR, by 29.35\%. 
Although PySR outperforms standard \ourModel in numerical precision ($R^2>0.99$) with an accuracy of 84.96\%, it is still surpassed by \ourModel with the D\&C and MCTS modules.
The complexity of the models generated by \ourModel remains low. The relative complexity to the ground truth formula of \ourModel is 2.98, 3.67\% higher than the best model PySR.
The relative complexity for \ourModel is lower than 5.75 and the complexity is lower than 16.85, which is essentially the best among all the baseline models (Fig.~\ref{fig2}c).
To further investigate the relationship between performance and formula complexity, we calculated the symbolic accuracy as a function of the number of variables, complexity, and the number of unary and binary operations. 
We found that our method had a significant advantage over other methods for formulas with high complexity. When the complexity is larger than 20, our model outperformed the second-best methods 67.11\% and 32.73\% on the two datasets, respectively (Fig.~\ref{fig2}d).  
We divided both datasets into three subsets of varying difficulty based on the similarity of mathematical formulas compared to those in training datasets (0.95-1.0 as easy, 0.80-0.95 as medium, 0.00-0.80 as hard), and then systematically calculated the symbolic accuracy of each method. 
We found that our method had a significant advantage on the medium and hard datasets (Sec.~\ref{methods:dataDividing} in Methods, Fig.~\ref{fig2}d).
 
Among all the modules, the divide-and-conquer (D\&C) module plays a crucial role in simplifying the search space in our framework. 
Consider a formula $f=m\sqrt{B_1^2+B_2^2+B_3^2}$ from the Feynman dataset, our D\&C module first determined that the three variables $B_1$, $B_2$ and $B_3$ under the square root do not interact, which indicates that the operators between them can only be addition or subtraction (Supplementary Fig.~\appendixref{fig:Divide-and-Conquer-performance}a). 
Therefore, the original symbolic regression task decomposes into three parts $g_i=m\sqrt{B_i^2+C_i}$ ($i={1,2,3}$), each of which is processed individually by the end-to-end module. The predicted formulas for $g_1$, $g_2$ and $g_3$ are later aggregated into one complete formula (Supplementary Fig.~\appendixref{fig:Divide-and-Conquer-performance}a).
The D\&C module reduces the complexity of a formula by splitting a set of variables that have no nonlinear interactions locally within a certain formula. 
This also explains why we find that our method has a considerable advantage over other methods as the complexity of the formulas increases.

However, the risk associated with this methodology is that if the decomposition is incorrect, some of the segments will contain incorrect variables. Therefore, we carefully studied the performance of different D\&C strategies.
We implemented 4 different strategies, including Single Pattern (detects the interaction of additive and multiplicative), Multi-Pattern (identifies all interactions such as additive terms under sine functions), Fixed Threshold and Adaptive Threshold (Sec.~\ref{methods:D&CDivide} in Methods).
We evaluated the performance of different strategies by assessing how many formulas could be correctly decomposed, partially correctly decomposed, and completely incorrectly decomposed on the test set of the synthetic dataset.
We found that the Adaptive Threshold strategy provided a flexible approach for interaction detection, resulting in a substantial improvement in complete accuracy by 50.61\%.
The Multi-Pattern strategy facilitates diverse types of interactions and effectively reduces the proportion of absolutely wrong formulas from 6.50\% to 5.03\% (Supplementary Fig.~\appendixref{fig:Divide-and-Conquer-performance}b).
Overall, adopting the Multi-Pattern and Adaptive Threshold strategies results in the highest number of accurate formula decompositions, which was also the strategy we used in our framework. 

To discover physics formulas rather than purely mathematics formulas, another important technique is the consistent units of physics quantities\supercite{tenachi2023deep, udrescu2020ai2}. We further retrain two additional models using the same architecture but one without units decoding strategy and another without any physical priors, thereby excluding the unit decoding strategy as well (Sec.~\ref{methods:PhyE2E} in Methods). These three models are evaluated using three accuracy metrics on the synthetic dataset (Supplementary Fig.~\appendixref{fig:Divide-and-Conquer-performance}d). We found that the unit prior plays an important role especially when dealing with a small amount of input data. In an extreme case with only five input data points, the symbolic accuracy of PhyE2E improved to 39.83\%, compared to 26.06\% without units decoding strategy and 8.97\% without any physical priors. Another observation is that the units accuracy is notably enhanced by the incorporation of physical priors. The unit accuracy improved by 25.90\% compared to the PhyE2E without units decoding strategy, and by 56.70\% compared to the PhyE2E without any physical priors in the five-data-point case. This improvement is also observed in the case with 50 data points, where the unit accuracy increased by 4.47\% and 12.23\%, respectively.

\paragraph{Performance of sunspot number prediction.} 

Next, we applied our trained model to multiple applications in space physics.  
Our goal is to find formulas that are more accurate in prediction and simpler in mathematical form than existing physics formulas.
We directly applied our trained model to these real-world applications instead of performing any fine-tuning operations.
We started by studying the pattern of changes in the sunspot number (SSN) over time. The Sun serves as the primary source of energy for the entire Earth system.
Predicting sunspot numbers is essential for forecasting space weather events that can impact both satellite operations and terrestrial communication systems. 
These forecasts are also pivotal in climate studies, aiding in modeling the impact of solar variability on the Earth's climate. 
The accurate prediction of sunspot activity is also crucial for managing the effects of geomagnetic storms on the technological infrastructure, which can lead to significant disruptions and damage.
Although the 11-year solar cycle is determined through direct observations of sunspot numbers for the past four centuries, scientists want to know whether there is a longer cycle in solar activity which could influence the climate of Earth (Fig.~\ref{fig:SSN}a).
Therefore, in this task, in addition to predicting the sunspot number, we also focus on how to derive the long-term cyclical variations of solar activity from the predicted physics formulas.

We first collected the SSN data from the Sunspot Index and Long-term Solar Observations (SILSO) over the last 400 years\supercite{sidc}.
The most widely adopted formula is the one proposed by Hathaway et al.\supercite{hathaway1994shape}, which is still being used in recent studies to analyze data from the last 30 years\supercite{upton2023solar}.
This formula modeled the sunspot numbers $R(t)$ using different sets of parameters for different cycles (Fig.~\ref{fig:SSN}b).
Although it can accurately fit the data for each cycle, it cannot be generalized from one cycle to another, making it incapable of predicting future SSNs or revealing the longer cycle of solar activity. 
To summarize a symbolic formula for SSN, we selected the SSN data from year 1855 to 1976 which containing 11 cycles and 1,450 data points as input of our model. 
The main components of the denominator in our formula are a squared sine term and a squared cosine term, while the numerator consists of a squared sine term (Fig.~\ref{fig:SSN}c top).
The Pearson correlation between the predicted SSN and the measured ones for the next four complete cycles (year 1976 to 2019) reaches 0.72. 
For the upcoming cycle (2019-present), \ourModel predicts the peak value to be 177.40 and occurs on October 10, 2024 (Fig.~\ref{fig:SSN}c top).
It is worth noting that no data were used to fine-tune or retrain our model. 
We did not use all the data to generate the formula because we needed to examine the generalizability of the formula obtained by the model. The generalizability of the model and the generalizability of the formulas predicted by our model should be evaluated separately.
We tried generating formulas with different amounts of data, and the resulting formula forms remained largely consistent (Supplementary Tables \appendixref{tab:formulas_phy1_otherYear}, \appendixref{tab:constants_phy1_otherYear}).

We compared the formula generated by \ourModel with those generated by other SR models. 
Among all the models, only our model can be directly applied to the data for formula inference without any retraining or fine-tuning. 
Therefore, we retrained all the other models to be compared on data from year 1855 to 1976 and tested their performance on data after year 1976 to fairly compare with our model. 
We first examined the formulas generated by different SR models. We observed that, except for AIF and GP-GOMEA, all other methods produced formulas containing trigonometric functions capable of generating periodic outputs. Formulas from BSR, EndToEnd, KAN, NeSymReS, PhySO, and TPSR all yield identical values for each cycle, which clearly contradicts our observations and common sense. 
The formulas generated by GPLearn, Operon, ParFam, uDSR, and our method are capable of generating periodic outputs with different values for each cycle (Fig.~\ref{fig:SSN}c bottom). 
We further examined the formulas generated by PySR under different operator sets and different constraints variants (Supplementary Tables \appendixref{tab:phy1_pysr_baseline}, \appendixref{tab:phy1_pysr_consts}), and take the best PySR model with the highest Avg-R score into comparison.
However, the five other models generated excessively complex formulas, making it impractical to parse their physical meaning or ascertain the long-term cyclical patterns of solar activity (Supplementary Tables \appendixref{tab:phy1_baseline}, \appendixref{tab:phy1_consts}).
Next, we quantified the performance of the formulas generated by these models on the test sets from year 1976 to 2019 using two metrics: 1) the correlation between the formula's predictions and the measured data within each individual cycle, then averages these correlations across multiple cycles (avg-R); 2) the correlation across multi-cycles (multi-R).
Our simpler formulas significantly outperformed these complex formulas for both metrics. Specifically, the avg-R and multi-R of our method outperform the second-best methods by at least 76.01\% and 58.57\%, respectively (Fig.~\ref{fig:SSN}d).

To further validate the accuracy of the formula we obtained, we focused on the SSN data before the 1700s. Due to technological limitations, there are no SSN data directly observed from telescopes prior to 1749\supercite{sidc}.
Therefore, we collected solar modulation levels reconstructed from atmospheric $^{14}C$ concentrations from the annual rings of thousand-year-old trees as an approximation of the SSN measurements\supercite{brehm2021eleven}. 
We adopted the same smoothing strategy as reported in Brehm et al., 2021 and found that the Pearson correlation between solar modulation from tree rings and the SSN measured by the telescopes is 0.886 after the 1700s, which verifies their close relationship (Fig. \ref{fig:SSN}f).
Then, we compared the SSN generated by our formula and solar modulation data before the 1700s. 
Their Pearson correlation is 0.561 before 1300, 0.653 during 1300 to 1700 and 0.501 after 1700 (Fig. \ref{fig:SSN}e), which further demonstrates the predictive capacity of our formulas on previously unseen data.
Lastly, since there are three sine/cosine functions in our formula, it is natural for us to derive three cycles from these trigonometric functions. The shortest cycle is 10.91 years, which aligns with observations of solar activity widely accepted by the research community. The second longest cycle is 59.27 years, which coincides exactly with the 60-year cycle of the ancient Chinese astronomical calendar system of Heavenly Stems and Earthly Branches. The longest cycle is 204.93 years, which we speculate is the cycle of the solar system operating within a larger planetary system (Fig. \ref{fig:SSN}e). 
Although this result requires further confirmation through additional astronomical observational data, it is the first conjecture directly derived from symbolic formulas. The constants of our formula are in Supplementary Table \appendixref{tab:constants_phy1}.

\paragraph{Performance of plasma pressure prediction.} 

Plasma pressure is a macroscopic parameter that plays an important role in plasma dynamics and the generation of electric currents. 
Increasing plasma pressure gradients in the radial direction causes the stretching of magnetic field lines and enhances perpendicular currents flowing azimuthally. The azimuthal plasma pressure gradient generates field-aligned currents, resulting in the bending of magnetic field lines (Fig.~\ref{fig:plasma_rotation}a).
Wang et al.~proposed a formula that describes how equatorial plasma pressure varies with its position relative to Earth using Geotail and the NASA mission Time History of Events and Macroscale Interactions During Substorms (THEMIS) data. 
However, their formula involves exponential terms and 9 constants for night-side equatorial isotropic plasma pressure, making it difficult  to derive meaningful physical interpretations \supercite{wang2013empirical,yue2013empirical}(Fig.~\ref{fig:plasma_rotation}b).
To derive a simpler formula, we divided the same equatorial plasma pressure data according to the azimuthal angle, using the data from the near-side of the Earth as input for our model and the data from the far-side to assess the performance of our formula.
As we increased the number of input data points, feeding the \ourModel model with progressively farther data from Earth, the average mean square error (MSE) decreased rapidly (Fig.~\ref{fig:plasma_rotation}c, Supplementary Table \appendixref{tab:phy2_baseline}, \appendixref{tab:phy2_consts}). 
\ourModel achieves an average MSE of $7.04\times10^{-3}$ with only 10\% of the data provided, demonstrating strong generalization capabilities with the small dataset as input. 
As more data was provided, the accuracy of our model continued to improve, eventually reaching $6.63\times10^{-4}$, resulting in more precise predictions for the far side of Earth regions (Fig.~\ref{fig:plasma_rotation}e).
\ourModel also outperforms all other baseline models in terms of fitting accuracy (MSE) and model complexity, delivering the most accurate and simplest prediction formula, while other models are unable to provide accurate predictions in certain areas.
The formula by Wang et al.~cannot accurately predict the plasma pressure for the far side of Earth regions, while the EndToEnd method fails to provide accurate predictions for the near side of Earth regions (Figs.~\ref{fig:plasma_rotation}d).
Note that this problem involves two variables, so there are two possible scenarios: one where $r$ and $\theta$ can be decomposed into two sub-formulas, and another where decomposition is not possible. We generated one formula for each scenario, compared their MSE on the training dataset, and selected the formula with the lower MSE as the final prediction. In this case, the formula derived using the decomposition method outperformed the alternative.
The formula we predicted has a more concise mathematical form compared to the formula proposed by Wang et al.~(Fig.~\ref{fig:plasma_rotation}b). 
It reveals to us that the decay of the near-Earth plasma pressure is proportional to the square of the distance $r$ to the Earth's center, whereas in the formula proposed by Wang et al., the plasma pressure has an exponential relationship with $r$. 
More importantly, we can derive certain physical facts that align with the observational data from this new formula.
Specifically, if the plasma pressure decays with the square of $r$ and it is also known that the magnetic pressure decays with the sixth power of $r$. Then, according to the formula $\text{plasma beta} = \text{magnetic pressure (Pth)} / \text{plasma pressure (Pb)}$,
we can infer that plasma beta increases with the fourth power of $r$, which can be confirmed by observational data from another independent study\supercite{lui1992radial}. 
Among the formulas obtained by the other methods, only the EndToEnd approach\supercite{kamienny2022end} produced a formula that is inversely proportional to the square of $r$. However, its mathematical form is more complex compared to our formula (Supplementary Table \appendixref{tab:phy2_baseline}).
The constants of our formula are in the Supplementary Table \appendixref{tab:constants_phy2}.

\paragraph{Performance of solar rotational angular velocity prediction.} 

The Sun's magnetic field is generated by the plasma motion within its interior.
Angular velocity of solar rotation varies at different latitudes, and the magnetic field lines are stretched and twisted (Fig.~\ref{fig:plasma_rotation}f). 
Differential rotation is a significant factor in the solar cycle for the prediction of magnetic fields and sunspots. 
Understanding solar differential rotation helps to improve the prediction of solar activities, which is important for predicting and mitigating the impact of space weather events on satellites and human activities in space. 
Differential rotation also provides key insights into the structure and dynamics of the solar interior by comparing rotation speeds at different latitudes to those predicted by comprehensive numerical models\supercite{Hotta2021, Vasil2024}. 
One of the most widely adopted formulas decribing the relationship between the solar differential rotation and solar latitude was derived by Snodgrass et al.\supercite{snodgrass1983magnetic} 
In this work, they assumed that solar differential rotation was symmetric with respect to the equator. Such an assumption often fails especially during periods of high levels of solar activity. 
In addition, this formula was fitted by using the measurements at low latitudes of the Sun, but observations at high latitudes are still missing, which limits the suitability of the model near the solar poles. 
For this task, the challenge for other AI models is that the limited amount of data makes it impossible for them to train the model and predict the formulas. 

\ourModel derived a simpler and more accurate formula with a simple trigonometric term using the data from Snodgrass et al.~in Magnetic Rotation of the Solar Photosphere\supercite{snodgrass1983magnetic}.
The largest difference between this formula and the one generated by \ourModel lies in their difference in the trigonometric periodicity, leading to a steeper prediction for polar regions, rather than a flat one (Fig.~\ref{fig:plasma_rotation}g).
We further reduced the number of training data points and found that \ourModel could predict the same formula with only 14 data points as input, rapidly achieving an MSE of $1.31\times 10^{-4}$, which outperforms other models (Fig.~\ref{fig:plasma_rotation}g, Supplementary Table \appendixref{tab:phy3_baseline}, \appendixref{tab:phy3_consts}).
In contrast to the oscillations seen in other models, \ourModel exhibits exceptional consistency and robustness, providing stable predictions with varying amounts of input data. 
The constants of our model can be found at the Supplementary Table \appendixref{tab:constants_phy3}.
The predictions for all the baseline models are quite similar in non-polar regions, but they start to diverge in the polar regions (Fig.~\ref{fig:plasma_rotation}h).
Regarding predictions at high latitudes, Hotta et al.\supercite{Hotta2021} overcame the ``convective conundrum'' through the supercomputer Fugaku and successfully reproduced solar-like differential rotation.
We selected the simulation data from the north and south polar regions and compared them with the baseline models. 
The formula derived from \ourModel performed the best in both polar regions, with correlations of 0.9814 and 0.9740, outperforming all  baseline models including the formula proposed by Snodgrass et al\supercite{snodgrass1983magnetic}.
In addition, we applied our model to different heights in the solar atmosphere, using data from various spectral lines\supercite{rao2024height} in the photosphere (Si I and Fe I) and the chromosphere (H$\alpha$) (Fig.~\ref{fig:plasma_rotation}i).
Similar formulas are derived across all the spectral lines with remarkable consistency, suggesting that the differential rotation speed within the Sun follows a regular and predictable pattern(Fig.~\ref{fig:plasma_rotation}j).

\paragraph{Performance of contribution function of emission lines.}

Emission lines in the extreme ultraviolet spectrum of the Sun, such as Fe X lines, are often used to observe the solar corona (Hinode/EIS, Solar Orbiter/EUI) (Fig.~\ref{fig:contributionFunction_lunarTide}a). 
Predicting the contribution functions of these lines helps in plasma diagnostics such as temperatures, densities, and magnetic fields, which is essential to understand solar phenomena such as flares and coronal mass ejections. 
The EUV emission lines can also be formed in other types of astrophysical targets, including stellar coronae, galactic nuclei, and supernovae. 
Understanding the formation of the emission lines can provide insight into the physical conditions and processes of these targets.
Given its role as a critical component of fundamental atomic databases applicable to a wide range of studies, considerable efforts have been made to calculate the contribution functions of emission lines (CHIANTI\supercite{1997A&AS..125..149D, 2024ApJ97471D} and NIST atomic database\supercite{kramida2023}).
The contribution functions could be approximated by solving complex quantum mechanical equations involving detailed calculations of electron transitions, collisions, and radiative transfer, which is usually a computationally expensive process.
Therefore, a challenge in physics is whether we can accurately estimate the contribution function using easily observable data, including the temperature and electron density around the Sun.
Currently, there is no physics formula that accurately reveals how the temperature and electron density around the Sun influence the contribution levels of the emitted spectral lines.

To address this problem, we downloaded the Fe X 174 and Fe X 175 line data from the CHIANTI database\supercite{1997A&AS..125..149D, 2024ApJ97471D}, and uniformly sampled 2,500 data points in an electron density range of $10^8-10^{10}$ and a temperature range of $5\times10^5-5\times10^6$°C.
Data were segmented into low and high temperature regions using a cutoff of $2.8\times 10^{6}$°C. 
Instead of fine-tuning or retraining our model, we took the data from the low-temperature region as input to generate the formula and test the prediction performance of the formula in the high-temperature region. 
In the high-temperature region, \ourModel achieved a significantly lower MSE, with the magnitude of the MSE being two orders of magnitude smaller than the other baseline models (Fig.~\ref{fig:contributionFunction_lunarTide}b left).
Both the Fe X 174 and 175 emission lines are highly temperature-dependent, with a relatively smaller influence from the electron density. 
To address this issue, we further investigated the ratio of these two lines, which serves as a powerful diagnostic tool for probing the effects of electron density on the intensity.
Compared to the prediction of the two spectral lines individually, the prediction of the ratio of the two lines carries more physical significance and is also more challenging. Accurate predictions of the individual spectral lines do not guarantee that their ratio can be predicted accurately (Figs.~\ref{fig:contributionFunction_lunarTide}b,c,d).
In this task, our formula achieves an MSE of $3.10\times 10^{-3}$, which is three orders of magnitude lower than the second best method EndToEnd (Fig.~\ref{fig:contributionFunction_lunarTide}b, middle).

Next, we examined the physical significance of the formulas generated by different methods. First, the complexity of our formula is not the lowest, but it is the only formula that has the correct physical units among all the baseline methods (Figs.~\ref{fig:contributionFunction_lunarTide}b right, Supplementary Table \appendixref{tab:phy4_baseline}, \appendixref{tab:phy4_consts}).
In the solar corona, the processes of ionization and recombination, as well as collision and excitation, can be considered decoupled\supercite{2004pscibookA}. 
In our formula, the electron density and temperature also exist in a decoupled form.
The dependence of the electron density following the mathematical form $(n+c_1)/(n+c_2)$ is widely accepted by the space physics community\supercite{mason1994spectroscopic}. 
It captures the behavior where the intensity increases with electron density at low values but saturates at higher densities because of collisional de-excitation.
The temperature term of our formula is composed of the combination of a power-law term and an exponential term. The power-law term dominates at low temperatures, capturing the increase in intensity as more electrons gain sufficient energy to excite the ions; the exponential decay term dominates at high temperatures, reflecting the rapid decrease in intensity due to ionization to higher states. This combination of a power-law term and an exponential term was also adopted by Raymond et al.\supercite{1977ApJS35419R} The constants of our formula are in the Supplementary Table \appendixref{tab:constants_phy4}.
For the formulas generated by other methods, some have excessive complexity, making them difficult to interpret, while others have incorrect physical units or are overly simplistic. For instance, the formula produced by PhySO is simple, but does not include the electron density term (Supplementary Table \appendixref{tab:phy4_baseline}).

\paragraph{Performance of lunar tide signal of plasma layer.}

The Earth's magnetospheric electric fields, including corotation and convection electric fields, are crucial for understanding the behavior of charged particles and maintaining the stability of the magnetosphere (Fig.~\ref{fig:contributionFunction_lunarTide}e). 
These fields are responsible for the movement and energization of charged particles, which in turn affect space weather and the interaction between the solar wind and Earth's magnetic field. 
Prior to 2023, it was commonly accepted in the scientific community that the electric fields at a specific near-Earth location were solely influenced by the Earth's position relative to the Sun and the distance to the Earth's center.
A recent work indicated that due to the effects of lunar tidal forces, the electric fields were also related to the Earth's position relative to the Moon \supercite{xiao2023evidence}.
However, due to the complexity of the problem, physicists have been unable to provide a physical formula that links these three important factors: electric fields with the distance to the Earth's center, the relative positions of the Earth and the Moon, and the Earth's position relative to the Sun, which can be represented as Lunar Local Time (LLT), Magnetic Local Time (MLT) and L-shell, respectively.

To address this problem, we collected $\sim$20,000,000 data measured by the Van Allen Probes satellites between L values of 3–6 from January 2013 to May 2019 from the RBSP/EFW official website
(\url{http://www.space.umn.edu/rbspefw-data/}), and used \ourModel to generate a formula to predict Radial Electric Field, denoted as $E_r$, from LLT, MLT and L-shell values.
Due to the large volume and high redundancy of data, we divided the entire three-dimensional space near the Earth into $50\times50\times50$ grids, and then calculated the average value of the Radial Electric Field within each grid. 
We randomly sampled 80\% of the grids as input for our model and also used them as training data for other baseline models. 
The remaining data were adopted as test data to evaluate the performance of all models.
Based on the adaptive threshold for the decomposition of the formula, we found that there are no coupling relationships among these three variables. Therefore, we decomposed the original symbolic regression problem into three sub-problems, each containing only one variable. Then, we predicted each uni-variate function using the end-to-end model (Sec.~\ref{methods:PhyE2E} in Methods).
Without fine-tuning or re-training of the model, the formula generated by our model has an MSE lower than the second-best method by 53.37\%, with complexity reduced by 75.91\% (Fig.~\ref{fig:contributionFunction_lunarTide}f). 
We examined the effects of LLT and MLT on the Radial Electric Field separately and found that our formula provides a good approximation for the original data. The prediction of our formula is much smoother than the measured data, due to the data smoothing applied within each grid (Fig.~\ref{fig:contributionFunction_lunarTide}g).
Compared to the formulas generated by other methods, the formula produced by our model captures multiple physical principles, making it more physically meaningful. First, among all the models, only our model provides an asymmetric prediction between the dayside and nightside of the Earth, suggesting that the radial electric field ($E_r$) on the dayside decays more rapidly in the radial direction (L-shell), while on the nightside, the radial electric field($E_r$) decays faster in the non-radial direction (MLT).
Second, since the radial electric field (positive direction towards Earth) is derived from the calculation of the electric field's y-component, it exhibits periodic variation with Magnetic Local Time (MLT), and the period is 12 hours\supercite{zhang2024estimating}, which is consistent with the periodicity of the cosine function of MLT in our formula ($12.13$ hours). 
Third, our formula indicates that $E_r$ decays with the square of the L-shell, which is consistent with theoretical calculations. According to the ideal magneto-hydrodynamic (MHD), the corotational electric field $E$ could be derived as $E=-\Omega_E B_0/L_{shell}^2$, where $\Omega_E$ and $B_0$ are Earth's rotational angular velocity and Earth's surface magnetic field, respectively. The constants of our model could be found in Supplementary Table \appendixref{tab:constants_phy5}.
The complexity of our formula is not the lowest among all the models because the pattern of this physical application is complicated (Fig.~\ref{fig:contributionFunction_lunarTide}b, Supplementary Table \appendixref{tab:phy5_baseline}, \appendixref{tab:phy5_consts}). 
Among methods with lower complexity, only the formulas from BSR and PySR exhibit periodicity in LLT or MLT, and only PySR captures the inverse relationship between $E_r$ and L-shell. However, the BSR formula does not include the crucial L-shell variable, and the PySR formula lacks the LLT variable (Supplementary Table \appendixref{tab:phy5_baseline}).

\section{Discussion}\label{sec5}

Existing symbolic regression research primarily employs search methods based on Monte Carlo Tree Search (MCTS) and Reinforcement Learning (RL). These methods often struggle to accurately predict formulas with a large number of variables or complex operational relationships between variables. To discover the correct formulas within a limited time, most of the MCTS approaches require prior knowledge to achieve an initialization close to the true solution. In contrast, our method can decompose formulas without knowing their specific forms, significantly reducing the complexity of symbolic regression. For each decomposed sub-problem, we utilized an end-to-end approach, tokenizing the data and directly translating it into formula strings using a transformer. Among all SR methods, our approach offers a ready-to-use model and is the only one that does not require retraining or fine-tuning on physical data.

One characteristic of space physics is that there is no data noise on the planet scales or on the microscopic particle scale. However, if the model is to be generalized to other areas of physics, such as condensed matter physics and fluid dynamics, the effects of noise inherent in the data must be considered. 
Since the data is free from noise, the model must learn to deduce the operational relationships between physical variables based solely on the data provided.
Therefore, we can still leverage large language models using this method of simulating the generation of physics formulas for data augmentation. This principle is similar to that of AlphaZero\supercite{silver2018general}, which does not require human game records but can learn to play Go through AI-versus-AI games. This is because AI only needs to learn optimal strategies in various complex situations, rather than necessarily mimicking human players' thought processes and habits.
Currently, our current model cannot handle operations such as integration and differentiation, which means that a significant portion of physics formulas based on partial differential equations cannot be resolved. We believe that the data augmentation and formula decomposition techniques are still applicable in partial differential equations.


\section{Methods}\label{sec:methods}

We now detail the components of our system, starting with the generative model for synthetic physics formulas. 
Next, we present the core framework, which includes an end-to-end symbolic regression model integrated with a Divide-and-Conquer (D\&C) strategy for decomposing complex formulas into simpler sub-formulas. 
We then describe the process to refine the formula predicted by the end-to-end model, encompassing Monte Carlo Tree Search (MCTS) and Genetic Programming (GP) refinement. Finally, we discuss the details of how to construct test data and evaluation metrics.

\subsection{Generative model for synthetic physics formulas}
\label{sec:methods-generative-model}

To generate synthetic formulas resembling real physics formulas, we fine-tuned the pre-trained OpenLLAMA-2-3B language model\supercite{touvron2023llama} using the AI Feynman dataset\supercite{udrescu2020ai}, which is a benchmark collection of mathematical formulas representing real-world physical laws and relationships, consisting of 100 formulas. 
A two-stage fine-tuning strategy was devised to address the challenge of limited training data and to integrate prior knowledge of physical unit systems into the training process. 
In the first stage, the OpenLLAMA-2-3B model was fine-tuned on the AI Feynman training set. The fine-tuned model was then employed to generate 50,000 synthetic formulas. These formulas were evaluated for consistency with physical unit systems, resulting in approximately 8,000 formulas that adhered to unit consistency. The second stage involved reassigning weights to the 8,000 unit-consistent formulas generated in the first stage. 
This weighting was designed to ensure that the statistical distribution of the synthetic formulas, such as the number of variables, formula depth, and operator frequency, aligned with those of the real physics formulas in the AI Feynman dataset. Finally, the language model was further fine-tuned using the weighted distribution of the 8,000 formulas. 

Specifically, the fine-tuning was performed using the DeepSpeed ZeRO Stage 2 optimizer within the HuggingFace Transformer framework\supercite{maurya2024deep}, with a learning rate of \(3 \times 10^{-5}\). The training prompt followed the format: ``\texttt{Generate a physics formula: \{formula\}}''. 
Mathematical formulas in the prompt were represented in plain text using their natural mathematical forms. The notation of variables for physical quantities in the formulas adhere to the standard conventions used in the Feynman Dataset.
To generate synthetic formulas, the model was prompted with the same instruction. 
The hyperparameters of the generative model were configured to balance diversity and quality in the generated formulas: the temperature was set to 2.0,  efficient sampling was enabled (\(\texttt{do\_sample = True}\)), and the maximum length of the generated sequences was restricted to $64$ tokens. 
To evaluate the consistency of synthetic formulas with physical unit systems (in the first stage), the formula’s expression tree\supercite{lample2019deep} was constructed, and the units of each subtree were computed in a bottom-up manner, starting from the leaves and moving toward the root. During this process, the following checks were performed: (1) for any sub-tree in the form of $A + B$ or $A - B$, the units of $A$ and $B$ were required to be the same, (2) for any sub-tree in the form of $\sin(A)$, $\cos(A)$, or $\exp(A)$, the unit of $A$ was required to be null. For instance, the formula “acceleration $+$ velocity $/$ time” is valid while the formula “acceleration $+$ velocity” or  “$\sin($acceleration $+$ velocity $/$ time$)$” is invalid due to unit mismatch. No constraints were imposed on operations such as multiplication, division, or square root, although these operations produce derived units that may affect the validity of their parent expressions. For example, “sqrt(acceleration $/$ time)” yields a unit equivalent to that of velocity, so the formula “sqrt(acceleration $/$ time) $+$ velocity” satisfies the unit consistency requirement for addition. In contrast, “sqrt(acceleration $/$ time) + velocity $/$ time” fails this criterion.
To reassign weights to a set of formulas (in the second stage), a linear program (LP) was formulated. The LP variables represent the weights assigned to each formula, subject to the constraints that the weights must be non-negative and sum up to $1$. The objective was to minimize the total variation distances between the statistical distributions (e.g., number of variables, formula depth, operator frequency) of the weighted synthetic formulas and those of the AI Feynman dataset. These total variation distances were expressed as linear combinations of the LP variables. Additionally, a regularization term was included in the LP objective to ensure that the weighted distribution did not deviate excessively from the original uniform distribution.

\subsection{The divide-and-conquer strategy}
\label{methods:D&CDivide}

Many physics formulas exhibit intrinsic simplicity and symmetry, with variables often interconnected through straightforward addition or multiplication. Inspired by this observation, we proposed a divide-and-conquer strategy to decompose the target formula into a summation (or multiplication) of simpler sub-formulas by estimating inter-variable relationships. 
To achieve this, we first train an oracle neural network to fit the data, then use the hessian matrix to identify the inner nonlinear relationship between variables. These relationships guide the decomposition strategy, breaking the target formula into several simpler sub-formulas, which are then predicted independently and subsequently combined to reconstruct the target formula.
For instance, consider the mathematical formula $f(x_1, x_2, x_3, x_4)=x_1x_2+x_3\log(x_4)$. It is straightforward to verify the second derivatives between the group $\{x_1, x_2\}$ and the group $\{x_3, x_4\}$ are zero, i.e., $\partial^2f / \partial x_i\partial x_j=0, \forall i\in\{1, 2\},\forall j\in\{3,4\}$, which mathematically indicates that the target formula can be decomposed into two sub-formulas $f_1(x_1, x_2)=x_1x_2$ and $f_2(x_3, x_4)=x_3\log(x_4)$. Our divide-and-conquer strategy is based on a generalization of the underlying mathematical principle of the above example.
Below, we first introduce the inner-variable relationships, followed by the decomposition strategy for sub-formulas and the resampling technique for data points. Finally, we present the aggregation theorem for the back aggregation step.

\subsubsection{The oracle neural network and estimation of inter-variable relationships} \label{sec:estimate-inter-variable-relationships}

Given the data $D = \{(\bm{x}_i, y_i)\}_{i=1}^{N}$ with $N$ evaluations of the target formula $f(\bm{x})$ ($N = 200$ in our experiments), an oracle neural network $\tilde{f}_\theta(\bm{x})$ was first trained to approximate $f(\bm{x})$ at any input point $\bm{x}$. 
Here, $\theta$ are the parameters of the oracle neural network, which consists of $5$ hidden layers. The first $3$ layers each contained $128$ tanh neurons, while the last $2$ layers each contained $64$ tanh neurons. 
The network was trained for $400$ epochs, with an initial learning rate of $0.1$ that decayed tenfold every $100$ epochs. 
The following definition characterizes the inter-variable relationship to be estimated for decomposing the target formula.
\begin{definition}\label{def:sigma-separable}
Let $\sigma : \mathbb{R} \to \mathbb{R}$ be a uni-variate operator. Two features $i, j \in \{1, 2, \dots, n\}$ of a target formula $f: \mathbb{R}^n \to \mathbb{R}$ are said to be \emph{$\sigma$-separable} if there exist sub-formulas $f_1$ and $f_2$ such that $f$ can be expressed as:
\[
f(\bm{x}) \equiv \sigma(f_1(\bm{x}_{-i}) + f_2(\bm{x}_{-j})),
\]
where $\bm{x}_{-i}$ is the $(n-1)$-dimensional vector obtained by removing $x_i$ from $\bm{x}$, and $\bm{x}_{-j}$ is defined analogously.
\end{definition}

The uni-variate operator $\sigma$ cannot be generalized to multi-variate operator, as our method relies on the invertibility of $\sigma$.
When the operator $\sigma$ is invertible, the following lemma, whose proof is provided in Supplementary Sec.\mbox{~\appendixref{sec:proof-decomposition}}, provides an equivalent condition to check whether two features are $\sigma$-separable in a twice-differentiable formula.

\begin{lemma}\label{lem:sigma-separable}
Let the uni-variate operator $\sigma : \mathbb{R} \to \mathbb{R}$ and the target formula $f: \mathbb{R}^n \to \mathbb{R}$ be twice differentiable. 
Suppose $\sigma$ is strictly monotonic, then two features $i, j \in \{1, 2, \dots, n\}$ are $\sigma$-separable if and only if for all $\bm{x} \in \mathbb{R}^n$,
\begin{align}
\frac{\partial^2 \sigma^{-1} \circ f}{\partial x_i \partial x_j}(\bm x) =  (\sigma^{-1})''(f(\bm{x})) \cdot \frac{\partial f}{\partial x_i}(\bm{x}) \cdot \frac{\partial f}{\partial x_j}(\bm{x}) + (\sigma^{-1})'(f(\bm{x})) \cdot \frac{\partial^2 f}{\partial x_i \partial x_j}(\bm{x}) = 0.
\end{align}
\end{lemma}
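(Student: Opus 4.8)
The plan is to strip $\sigma$ away by an invertible change of variables and reduce the statement to the classical fact that a twice-differentiable function has vanishing mixed partial $\partial^2/\partial x_i\partial x_j$ precisely when it splits additively into a term free of $x_i$ plus a term free of $x_j$. Strict monotonicity of $\sigma$ produces a well-defined inverse $\sigma^{-1}$ on $\mathrm{range}(\sigma)$; since the statement already refers to $(\sigma^{-1})'$ and $(\sigma^{-1})''$, we work under the tacit regularity that $\sigma'$ never vanishes, so that $\sigma^{-1}$ is twice differentiable by the inverse function theorem. Put $g := \sigma^{-1}\circ f$, which is then twice differentiable. Applying $\sigma^{-1}$ to the defining identity $f \equiv \sigma\big(f_1(\bm x_{-i}) + f_2(\bm x_{-j})\big)$ shows at once that $f$ is $\sigma$-separable in the pair $(i,j)$ if and only if $g$ can be written as $g(\bm x) = g_1(\bm x_{-i}) + g_2(\bm x_{-j})$ with $g_1$ not involving $x_i$ and $g_2$ not involving $x_j$ (take $g_1 = f_1$, $g_2 = f_2$ in one direction; $f_1 = g_1$, $f_2 = g_2$ together with $f = \sigma(g)$ in the other).

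Next I would note that the first equality in the displayed equation is nothing but the chain and product rule applied to $g = \sigma^{-1}\circ f$ — one differentiates $\partial g/\partial x_j = (\sigma^{-1})'(f)\,\partial f/\partial x_j$ once more with respect to $x_i$ — so it only remains to prove the equivalence ``$g$ admits the additive decomposition above'' $\iff$ ``$\partial^2 g/\partial x_i\partial x_j \equiv 0$ on $\mathbb{R}^n$''. For the forward direction, write $\bm z$ for the coordinates other than $x_i$ and $x_j$; since $g_2$ does not involve $x_j$, we get $\partial g/\partial x_j = \partial g_1/\partial x_j$, a function of $(x_j,\bm z)$ alone and hence constant in $x_i$, so $\partial^2 g/\partial x_i\partial x_j \equiv 0$. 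For the converse, fix the reference values $x_i = x_j = 0$ (legitimate because the domain is all of $\mathbb{R}^n$) and integrate the vanishing mixed partial twice via the fundamental theorem of calculus to obtain
\[
g(\bm x) \;=\; g\big|_{x_i=0}(\bm x) \;+\; g\big|_{x_j=0}(\bm x) \;-\; g\big|_{x_i=x_j=0}(\bm x).
\]
Setting $g_1(\bm x_{-i}) := g|_{x_i=0}(\bm x)$ and $g_2(\bm x_{-j}) := g|_{x_j=0}(\bm x) - g|_{x_i=x_j=0}(\bm x)$ gives the desired decomposition: $g_1$ patently drops $x_i$, and $g_2$ drops $x_j$ because its last term depends only on $\bm z$; both pieces are twice differentiable as restrictions and differences of $g$. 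Composing with $\sigma$ recovers $f = \sigma\big(g_1(\bm x_{-i}) + g_2(\bm x_{-j})\big)$, i.e.\ $\sigma$-separability.

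All of the computations above are routine, so I do not anticipate a real obstacle; the points deserving care are (i) the regularity of $\sigma^{-1}$ — this is the only place strict monotonicity (together with the implicit non-vanishing of $\sigma'$) is genuinely used, and it is the hypothesis one should be careful not to under-state, since e.g.\ $\sigma(t) = t^3$ is strictly monotonic yet has a non-differentiable inverse at $0$ — and (ii) the bookkeeping of variable dependence in the reconstruction step, ensuring $g_1$ really omits $x_i$ and $g_2$ really omits $x_j$. I would also note in passing that this is exactly why $\sigma$ is required to be uni-variate: the argument rests on composing with a true inverse $\sigma^{-1}$, which has no analogue for a multi-variate outer operator.
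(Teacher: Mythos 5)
Your proof is correct and follows essentially the same route as the paper's: both directions reduce to the vanishing of $\partial^2(\sigma^{-1}\circ f)/\partial x_i\partial x_j$, with the forward implication by the chain rule and the converse by integrating the vanishing mixed partial twice (the paper uses indefinite antiderivatives where you use the explicit identity $g = g|_{x_i=0}+g|_{x_j=0}-g|_{x_i=x_j=0}$ anchored at $0$). Your explicit remarks on the regularity of $\sigma^{-1}$ (e.g.\ the $\sigma(t)=t^3$ caveat) make precise a hypothesis the paper leaves implicit via its appeal to the Inverse Function Theorem, but the substance of the argument is the same.
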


\paragraph{Practical implementation.} In our experiment, we tried the uni-variate operators $\sigma \in\{$id, sqrt, inv, $\arcsin$, $\arccos$, $\log$, $\text{sqrt}\circ\log$, $\text{inv}\circ\log$, $\arcsin\circ\log$, $\arccos\circ\log\}$ to perform the divide-and-conquer strategy and predict the target formula via the end-to-end model (Sec.~\ref{methods:PhyE2E}). Note that the operators that involve logarithm effectively separate the target formula into the multiplication of two sub-formulas according to Definition~\ref{def:sigma-separable}. The prediction based on different uni-variate operators were collected and fed into the MCTS and GP module for further refinement (Sec.~\ref{sec:MCTS-GP-refinement}).

However, even with access to the oracle neural network $\tilde{f}_\theta$, we cannot directly verify the condition in Lemma~\ref{lem:sigma-separable} because it required evaluating all the points in $\bm{x} \in \mathbb{R}^n$ and the second-order derivative of the approximate function $\tilde{f}_\theta(\bm{x})$ tends to be noisy. In our algorithm, this condition was verified approximately by sampling a subset of points and employing a majority rule to mitigate the effects of approximation noise. Specifically, for the set of training data points $\{\bm{x}_1, \bm{x}_2, \dots, \bm{x}_N\}$ ($N = 200$), two features $i$ and $j$ are determined to be $\sigma$-separable if, for a threshold parameter $\epsilon > 0$, it holds that
\begin{align}
J_{i,j}(\sigma, \tilde{f}_\theta) \defeq \mathop{\mathrm{median}}_{1\leq k \leq N} \left\{\left|\frac{\partial^2 \sigma^{-1} \circ \tilde{f}_\theta}{\partial x_i \partial x_j}(\bm x _k)\right|\right\} \leq \epsilon .
\end{align}

The choice of the threshold parameter $\epsilon$ is crucial to the accurate identification of $\sigma$-separable pairs. We first employ the \emph{fixed thresholding} strategy, treating $\epsilon=\epsilon(\sigma)$ as a constant for each uni-variate operator $\sigma$. This fixed constant is determined using a data-driven approach for each $\sigma$. More specifically, we randomly sampled 1,000 target formulas $h_k$ and trained the corresponding oracle neural networks $\tilde{h}_k(\bm x_k|\theta_k)$. For each feature pair $(i, j)$, we identified their $\sigma$-separability by Lemma~\ref{lem:sigma-separable} and calculated the corresponding $J_{i,j}(\sigma, \tilde{h}_k)$. The threshold was then chosen to maximize the number of feature pairs among the 1,000 additional target formulas that were classified correctly in terms of $\sigma$-separability. 

On the other hand, a single constant value for a fixed $\epsilon$ may not work well for all target formulas $f$, as different functions can exhibit vastly different scales of derivatives. This scale can even vary if $f$ is multiplied by a large constant factor. To address this issue, the technique of \emph{adaptive thresholding} was also proposed to automatically select the threshold based on the derivative scale, thereby improving the numerical stability of the estimation of $\sigma$-separable pairs. More specifically, let $\epsilon_0 = \min_{1\leq i<j\leq n} J_{i,j} (\sigma, \tilde{f}_\theta)$. We then define $\epsilon_1 = \alpha_{\min} \epsilon_0$ and $\epsilon_2 = \alpha_{\max} \epsilon_0$ as the minimum and maximum thresholds, respectively. In our algorithm, we set $\alpha_{\min} = 2$ and $\alpha_{\max} = 10$. For each $\epsilon \in [\alpha_{\min}, \alpha_{\max}]$, the set of $\sigma$-separable pairs (hereafter referred to as the \emph{$\sigma$-separable set}) was estimated as
\begin{align}
Q_\epsilon (\sigma, \tilde{f}_\theta) \defeq \left\{(i, j) ~\Big|~ 1 \leq i < j \leq n, J_{i,j}(\sigma, \tilde{f}_\theta) \leq \epsilon \right\} .
\end{align}
Finally, the class of $\sigma$-separable sets was defined as
\begin{align}
\mathcal{Q} \defeq \{Q_\epsilon(\sigma, \tilde{f}_\theta) ~|~ \epsilon \in [\alpha_{\min}, \alpha_{\max}] \}.
\end{align}
It is straightforward to verify that $|\mathcal{Q}| \leq n(n-1)/2$. Therefore, applying the divide-and-conquer strategy based on each $Q \in \mathcal{Q}$ is computationally feasible. Each application of this strategy derived a prediction of the target formula. These formulas were evaluated and the best one was selected. In the following subsections, we will explain how the divide-and-conquer strategy operates for any estimated $\sigma$-separable set $Q$: it begins by inducing a set of feature sets, followed by predicting the sub-formulas for these feature sets, and ultimately integrates them into the final prediction of the target formula.

\subsubsection{The division of the target formula}

We have discussed the $\sigma$-separable relationship between pairs of features. The following definition about the global division of the target formula and features is crucial to our divide-and-conquer strategy.

\begin{definition}\label{def:sigma-partition}
Let $\sigma : \mathbb{R} \to \mathbb{R}$ be a uni-variate operator. For any feature subset $A_i$ of $A = \{1, 2, \dots, n\}$, denote by $\bm x_{A_i}$ the vector obtained by restricting  $\bm{x}$ to $A_i$, i.e., if $A_i = \{j_1, j_2, \dots, j_k\}$, then $\bm{x}_{A_i} = (x_{j_1}, x_{j_2}, \dots, x_{j_k})$. A class of subsets $\{A_i\}_{i=1}^m$ is said to be a \emph{$\sigma$-division} of $f$ if none of the subsets is contained in another subset and there exist $m$ sub-formulas $f_1, f_2, \dots, f_m$ such that $f$ can be expressed as:
\begin{equation}\label{eq:def-sigma-partition}
f(\bm{x}) \equiv \sigma(f_1(\bm{x}_{A_1}) + f_2(\bm{x}_{A_2}) + ... + f_m(\bm{x}_{A_m})),
\end{equation}
where each $f_i$ is a function of $\bm x_{A_i}$.
\end{definition}

It is quite straightforward to derive a $\sigma$-division from $\sigma$-separable pairs, detailed as follows. First, the $\sigma$-separable relationships between all features are identified as described in Sec.~\ref{sec:estimate-inter-variable-relationships}. Then, the algorithm starts with an initial (and trivial) $\sigma$-division $\mathcal{A}=\{\{1, 2, ..., n\}\}$ and iteratively refines it. At each iteration, the algorithm selects a $\sigma$-separable pair of features $j_1$ and $j_2$ that has not been considered. For each feature subset $A \in \mathcal{A}$ such that $\{j_1, j_2\} \subseteq A$, $\mathcal{A}$ is updated as follows:
\[
\mathcal{A} \leftarrow (\mathcal{A} - \{A\}) \cup \{A-\{j_1\}, A-\{j_2\}\}.
\]
This process is done after all $\sigma$-separable feature pairs have been considered. By the following Lemma~\ref{lem:sigma-partition}, whose proof can be found in Supplementary Sec.~\appendixref{sec:proof-decomposition}, this iterative procedure guarantees to yield a valid $\sigma$-division.

\begin{lemma}\label{lem:sigma-partition}
Let uni-variate operator $\sigma : \mathbb{R} \to \mathbb{R}$ be strictly monotonic and $f: \mathbb{R}^n \to \mathbb{R}$ be the target formula that is twice differentiable. Suppose we have accurately obtained the set of all $\sigma$-separable feature pairs of $f$. Then, for each iteration number $\ell \geq 1$, the division obtained by the above algorithm after the $\ell$-th iteration, denoted by $\{A_k^\ell\}_{k=1}^{m_\ell}$, is a $\sigma$-division of $f$.
\end{lemma}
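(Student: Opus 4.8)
The plan is to prove the statement by induction on the iteration number $\ell$, but to carry a \emph{strengthened} invariant that controls not merely the family $\mathcal{A}$ but also the individual sub-formulas witnessing the $\sigma$-division. Throughout, set $F \defeq \sigma^{-1}\circ f$, which is twice differentiable under the regularity hypotheses of Lemma~\ref{lem:sigma-separable} (strict monotonicity of $\sigma$ makes $\sigma^{-1}$ well defined on the range of $f$). By Lemma~\ref{lem:sigma-separable} applied with this $\sigma$, a pair $(a,b)$ is $\sigma$-separable for $f$ \emph{if and only if} $\partial^2 F/\partial x_a\partial x_b\equiv 0$ on $\mathbb{R}^n$; since the algorithm only ever selects pairs from the (assumed exactly known) set of $\sigma$-separable pairs, every processed pair $(j_1,j_2)$ satisfies this identity.

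The invariant I would maintain after each iteration is: the current family $\mathcal{A}=\{A_k\}_k$ admits twice-differentiable sub-formulas $\{f_k\}_k$ with $F=\sum_k f_k(\bm{x}_{A_k})$ such that \emph{for every $\sigma$-separable pair $(a,b)$ and every $k$ with $\{a,b\}\subseteq A_k$ one has $\partial^2 f_k/\partial x_a\partial x_b\equiv 0$.} This invariant implies that $\{A_k\}_k$ is a $\sigma$-division of $f$ in the sense of Definition~\ref{def:sigma-partition}, after discarding any $A_k$ strictly contained in some $A_{k'}$ and absorbing $f_k$ into $f_{k'}$ — a folding step that leaves both the sum and the invariant intact and is the only place where the ``no subset contained in another'' clause of Definition~\ref{def:sigma-partition} needs attention. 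For the base case $\mathcal{A}=\{\{1,\dots,n\}\}$ we take $f_1=F$, and the invariant is then exactly the ``only if'' characterization recalled above.

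For the inductive step, suppose the invariant holds and the algorithm processes the $\sigma$-separable pair $(j_1,j_2)$. Each $A_k$ not containing both $j_1$ and $j_2$ is untouched and inherits the invariant verbatim. For each $A_k\supseteq\{j_1,j_2\}$, the invariant gives $\partial^2 f_k/\partial x_{j_1}\partial x_{j_2}\equiv 0$, so by the standard integration argument $f_k$ decomposes as $f_k=g_k(\bm{x}_{A_k\setminus\{j_1\}})+h_k(\bm{x}_{A_k\setminus\{j_2\}})$ with $g_k\defeq f_k|_{x_{j_1}=0}$ and $h_k\defeq f_k-g_k$, both twice differentiable; replacing $A_k$ by $A_k\setminus\{j_1\}$ and $A_k\setminus\{j_2\}$ and $f_k$ by $g_k$ and $h_k$ preserves $\sum_k f_k(\bm{x}_{A_k})=F$. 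Re-establishing the invariant for $g_k$ and $h_k$ is a short computation: for a $\sigma$-separable $(a,b)$ with $\{a,b\}\subseteq A_k\setminus\{j_1\}$ we have $j_1\notin\{a,b\}$, so $\partial^2 g_k/\partial x_a\partial x_b=(\partial^2 f_k/\partial x_a\partial x_b)|_{x_{j_1}=0}\equiv 0$ by the old invariant; and for $\{a,b\}\subseteq A_k\setminus\{j_2\}$ one writes $\partial^2 h_k/\partial x_a\partial x_b=\partial^2 f_k/\partial x_a\partial x_b-\partial^2 g_k/\partial x_a\partial x_b\equiv0$, using that $g_k$ is constant in $x_{j_1}$ (so the second term vanishes if $j_1\in\{a,b\}$) and is a restriction of the first term otherwise. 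This closes the induction, and the lemma is the invariant read off after the $\ell$-th iteration.

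The main obstacle is arriving at the right inductive hypothesis. The naive invariant ``$\mathcal{A}$ is a $\sigma$-division'' is not self-sustaining: if $F=\sum_k f_k$ and $\partial^2 F/\partial x_{j_1}\partial x_{j_2}\equiv 0$, the mixed partial of an \emph{individual} $f_k$ in $x_{j_1},x_{j_2}$ may fail to vanish because the contributions of distinct $k$ cancel, so one cannot split the pieces one at a time — one must track a decomposition whose pieces are already maximally separated, which is exactly what the strengthened invariant encodes. Once it is in place, the integration lemma, the commutation of restriction with second derivatives, and the antichain folding are all routine. As a fallback if this bookkeeping became awkward, I would instead pass to the Hoeffding/ANOVA decomposition $F=\sum_B F_B$ about the origin and use two facts — $\partial^2 F/\partial x_a\partial x_b\equiv 0 \iff F_B\equiv 0$ for all $B\supseteq\{a,b\}$, and $F_B\not\equiv 0$ forces $B\subseteq A_k$ for some $k$ — to conclude that every surviving $B$ fits inside a member of the updated family, reassembling the decomposition directly.
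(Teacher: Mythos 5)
Your proof is correct, but it runs on a different engine than the paper's. The paper keeps the weak invariant ``the current family is a $\sigma$-division'' and, at each iteration, re-invokes the \emph{global} $\sigma$-separability of the newly processed pair $(i,j)$ — writing $f=\sigma(f_1(\bm{x}_{-i})+f_2(\bm{x}_{-j}))$ and combining three substitutions ($x_i=\alpha$, $x_j=\beta$, and both) in an inclusion--exclusion identity — to replace each piece $f_k^{(\ell)}$ by $f_k^{(\ell)}|_{x_i=\alpha}+\bigl(f_k^{(\ell)}|_{x_j=\beta}-f_k^{(\ell)}|_{x_i=\alpha,x_j=\beta}\bigr)$; the sum of the modified pieces equals $\sigma^{-1}\circ f$ even though no individual piece is claimed to split exactly, so the cancellation phenomenon you worry about is sidestepped rather than excluded. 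You instead strengthen the induction hypothesis so that every witnessing sub-formula has identically vanishing mixed partials for \emph{all} $\sigma$-separable pairs it contains, which lets you split each $f_k$ exactly as $g_k+h_k$ with $g_k=f_k|_{x_{j_1}=0}$, and your re-verification of the invariant for $g_k$, $h_k$, and the folding step all check out. Your diagnosis that the naive invariant is not self-sustaining is a fair motivation for the strengthening, though it is worth noting the paper shows one can also repair it without strengthening, at the cost of redistributing terms across pieces globally. What your route buys is a slightly stronger conclusion (a canonical choice of sub-formulas that are ``maximally separated,'' potentially reusable, e.g., for the aggregation theorem) and a purely local, mechanical inductive step; what the paper's route buys is a shorter argument that never needs to track properties of the individual pieces beyond membership of variables. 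Both share the same minor regularity caveat about $\sigma^{-1}$ being twice differentiable on the range of $f$.
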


Algorithm~\appendixref{algorism1} is provided in Supplementary Sec.~\appendixref{sec:algorism} to formally describe the above procedure to derive $\sigma$-separable feature pairs via the fixed and adaptive thresholding techniques, as well as to construct a $\sigma$-division based on the $\sigma$-separable pairs.

\subsubsection{Evaluation of surrogate sub-formulas and back aggregation for the target formula}
\label{sec:aggregation}

Once a $\sigma$-division $\{A_i\}_{i=1}^{m}$ is derived, it would be most natural to recursively perform symbolic regression to predict the sub-formulas $\{f_i\}$ and reconstruct the target formula $f$ according to Eq.~\eqref{eq:def-sigma-partition}. However, there are two challenges to this approach. First, we do not have evaluation data ($\{(\bm{x}, y = f_i(\bm{x})\}$) for each sub-formula $f_i$. Moreover, the sub-formula $f_i$ are not even unique. For example, if $f(x_1, x_2, x_3) = \sigma(f_1(x_1, x_2) + f_2(x_1, x_3))$, then it can also be expressed as $f(x_1, x_2, x_3) = \sigma((f_1(x_1, x_2) - x_1) + (f_2(x_1, x_3) + x_1))$, where $f_1' = f_1 - x_1$ and $f_2' = f_2 + x_1$ are also a set of valid sub-formulas for the $\sigma$-division $\{\{x_1, x_2\}, \{x_1, x_3\}\}$.

To address the above challenges, we turn to predict the \emph{surrogate sub-formulas} $\{g_i\}_{i=1}^m$, defined as follows. First, an arbitrary $\bm{z} \in \mathbb{R}^n$ is chosen, where in the experiment, $\bm{z}$ was sampled from the standard multivariate Gaussian distribution. Then, given the $\sigma$-division $\{A_i\}_{i=1}^{m}$, for each $i \in \{1, 2, \dots, m\}$, we define $g_i : \mathbb{R}^{A_i} \to \mathbb{R}$ as 
\begin{equation} \label{eq:def-surrogate-sub-formula}
g_i(\bm{x}_{A_i}) = f(\bm{x}_{A_i}, \bm{x}_{\overline{A_i}} = \bm{z}_{\overline{A_i}}),
\end{equation}
where $\overline{A_i} = [n] - A_i$.

Each surrogate sub-formula $g_i$ might be quite different from the corresponding sub-formula $f_i$. 
For instance, consider $f(x_1, x_2, x_3) = f_1(x_1, x_3) + f_2(x_2, x_3)$ where $f_1(x_1, x_3) =  \sin(x_1x_3)$ and $f_2(x_2, x_3) = \cos(x_2x_3)$. According to the definition, we have  $g_1(x_1, x_3)=\sin(x_1x_3)+\cos(c_2x_3)$, and $g_2(x_2, x_3)=\cos(x_2x_3)+\sin(c_1x_3)$. Note that each $g_i$ not only contains $f_i$, but also introduces extra non-trivial terms. The following theorem, the proof of which can be found in Supplementary Sec.~\appendixref{sec:proof-aggregation}, provides a way to eliminate the extra terms and aggregate the surrogate sub-formulas $\{g_i\}_{i=1}^{m}$ to reconstruct the target formula $f$.

\begin{theorem}\label{thm:aggregation}
Suppose the uni-variate operator $\sigma$ : $\mathbb{R} \to \mathbb{R}$ is strictly monotonic. Let  $\mathcal{A} = \{A_i\}_{i=1}^{m}$ be a $\sigma$-division of the target formula $f$ and $\{g_i\}_{i=1}^m$ be defined as in Eq.~\eqref{eq:def-surrogate-sub-formula} based on a vector $\bm{z}$. For each $I \subseteq [m]$,  denote
\[
\mathcal{A}_I = \cap_{i \in I} A_i.
\]
Then, it holds that
\begin{equation}\label{eq:thm-aggregation}
f(\bm{x}) = \sigma\left(\sum_{\emptyset \neq I \subseteq [m]} \frac{(-1)^{|I| - 1}}{|I|}\sum_{i \in I} \sigma^{-1} \circ g_i(\bm{x}_{\mathcal{A}_{I}}, \bm{x}_{A_i - \mathcal{A}_{I}}=\bm{z}_{A_i - \mathcal{A}_{I}})\right).
\end{equation}
\end{theorem}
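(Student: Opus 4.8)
The plan is to exploit the invertibility of $\sigma$ to reduce the claim to a purely additive inclusion–exclusion identity. Since $\sigma$ is strictly monotonic it is invertible, so set $h \defeq \sigma^{-1}\circ f$; by the definition of a $\sigma$-division, $h(\bm{x}) = \sum_{k=1}^{m} f_k(\bm{x}_{A_k})$, i.e.\ $h$ splits additively over the cover $\{A_k\}$. For a subset $S\subseteq[n]$ write $\rho_S(\bm{x})$ for the point that agrees with $\bm{x}$ on the coordinates in $S$ and with the fixed vector $\bm{z}$ on all other coordinates; thus $g_i(\bm{x}_{A_i}) = f(\rho_{A_i}(\bm{x}))$ by Eq.~\eqref{eq:def-surrogate-sub-formula}. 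Applying $\sigma$ to both sides of \eqref{eq:thm-aggregation}, it suffices to prove the additive identity
\[
h(\bm{x}) \;=\; \sum_{\emptyset\neq I\subseteq[m]} (-1)^{|I|-1}\, h\bigl(\rho_{\mathcal{A}_I}(\bm{x})\bigr),
\]
together with the reduction showing that the inner average over $i\in I$ in \eqref{eq:thm-aggregation} collapses to exactly $h(\rho_{\mathcal{A}_I}(\bm{x}))$.

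First I would dispatch the collapse of the inner sum. Fix $I$ and $i\in I$; since $\mathcal{A}_I\subseteq A_i$, feeding $(\bm{x}_{\mathcal{A}_I},\,\bm{x}_{A_i-\mathcal{A}_I}=\bm{z}_{A_i-\mathcal{A}_I})$ into $g_i$ substitutes $\bm{z}$ into every coordinate outside $A_i$ (by the definition of $g_i$) and also into the coordinates of $A_i\setminus\mathcal{A}_I$, leaving exactly the coordinates of $\mathcal{A}_I$ free. Hence $g_i$ evaluated there equals $f(\rho_{\mathcal{A}_I}(\bm{x}))$, which does \emph{not} depend on the choice of $i\in I$; composing with $\sigma^{-1}$ gives $h(\rho_{\mathcal{A}_I}(\bm{x}))$. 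Therefore $\frac{1}{|I|}\sum_{i\in I}\sigma^{-1}\circ g_i(\cdots) = h(\rho_{\mathcal{A}_I}(\bm{x}))$, and the right-hand side of \eqref{eq:thm-aggregation} becomes $\sigma\bigl(\sum_{\emptyset\neq I}(-1)^{|I|-1}h(\rho_{\mathcal{A}_I}(\bm{x}))\bigr)$, reducing the theorem to the displayed additive identity.

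For the additive identity I would substitute $h=\sum_{k=1}^{m}f_k$ and exchange the order of summation, so that it is enough to show, for each fixed $k$, that $\sum_{\emptyset\neq I\subseteq[m]}(-1)^{|I|-1} f_k(\rho_{\mathcal{A}_I}(\bm{x})) = f_k(\bm{x}_{A_k})$. The key simplification is that $f_k$ depends only on the coordinates in $A_k$, and $A_k\cap\mathcal{A}_I = A_k\cap\bigcap_{i\in I}A_i = \mathcal{A}_{I\cup\{k\}}$, so $f_k(\rho_{\mathcal{A}_I}(\bm{x}))$ depends on $I$ only through $J\defeq I\cup\{k\}$, which ranges over subsets of $[m]$ containing $k$. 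Grouping the sum by $J$: when $|J|\ge 2$ there are exactly two nonempty sets $I$ with $I\cup\{k\}=J$, namely $I=J$ and $I=J\setminus\{k\}$, contributing the signs $(-1)^{|J|-1}$ and $(-1)^{|J|-2}$, which sum to zero; the only surviving term is $J=\{k\}$, i.e.\ $I=\{k\}$, giving $(-1)^{0}f_k(\rho_{A_k}(\bm{x})) = f_k(\bm{x}_{A_k})$ since $\mathcal{A}_{\{k\}}=A_k$. Summing over $k$ recovers $h(\bm{x})$, and applying $\sigma$ yields $f(\bm{x})$.

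The main obstacle is keeping the restriction/substitution bookkeeping correct — in particular the set identity $A_k\cap\mathcal{A}_I=\mathcal{A}_{I\cup\{k\}}$ and the fact that $\sigma^{-1}\circ g_i$ evaluated at the prescribed mixed point is genuinely independent of $i\in I$, which is exactly what makes the $1/|I|$ normalization in \eqref{eq:thm-aggregation} the correct one; once those are in place the telescoping sign cancellation is routine. It is worth noting that neither differentiability of $f$ nor the ``no $A_i$ contained in another $A_j$'' clause of Definition~\ref{def:sigma-partition} is needed for this identity — only strict monotonicity of $\sigma$ (for invertibility) and the additive form of $h$ guaranteed by the $\sigma$-division are used — and the formula remains valid even when $\bigcup_i A_i \subsetneq [n]$, since $f$ then ignores the omitted coordinates on both sides.
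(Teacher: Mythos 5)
Your proof is correct and is essentially the paper's own argument, just run in the opposite direction: your collapse of the inner average (each $\sigma^{-1}\circ g_i$ at the mixed point equals $\sigma^{-1}\circ f$ at the point agreeing with $\bm{x}$ on $\mathcal{A}_I$ and with $\bm{z}$ elsewhere) is the paper's Lemma~\ref{lem:lem-variance}, and your per-$k$ telescoping cancellation using $A_k\cap\mathcal{A}_I=\mathcal{A}_{I\cup\{k\}}$ is exactly the content of the paper's Lemma~\ref{lem:const-evaluation}. No gaps; the bookkeeping (including the observation that differentiability and the non-containment clause are not needed here) is accurate.
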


\paragraph{Practical implementation.} For the $i$-th surrogate sub-formula $g_i$, the data $D^{(i)} = \{\bm{x}_k^{(i)}, y_k^{(i)}\}_{k=1}^{N}$ were constructed as follows:
\[
\bm x_k^{(i)} = (\bm x_k)_{A_i}, \qquad\qquad 
y_k^{(i)} = \tilde{f}_\theta(\bm x_{A_i} = \bm x_k^{(i)}, \bm x_{\overline{A_i}} = \bm c_{\overline{A_i}}),
\]
where $\bm{x}_k$ is the $k$-th data point in the original data $D$. Then, each $g_i$ where predicted by the end-to-end model (as will be described in Sec.~\ref{methods:PhyE2E}). Finally, the target formula $f$ was predicted by aggregating the surrogate sub-formulas according to Eq.~\eqref{eq:thm-aggregation}.

\subsection{The end-to-end model}\label{methods:PhyE2E}

\paragraph{Architecture.} Kamienny et al.\supercite{kamienny2022end} established a transformer-based end-to-end model for symbolic regression. 
We design a similar transformer that includes 4 encoder layers and 16 decoder layers, forming an asymmetric architecture\supercite{charton2021linear}. 
Each layer consists of 16 attention heads and an embedding dimension of 512. 
We remove positional embeddings to ensure permutation invariance of the input data.  

Physical priors (e.g. physical units) play a crucial role in governing the structure and plausibility of physics formulas\supercite{tenachi2023deep, bendinelli2023controllable}. In PhyE2E, we integrate four types of physical priors, including the physical units of input and output variables, formula complexity, candidate operators and candidate constants. Besides the evaluations at $N$ data points, we also append a set of $h$ candidate physical priors to the input of our model. 

To encode the observed data points, symbolic formulas, and physical priors, we construct a vocabulary that includes tokens for float numbers, operators, variables, and physical units. Each float number is decomposed into three tokens representing its sign, mantissa (between 0 and 9999), and exponent (from E-100 to E+100)\supercite{charton2021linear}. For physical units, we adopt five commonly used base units: Meter (m), Second (s), Kilogram (kg), Kelvin (K), and Volt (V)\supercite{udrescu2020ai}, which are slightly different from the International System of Units (SI). In this way, a physical unit can be encoded into five tokens. For dimensionless constants and formulas, we simply assign each base unit with value 0. Each data point is encoded into $(n+1)\times 3$ tokens, where each of the $n$ features and the evaluation is encoded into $3$ tokens. Each of the $h$ physical priors is represented using specialized vocabulary tokens that encode the variables, physical units, operators, and float numbers. Specifically, we use $1+5$ tokens to represent a variable with its physical unit, and $3+5$ tokens to represent a constant with its physical unit. All data point and physical prior sequences are padded with the \texttt{<pad>} token to ensure a uniform fixed length $L=33$. They are finally concatenated together to form the input of our model.

Additionally, to help the model better understand a consistent physical unit system, we propose a novel unit decoding strategy for the model’s output. Specifically, we incorporate the physical units of each operator and variable within the formula into the output sequence. 
While formulas are represented as prefix expressions composed of operators and variables, we enhance this representation by associating each operator and variable with its corresponding physical unit, so that our model outputs a sequence of \texttt{(operator/variable, physical unit)} pairs. Notably, we do not impose explicit unit consistency constraints during generation. Instead, by explicitly inferring the physical units of the formula in a step-by-step manner within the output sequence, the model is guided to learn the underlying physical principles that govern variable relationships automatically. This approach enables consistent and meaningful unit inference directly from data.

\paragraph{Training details.}
The training data consisted of the data evaluated by 200,000 synthetic physics formulas generated in Sec.~\ref{sec:methods-generative-model}. Each formula was evaluated at $N = 200$ points sampled from the standard multivariate Gaussian distribution, conditioned on the fact that the formula is properly defined at the point, which follows the previous work\supercite{kamienny2022end}.

We used the cross-entropy loss on the next-token prediction with the Adam optimizer, starting with a learning rate of $10^{-7}$, gradually increased to $2\times10^{-4}$ during the first 10,000 steps, and subsequently decayed proportional to the inverse square root of the step count\supercite{vaswani2017attention}.
A validation set comprising 20,000 synthetic formulas is held out, and our model is trained for 100 epochs, processing 500 million formulas in total, until the validation accuracy stabilizes. 
Each batch consists of 10,000 tokens, with formulas grouped by similar lengths to minimize padding overhead.
The training was performed on a single NVIDIA A100 GPU with 80GB of memory over about one day. 

\paragraph{Constant optimization.}

The end-to-end model only returned a ``formula skeleton'' where the constant parameters in the formula remained unoptimized. We adopted the BFGS algorithm\supercite{fletcher1987} to further refine the constants in the formula skeleton. Specifically, given a formula $f$, we denote by $\bm{c}$ the constant parameters in the formula. Based on the data  $\{(\bm x_k, y_k)\}_{k=1}^{N}$, the BFGS algorithm was invoked to find
\[
\arg\min_{\bm c}\sum_{k=1}^{N} [f(\bm x_k; \bm c) - y_k]^2 .
\]

\subsection{MCTS and GP refinement}
\label{sec:MCTS-GP-refinement}

\paragraph{MCTS.}

The standard Monte Carlo Tree Search (MCTS) process consists of four steps:\supercite{sun2022symbolic} 1) selection, where the best candidate formulas are identified based on their performance; 2) expansion, where new symbolic formulas (which may be incomplete) are generated by adding one more operator from the incomplete formula; 3) simulation, where the incomplete formulas are randomly simulated and then evaluated based on their predictive accuracy; and 4) back-propagation, where the results are propagated back to update rewards and visit times of each MCTS node. We employed MCTS to refine the target formula obtained by the end-to-end model. Specifically, random sub-trees were removed from the expression tree of the target formula, and MCTS was invoked with the resulting incomplete formula to search for a better target formula. 
Moreover, following the work of Sun et al.\supercite{sun2022symbolic}, we constructed a grammar pool to reduce the search space of MCTS. 
Specifically, this grammar pool consisted of basic operators ($+$, $-$, $\times$, $/$, $\sin(x)$, $\cos(x)$, $\exp(x)$) and the operators extracted from the predicted results of the end-to-end model.
During the expansion steps, MCTS was restricted to select operators only from the grammar pool to construct a complete symbolic formula.

\paragraph{Genetic programming.}

Our genetic programming (GP) refinement module followed the work of PySR\supercite{cranmer2023interpretable}, where we used the results from the end-to-end model and MCTS as the initial populations for the GP algorithm. 
During each round of evolution, each formula in the population undergoes random mutations, including appending, changing, or deleting specific operators in the formula. Crossover operations were also performed between individuals to combine their expressions and create new formulas. 
The entire GP algorithm optimizes the formulas for 40 rounds. The optimization process stops early if the MSE of one of the generated formulas achieves $10^{-6}$ during the process.
Finally, a Pareto front of the formulas was generated. The best formula from this set was selected based on the criterion of $\text{MSE}\times0.99^{\text{complexity}}$ to serve as the final solution, considering a trade-off between both accuracy and complexity.

\subsection{Details of test data}
\label{methods:dataDividing}
To ensure that no data leakage occurred, we strictly split the training and test datasets. While it is not straightforward to identify mathematically equivalent formulas via symbolic derivations, we chose to measure the similarity between two formulas based on numerical evaluations. Specifically, we define the similarity between two formulas, $f_i$ and $f_j$, with the same number of features, using the averaged $R^2$ score:
\[
\mathrm{sim}(f_i, f_j)\defeq 1 - \frac12 \left[\frac{\sum_{k=1}^{N'}(f_i(\bm{x}_k)-f_j(\bm{x}_k))^2}{\sum_{k=1}^{N'}(f_i(\bm{x}_k)-\hat{f_i}))^2} + \frac{\sum_{k=1}^{N'}(f_i(\bm{x}_k)-f_j(\bm{x}_k))^2}{\sum_{k=1}^{N'}(f_i(\bm{x}_k)-\hat{f_j}))^2}\right],
\]
where $\hat{f_i} \defeq [\sum_{k=1}^{N'}f_i(\bm{x}_k)]/N'$, $N'=40$, and the $\bm{x}_k$'s are independently sampled from the standard multivariate Gaussian distribution. The similarity is set to $0$ if there exists an $\bm{x}_k$ such that exactly one of $f_i(\bm{x}_k)$ and $f_j(\bm{x}_k)$ is undefined. The similarity between two formulas with a different number of features is also set to $0$. Observe that two mathematically equivalent formulas achieve the maximum possible similarity $1$.

We selected 3,000 formulas such that the pairwise similarity is less than $0.99$ to form our test set. For the training dataset, we only removed the symbolically identical formulas while not requiring the pairwise similarity to be away from $1$.
This is because formulas that are mathematically equivalent but with different symbolic forms may help the model understand the mathematical equivalences. 
Moreover, such formulas might originate from different physical scenarios and have different physical units, thus still represent distinct entities.
The formulas in the test set were further divided into three difficulty levels based on their maximum similarity with the formulas in the training dataset. 
Formulas with a similarity greater than $0.95$ were defined as the ones that were easy to predict, those with a similarity between $0.8$ and $0.95$ were considered as medium difficulty, and formulas with a similarity less than $0.8$ were considered as hard.

\subsection{Evaluation metrics}
\label{methods:metrics}

We evaluate the performance of our model and other baseline models using the following metrics.

\begin{itemize}
\item \textit{Symbolic accuracy.} This evaluation metric was introduced by Cava et al.\supercite{la2021contemporary} Let the target formula be $f$. The symbolic accuracy of a predicted formula $g$ is $1$ if there exists a constant such that $f - g \equiv c$ or $f = c \cdot g$ ($c \neq 0$ in the second case); otherwise, the symbolic accuracy is $0$.

\item \textit{Numerical precision.} 
This metric is used to evaluate the numerical precision of data points given by predicted formula. We utilize $R^2$-score for the predicted formula $g$ on testing data points $\{\bm x^{\mathrm{test}}_k, y_{k}^{\mathrm{test}}\}_{k=1}^{N}$ ($N = 200$) which are sampled from the same distribution as the training data points:
$$
R^2=1-\frac{\sum_{i=1}^{N}(y^{\mathrm{test}}_i - g(\bm x_i^{\mathrm{test}}))^2}{\sum_{i=1}^{N}(y^{\mathrm{test}}_i - \bar{y})^2}
$$
where $\bar{y}\defeq\frac{1}{N}\sum_{i=1}^{N}y_i^{\mathrm{test}}$.

\item \textit{Accuracy of physical units.} This metric aims to assess the capability of a model to generate formulas with consistent physical units. The accuracy is $1$ if all operations in the formula work with compatible physical units and the physical unit of the result is the same as the target formula; otherwise, the accuracy of physical units is $0$.

\item \textit{Formula complexity.} The complexity of a formula $f$ is defined to be the number of operators, variables and constant parameters in the formula. We also define the relative complexity of a prediction $g$ is the difference (in absolute value) of the complexity of $g$ and that of the target formula $f$.

\item \textit{Formula depth.} The depth of a formula $f$ is the maximum depth of its expression tree.
\end{itemize}

\section{Data availability}
AI Feynmen data can be downloaded from Feynman Symbolic Regression Database (\url{https://space.mit.edu/home/tegmark/aifeynman.html}). 
The formula generated by the LLM, including the training and test datasets can be downloaded\supercite{ying2025phye2e} from \url{https://figshare.com/articles/dataset/PhyE2E_datas/29615831/1}.
The sunspot number data could be downloaded from the Sunspot Index and Long-term Solar Observations website (\url{https://www.sidc.be/SILSO/datafiles}).
The plasma pressure data could be downloaded from Geotail and Time History of Events and Macroscale Interactions During Substorms (THEMIS) website (\url{https://themis.igpp.ucla.edu/overview_data.shtml}).
The solar rotational angular velocity data could be found at table presented by Snodgrass et al. 1983 in Magnetic Rotation of the Solar Photosphere\supercite{snodgrass1983magnetic}.
We collected the contribution function of emission lines data from the CHIANTI website (\url{http://www.chiantidatabase.org}).
Lunar tide signal data was downloaded from RBSP/EFW official website (\url{https://www.space.umn.edu/rbspefw-data/}).

\section{Code availability}
Codes for running \ourModel including both training and test modules are accessible at \url{https://github.com/Jie0618/PhysicsRegression} with a permanent version available\supercite{jie_ying_2025_16305086} via Zenodo at \url{https://doi.org/10.5281/zenodo.16305086}.
The pre-trained \ourModel can be downloaded at \url{https://figshare.com/articles/dataset/PhyE2E_datas/29615831/1}.

\bibliography{ref}

\begin{figure}[!htb]
\centering
\includegraphics[width=0.9\textwidth]{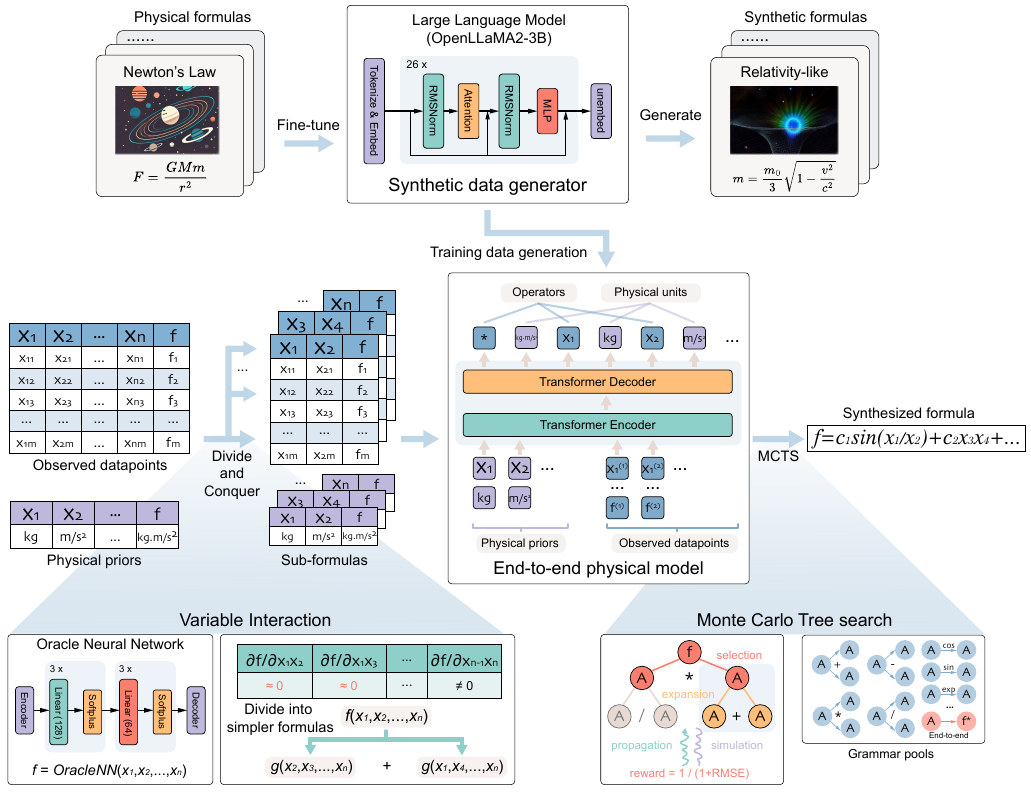}
\caption{\textbf{The overall \ourModel framework.} \textbf{Top.} The training dataset was augmented with a large-scale synthetic dataset generated by a large language model. \textbf{Middle.} A variable interaction technique was integrated to decompose the original symbolic regression problem into simpler sub-problems, referred to as Divide-and-Conquer(D\&C). An end-to-end model was trained to predict the target formula using observed data points and prior physical knowledge (referred to as ``physical priors''). \textbf{Bottom.} Monte Carlo Tree Search (MCTS) module was adopted to refine the generated formulas, using a context-free grammar pool that includes atomic formulas and the end-to-end generated formula.}
\label{fig1}
\end{figure}

\begin{figure}[!htb]
\centering
\includegraphics[width=1.0\textwidth]{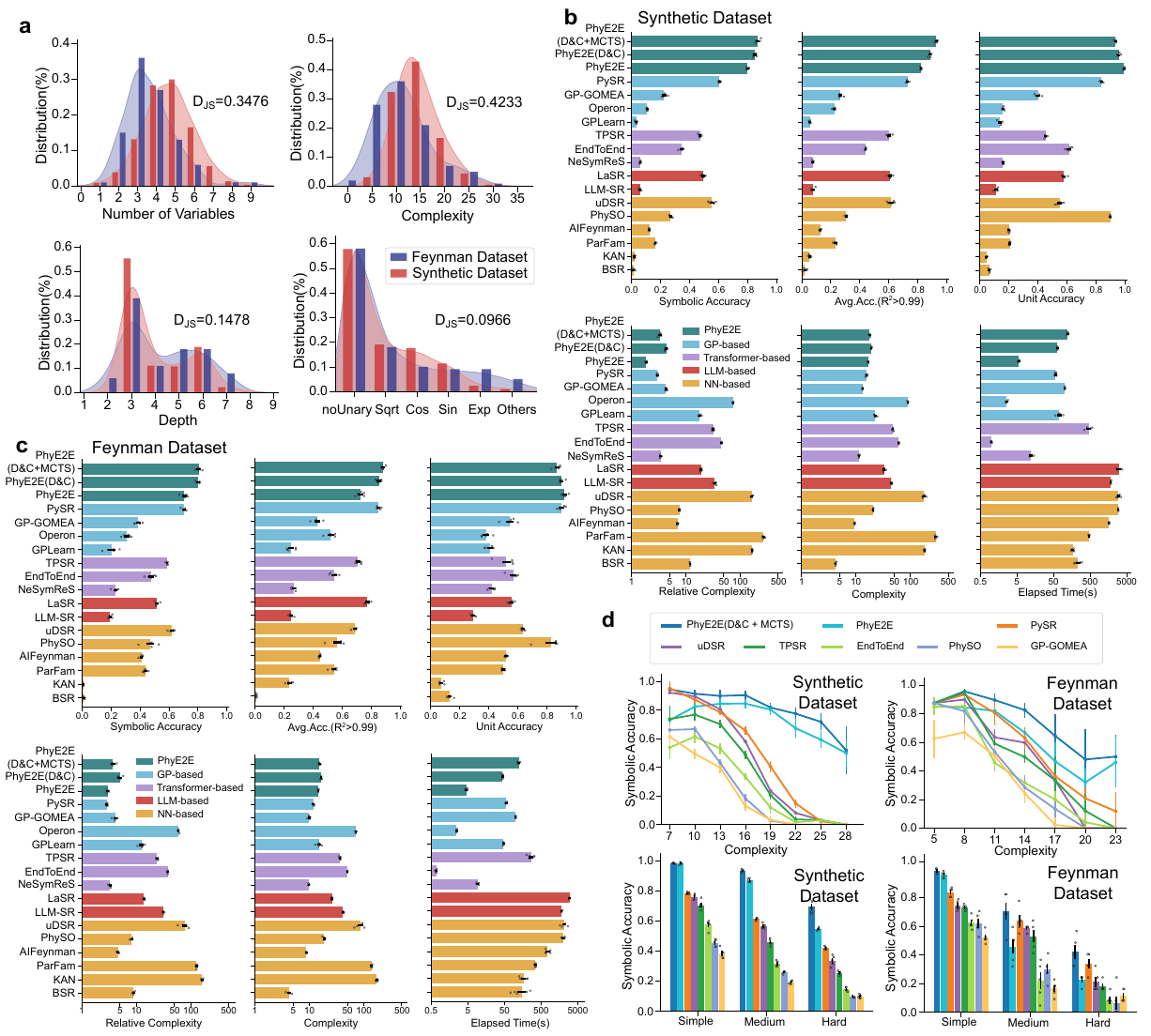}
\caption{\textbf{Performance on the synthetic and AI Feynman datasets.} 
\textbf{a,} Comparison between the formulas generated from LLaMa2 and the Feynman Dataset. The distance between the distributions of different properties of the two sets of formulas is measured using the Jensen-Shannon divergence (D$_{\text{JS}}$).
\textbf{b,c,} Evaluation results for Symbolic Regression methods on the test set of the synthetic dataset and AI Feynman dataset, respectively. Data are presented as mean values ± SEM (n=5 individual trials for each baselines).
\textbf{d,} Evaluation results on formulas with different complexity (upper panels) and different difficulties (bottom panels) on the synthetic and AI Feynman datasets. The bar plots represent mean values ± SEM (n=5 individual trials for each baselines).}\label{fig2}
\end{figure}

\begin{figure}[!htb]
\centering
\includegraphics[width=0.91\textwidth]{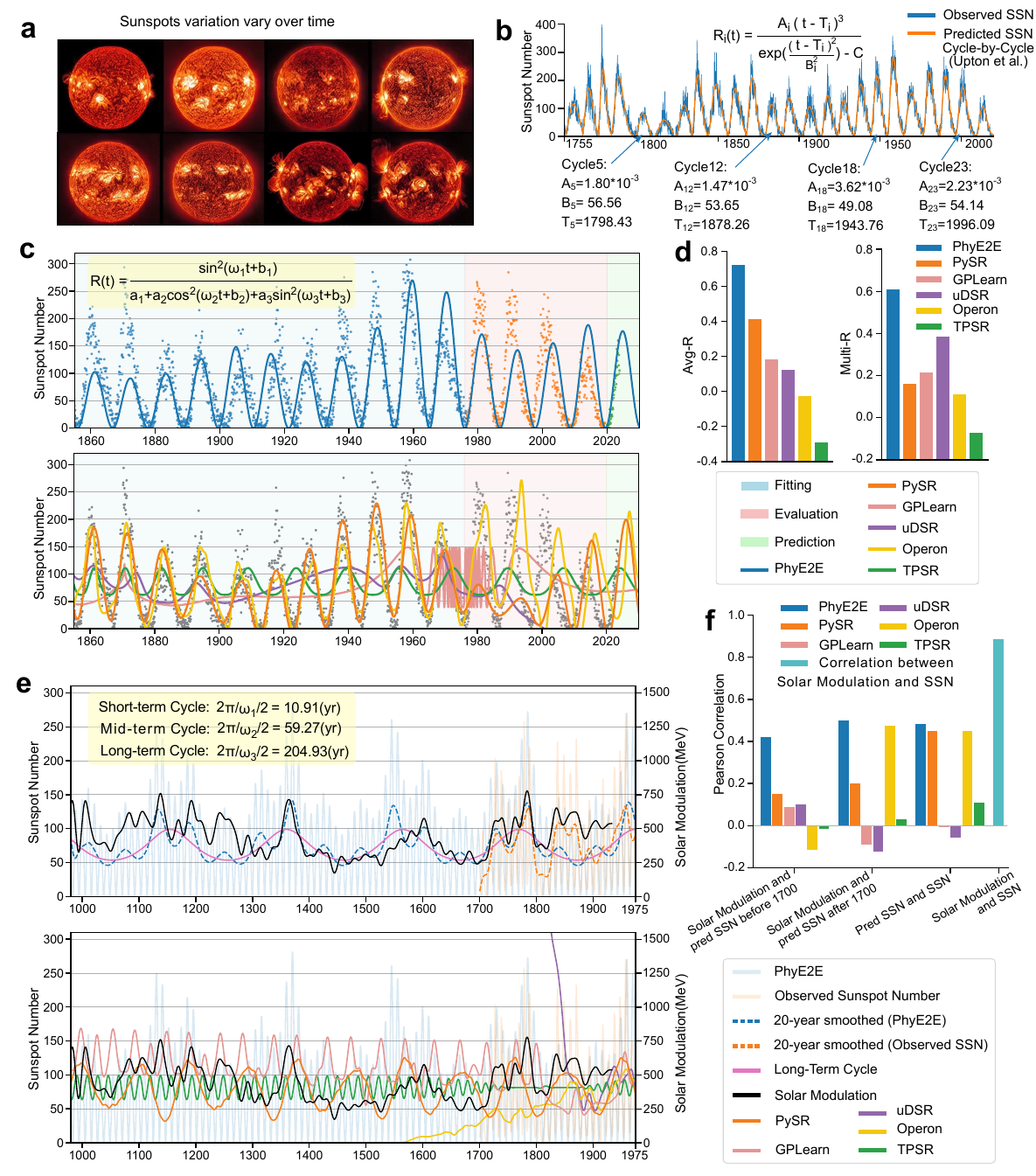}
\caption{\textbf{Performance of sunspot intensity prediction}
\textbf{a,} Sunspot variation over time.
\textbf{b,} Variations in SSN observed through telescopes from 1755 to 2020 and the formula derived by Hathaway et al., 1994.
\textbf{c,} The \ourModel formula and the variations in SSN yielded by the formula from 1855 to 1976 (top). The formulas generated by other baseline models and the variations in SNN yielded by these formulas from 1855 to 1976 (bottom).
\textbf{d,} Avg-R (left) and Multi-R (right) on the test data from 1976 to 2019 for different baseline models.
\textbf{e,} Solar modulation level and smoothed SSN from different baseline models over a longer time frame from 980 to 1976.
\textbf{f,} Pearson Correlation between the SSN observed by telescopes, SSN predicted by the generated formulas, and Solar Modulation level from 980 to 1932.}
\label{fig:SSN}
\end{figure}

\begin{figure}[!htb]
\centering
\includegraphics[width=1.0\textwidth]{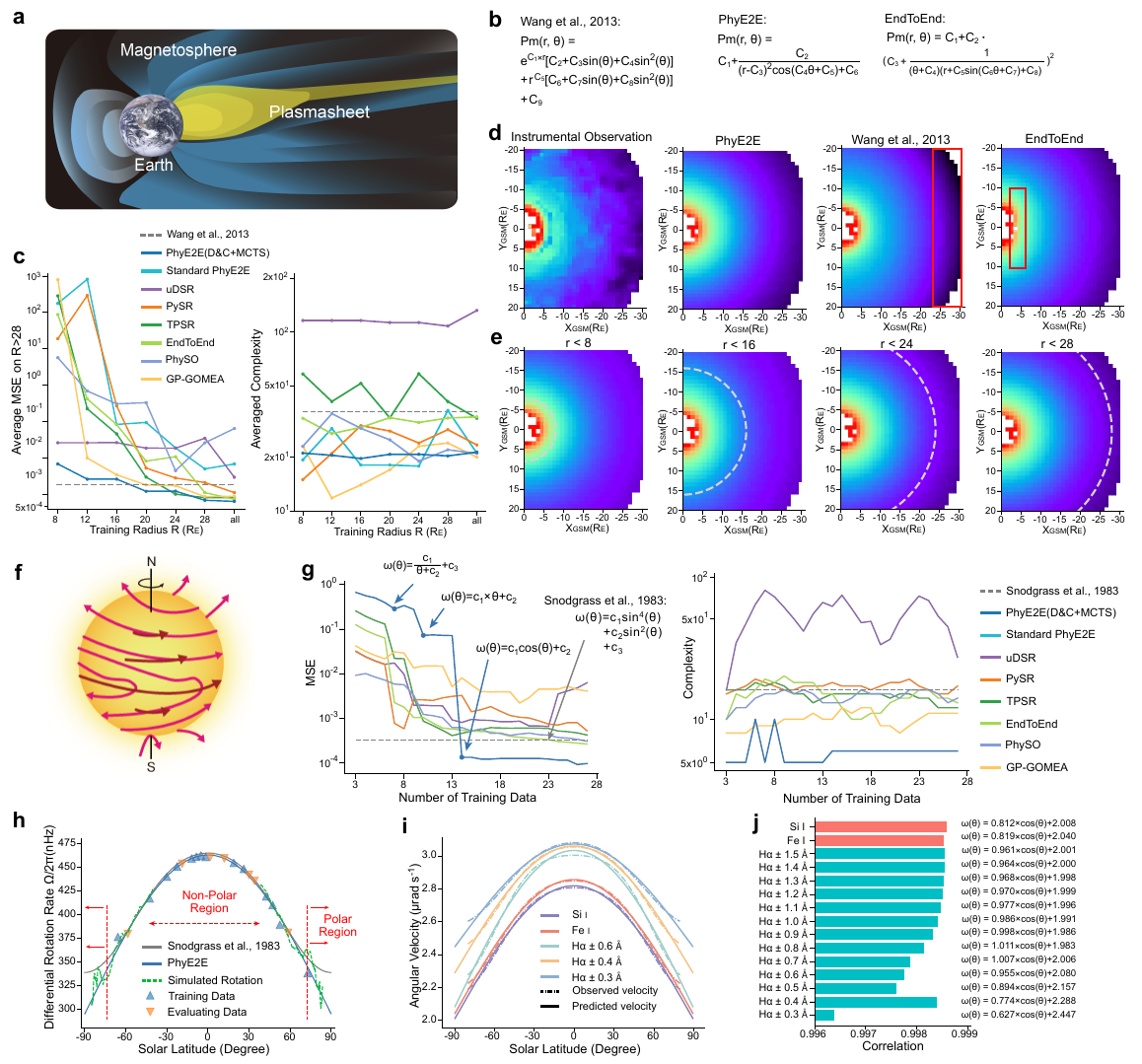}
\caption{\textbf{Performance of plasma sheet pressure prediction and solar differential rotation prediction}
\textbf{a,} The distribution of near-Earth magnetosphere and plasmasheet.
\textbf{b,} Symbolic formulas of Wang et al., 2013 and \ourModel.
\textbf{c,} Average Mean Square Error (left) and complexity (right) when utilizing data from different radius for models to be compared.
\textbf{d,} Instrumental observations and formula predictions for plasma sheet pressure using different models.
\textbf{e,} Predictions for plasma sheet pressure using data from different radius by \ourModel.
\textbf{f,} Solar rotation varies at different latitudes, making magnetic field lines stretched and twisted.
\textbf{g,} MSE and complexity from different models using different numbers of training data.
\textbf{h,} Predictions from Snodgrass et al., 1993 and \ourModel across all the latitudes.
\textbf{i,} Predictions of solar atmosphere, using data from various spectral lines in the photosphere and the chromosphere.
\textbf{j,} \ourModel predicts consistent formulas with high robustness across various spectral lines.
}
\label{fig:plasma_rotation}
\end{figure}

\begin{figure}[!htb]
\centering
\includegraphics[width=1.0\textwidth]{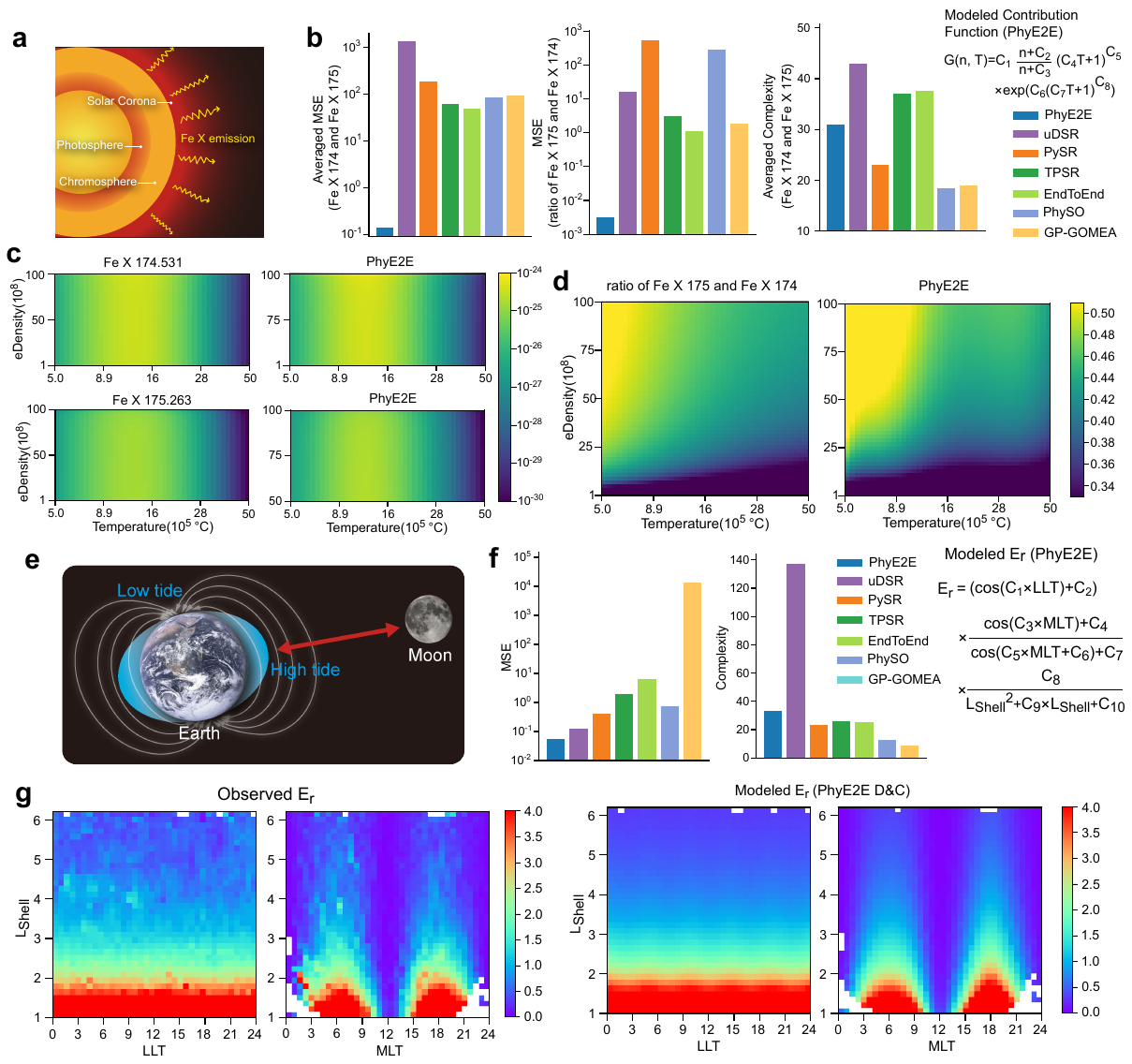}
\caption{\textbf{Performance of contribution function of emission lines predictions and lunar tide signal of plasma layer predictions}
\textbf{a,} Emission lines in the extreme ultraviolet spectrum of the Sun.
\textbf{b,} Average MSE for Fe X 174 and Fe X 175 (left), MSE of the ratio between the two emission lines (middle), and the complexity (right) of the formulas generated by different models to be compared. 
\textbf{c,} Instrumental measured contribution function and \ourModel predictions for Fe X 174 and Fe X 175.
\textbf{d,} Instrumental measured ratio of the two emission lines and \ourModel predictions.
\textbf{e,} Tidal radial electric fields influences the Earth's magnetospheric electric fields.
\textbf{f,} MSE and complexity for different models to be compared.
\textbf{g,} Instrumental measured radial electric field ($E_r$) (left) and \ourModel predictions for dayside and nightside of the Earth.}
\label{fig:contributionFunction_lunarTide}
\end{figure}

\renewcommand{\thefigure}{S\arabic{figure}}
\setcounter{figure}{0}

\begin{figure}[!htb]%
\centering
\includegraphics[width=1.0\textwidth]{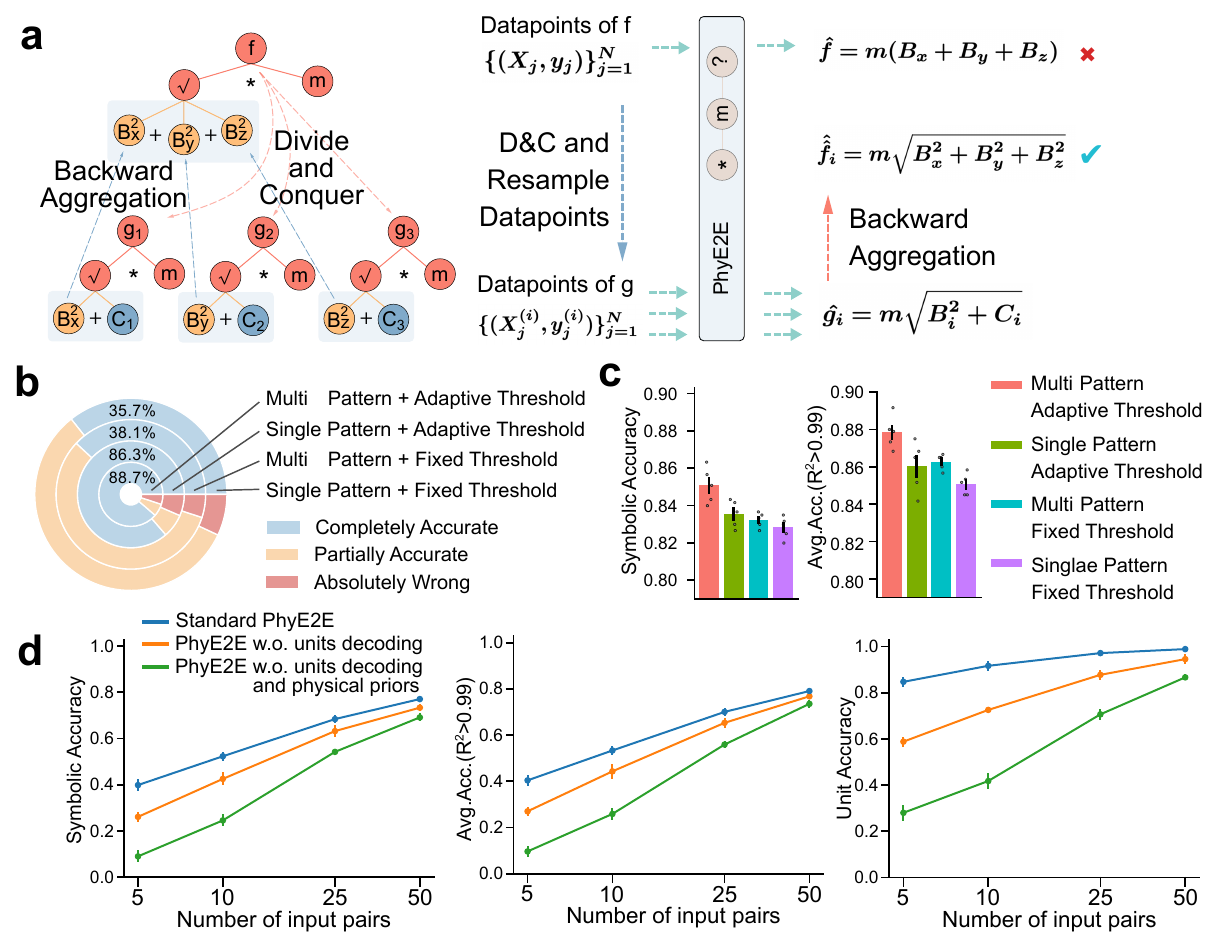}
\caption{\textbf{Performance on Synthetic and AI Feynman datasets.}
\textbf{a,} An example of D\&C procedure, including Divide-and-Conquer and Backward Aggregation step.
\textbf{b,} Decomposition accuracy for different D\&C strategies.
\textbf{c,} Symbolic accuracy and average accuracy of $R^2>0.99$ under different D\&C strategies. Data are presented as mean values ± SEM (n=5 individual trials for each configuration).
\textbf{d,} The accuracy performance on low data cases with different physical priors incorporated into PhyE2E. Data are presented as mean values ± SEM (n=5 individual trials for each configuration).}
\label{fig:Divide-and-Conquer-performance}
\end{figure}

\clearpage

\appendix
\section{Supplementary Figures}

\counterwithin{figure}{section}
\counterwithin{table}{section}

\renewcommand{\thefigure}{S\arabic{figure}}
\renewcommand{\thetable}{S\arabic{table}}

\setcounter{figure}{0}
\setcounter{table}{0}

\begin{figure}[!htb]%
\centering
\includegraphics[width=1.0\textwidth]{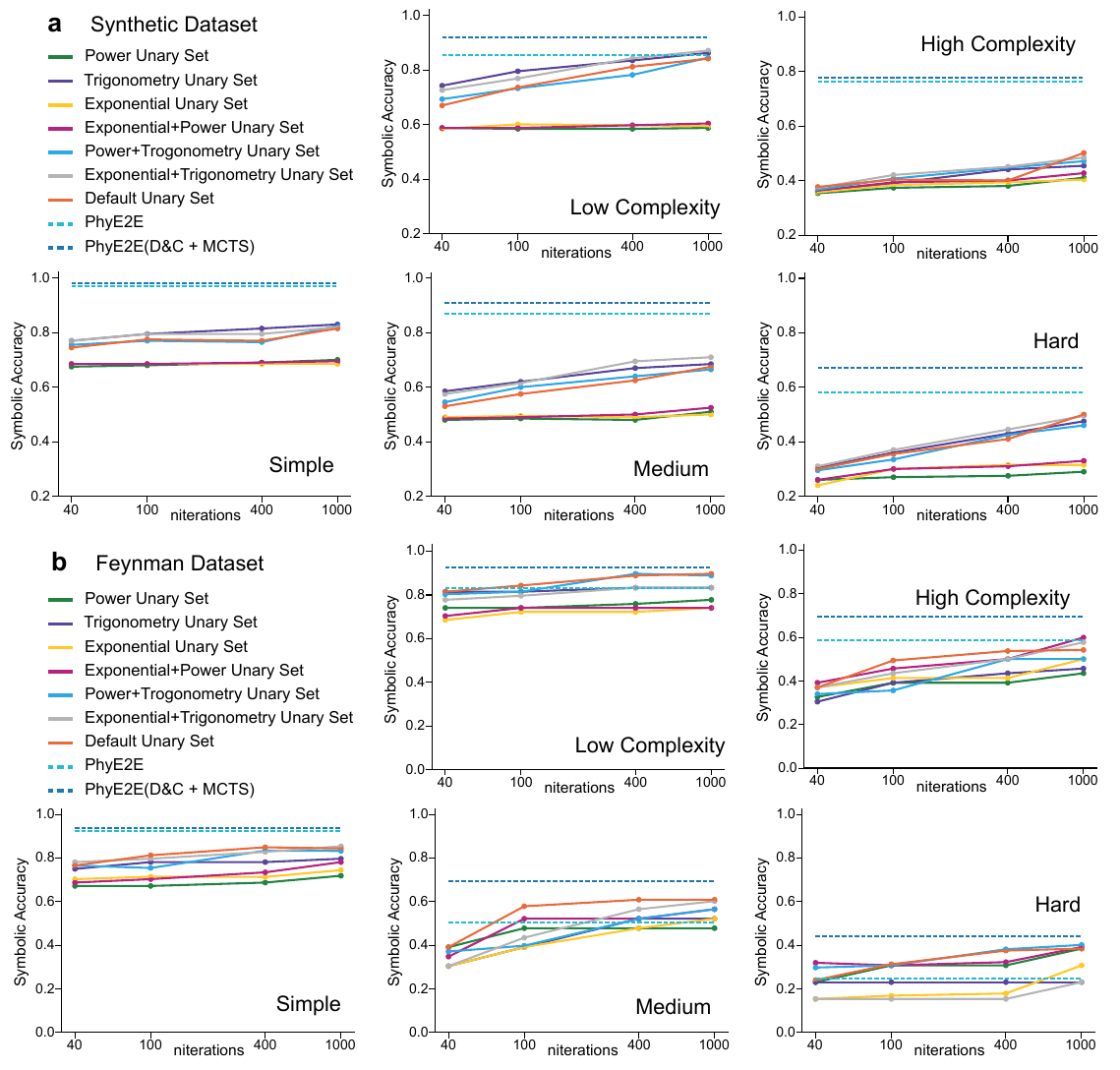}
\caption{\textbf{Detailed performance comparison between PhyE2E and PySR under different operator sets and different iterations.} 
\textbf{a,} Evaluation results of the synthetic dataset using various PySR configurations on formulas with different complexity levels (upper panels) and different difficulties (bottom panels). 
\textbf{b,} Evaluation results of the Feynman dataset using various PySR configurations on formulas with different complexity levels (upper panels) and different difficulties (bottom panels)}
\label{fig:fig-S2}
\end{figure}

\newpage

\begin{figure}[!htb]%
\centering
\includegraphics[width=1.0\textwidth]{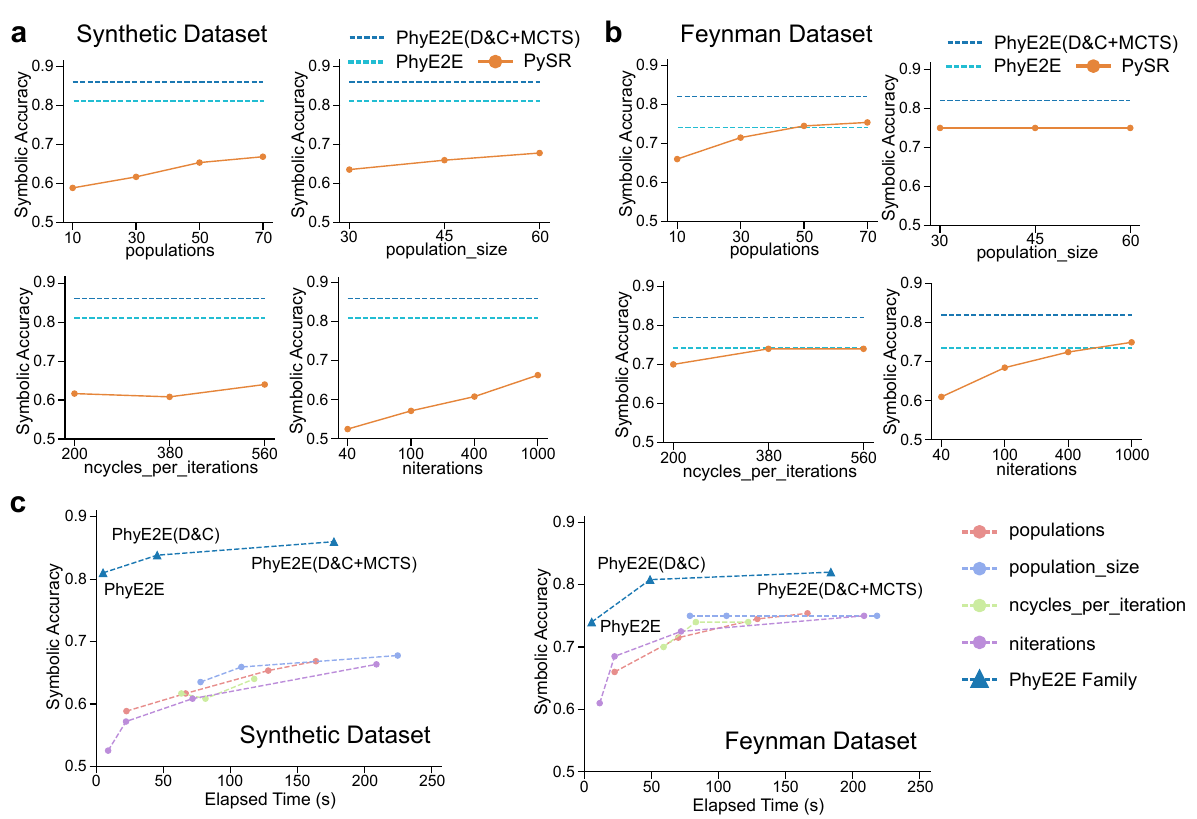}
\caption{\textbf{Detailed performance comparison between PhyE2E and PySR under different hyperparameter settings. } 
\textbf{a,} Evaluation results in terms of symbolic accuracy on the synthetic dataset, comparing PhyE2E and PySR across different search sizes. 
\textbf{b,} Evaluation results in terms of symbolic accuracy on the Feynman dataset, comparing PhyE2E and PySR across different search sizes.
\textbf{c,} Evaluation results in terms of symbolic accuracy with its corresponding elapsed times on the synthetic dataset (left panel) and the AI Feynman dataset (right panel), comparing PhyE2E and PySR across different search sizes.}
\label{fig:fig-S3}
\end{figure}

\newpage

\begin{figure}[!htb]%
\centering
\includegraphics[width=0.6\textwidth]{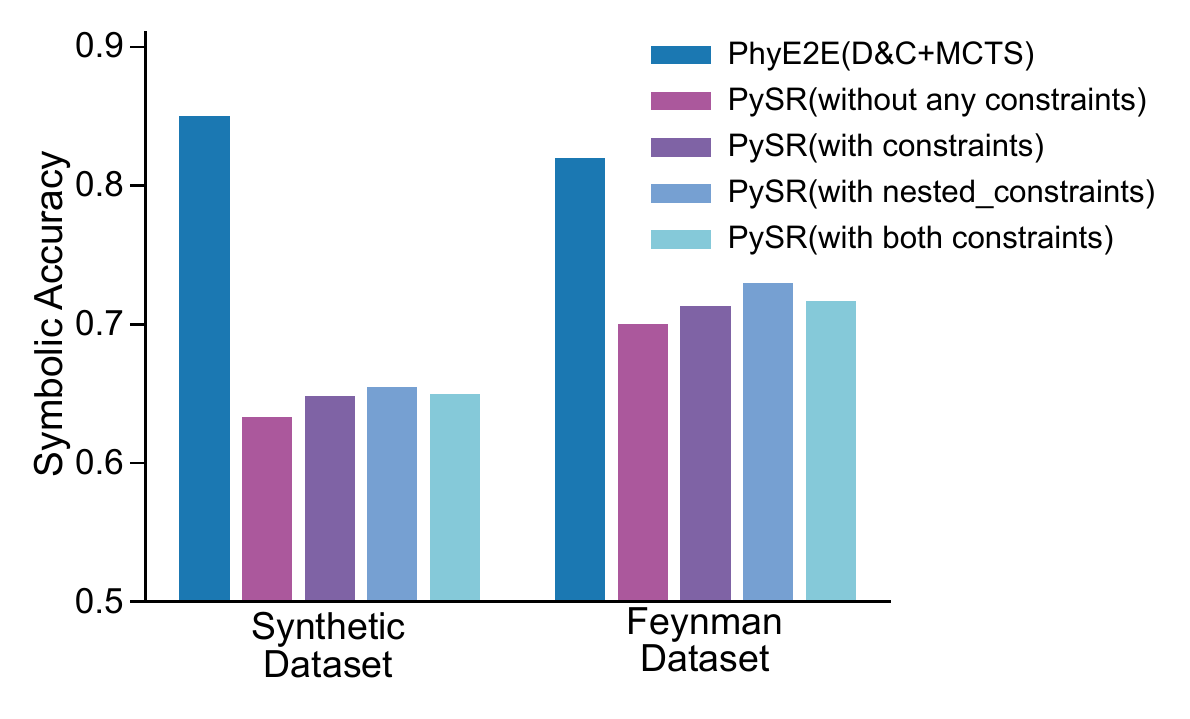}
\caption{\textbf{Evaluation results of four different PySR constraint configurations on the synthetic dataset (left) and the Feynman dataset (right).}}
\label{fig:fig-S4}
\end{figure}

\begin{figure}[!htb]%
\centering
\includegraphics[width=0.5\textwidth]{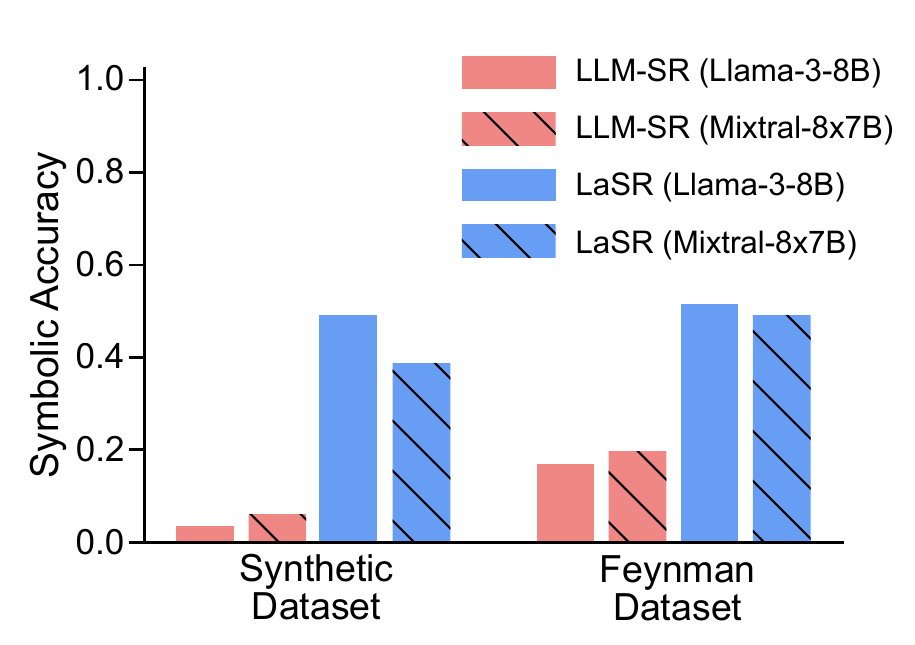}
\caption{\textbf{Evaluation results in terms of symbolic accuracy on the synthetic dataset (left) and the Feynman dataset (right), comparing LLM-based models using different LLM backbones including Llama-3-8B and Mixtral-8x7B.}}
\label{fig:fig-S5}
\end{figure}

\newpage

\section{Supplementary Tables}

\begin{table}[!htb]
\centering
\caption{Real-world physics formulas from Feynman Datasets}
\label{tab:feynman_dataset}
\includegraphics[width=0.93\textwidth]{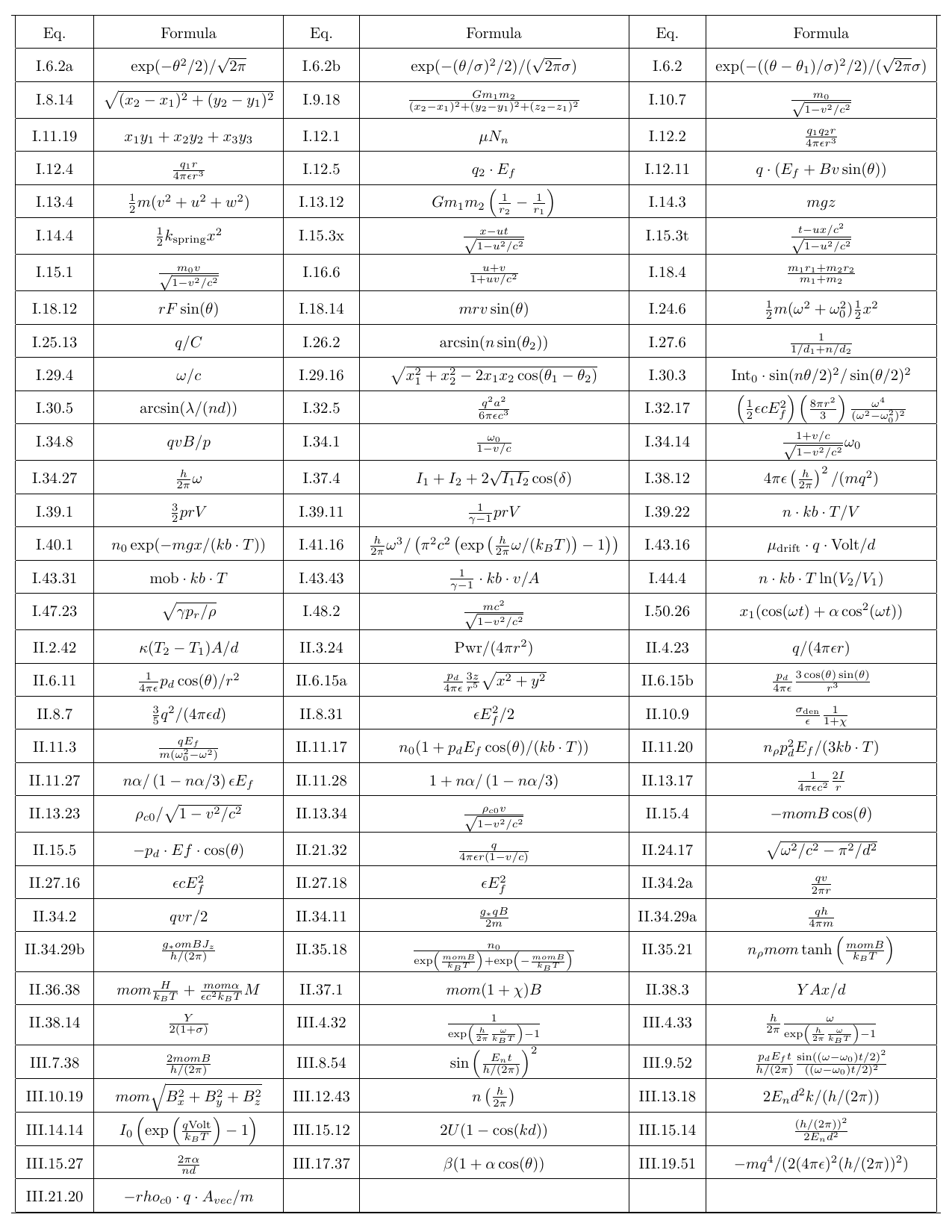}
\end{table}

\newpage

\begin{table}[!htb]
\centering
\caption{Constants of the derived formula from \ourModel for SSN prediction}
\label{tab:constants_phy1}
\includegraphics[width=0.8\textwidth]{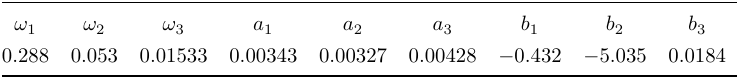}
\end{table}

\begin{table}[!htb]
\centering
\caption{Constants of derived formulas from \ourModel for plasma pressure prediction}
\label{tab:constants_phy2}
\includegraphics[width=0.7\textwidth]{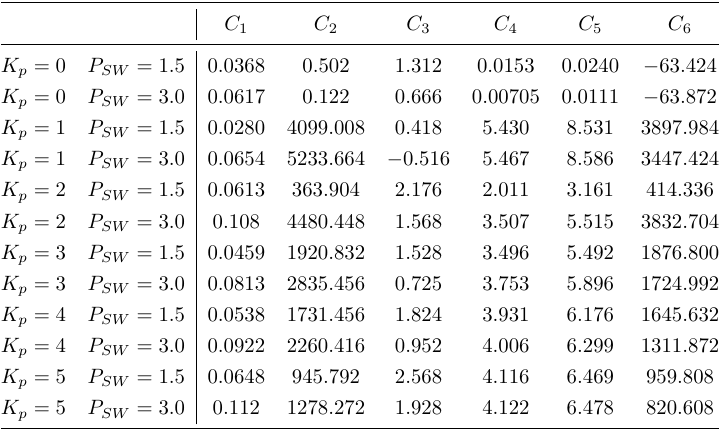}
\end{table}

\begin{table}[!htb]
\centering
\caption{Constants of derived formulas from \ourModel for differential rotation prediction}
\label{tab:constants_phy3}
\includegraphics[width=0.15\textwidth]{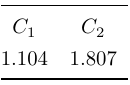}
\end{table}

\begin{table}[!htb]
\centering
\caption{Constants of derived formulas from \ourModel for the prediction of contribution functions}
\label{tab:constants_phy4}
\includegraphics[width=0.9\textwidth]{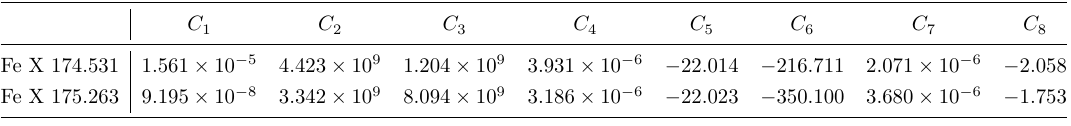}
\end{table}

\begin{table}[!htb]
\centering
\caption{Constants of derived formulas from \ourModel for the prediction of lunar tide signal}
\label{tab:constants_phy5}
\includegraphics[width=0.8\textwidth]{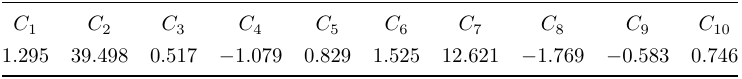}
\end{table}

\newpage

\begin{table}[!htb]
\centering
\caption{Formulas from \ourModel for SSN prediction on different time frame}
\label{tab:formulas_phy1_otherYear}
\includegraphics[width=1.0\textwidth]{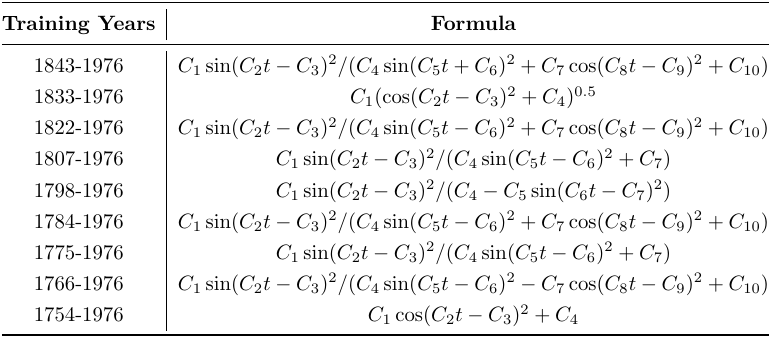}
\end{table}

\begin{table}[!htb]
\centering
\caption{Constants of derived formulas from \ourModel for SSN prediction on different time frame}
\label{tab:constants_phy1_otherYear}
\includegraphics[width=1.0\textwidth]{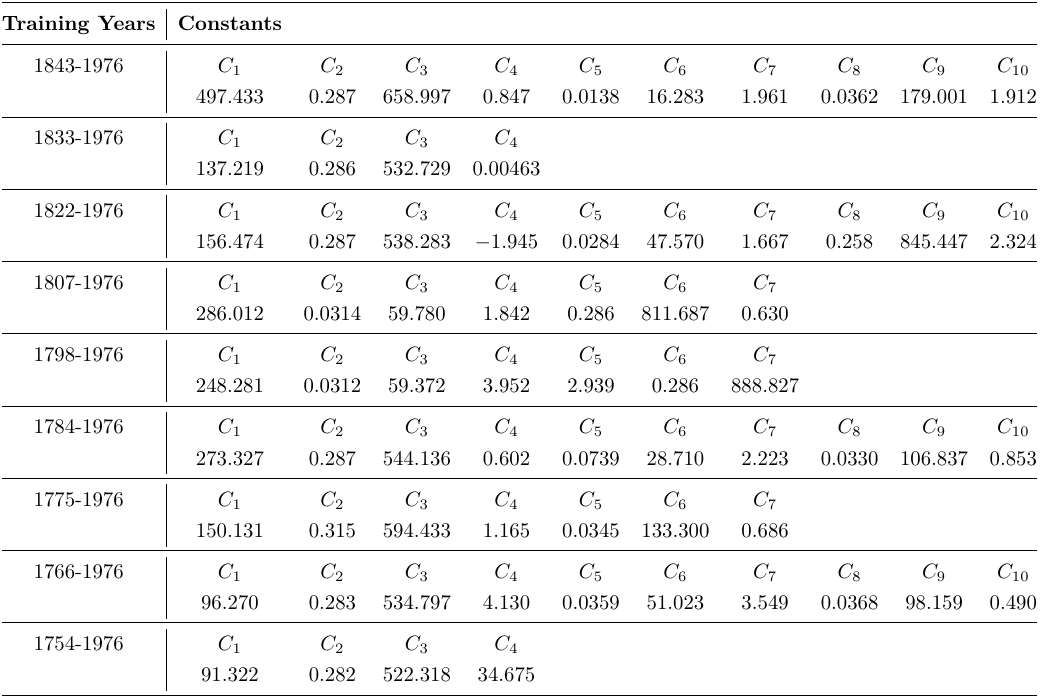}
\end{table}

\newpage

\begin{table}[!htb]
\centering
\caption{Derived Formulas from all the baseline models for SSN prediction and its short-term cycle}
\label{tab:phy1_baseline}
\includegraphics[width=0.95\textwidth]{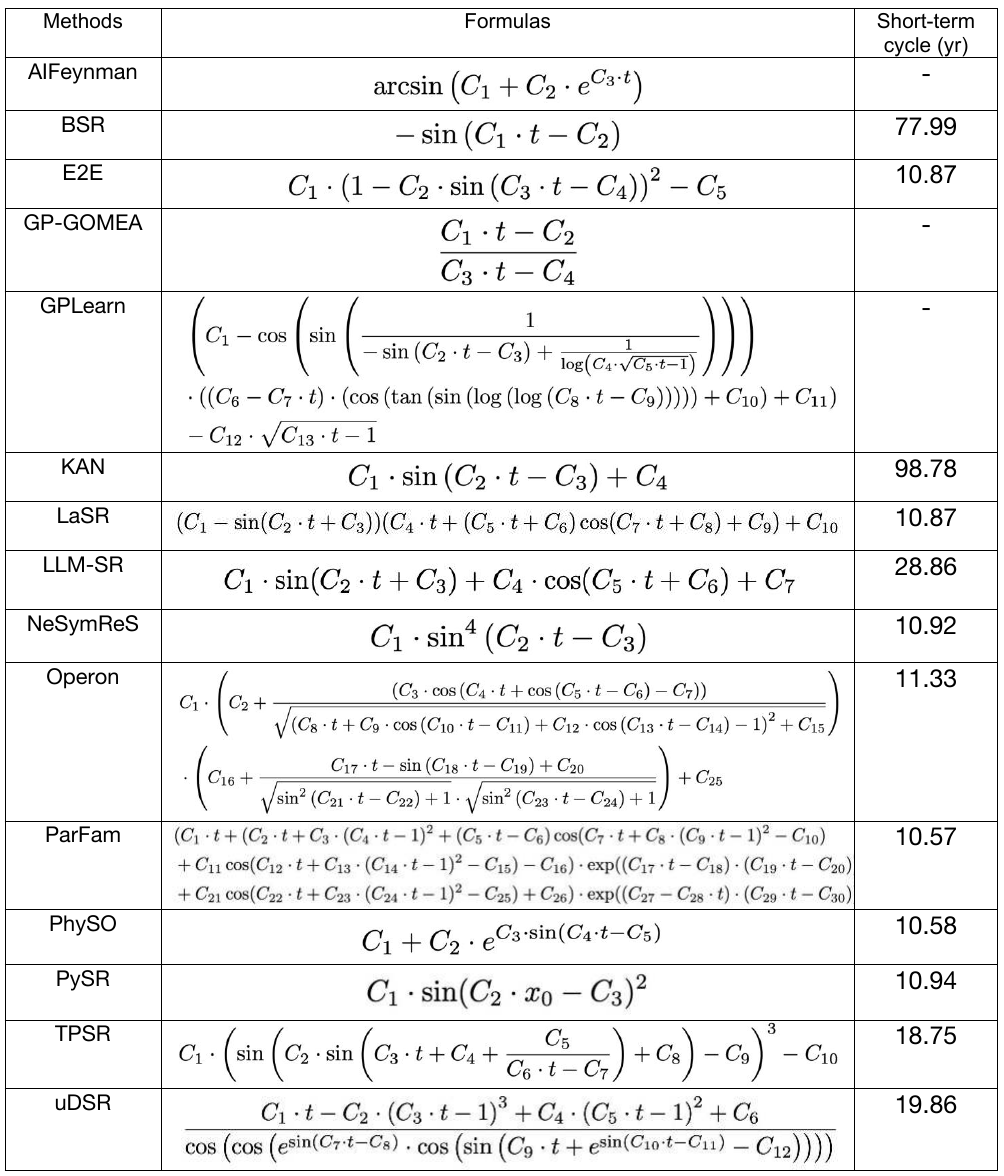}
\end{table}

\newpage

\begin{table}[!htb]
\centering
\caption{Constants of derived formulas from all the baseline models for SSN prediction}
\label{tab:phy1_consts}
\includegraphics[width=0.92\textwidth]{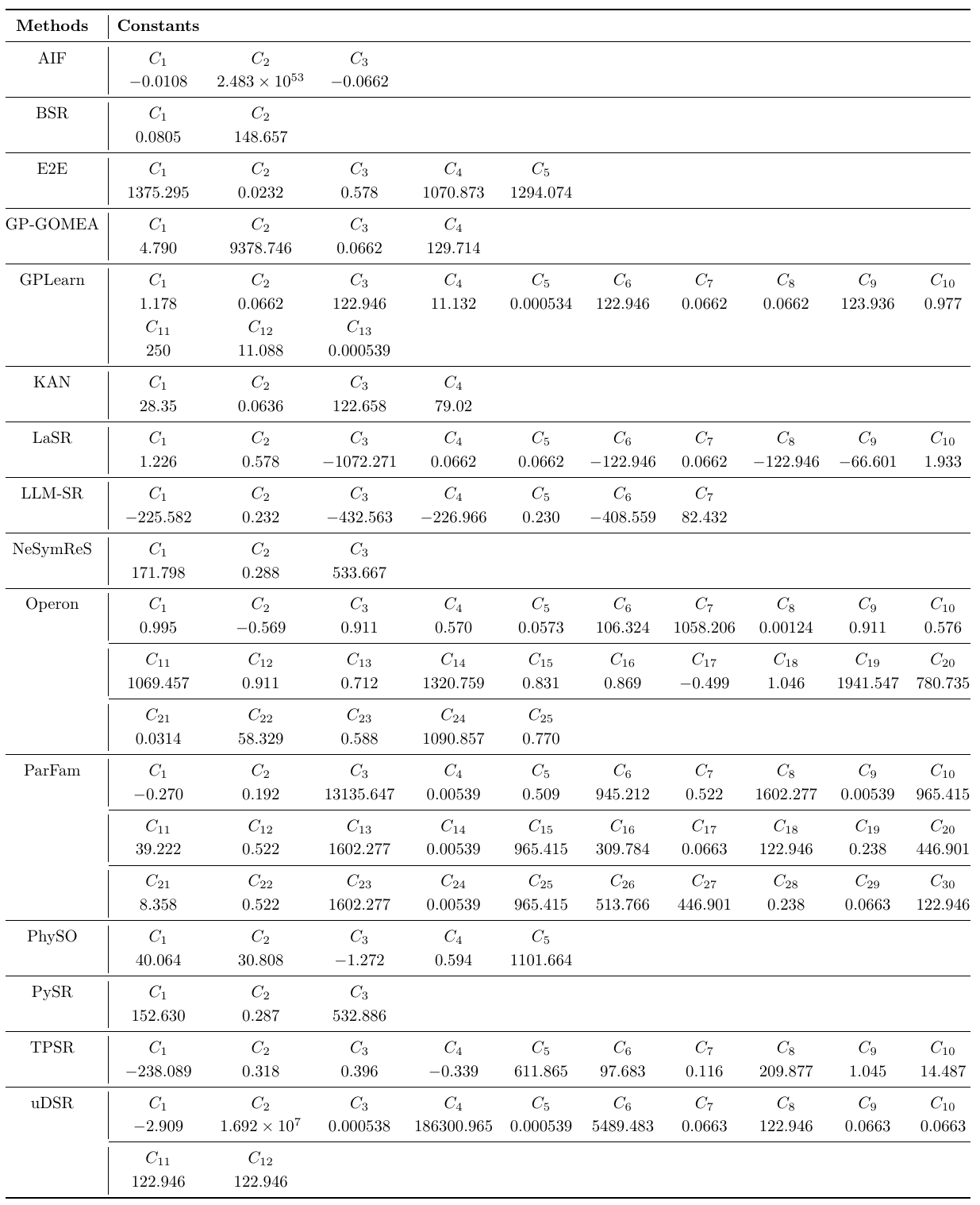}
\end{table}

\newpage

\begin{table}[!htb]
\centering
\caption{Derived formulas from PySR under different operator set and constraint configurations for SSN prediction}
\label{tab:phy1_pysr_formulas}
\includegraphics[width=1.0\textwidth]{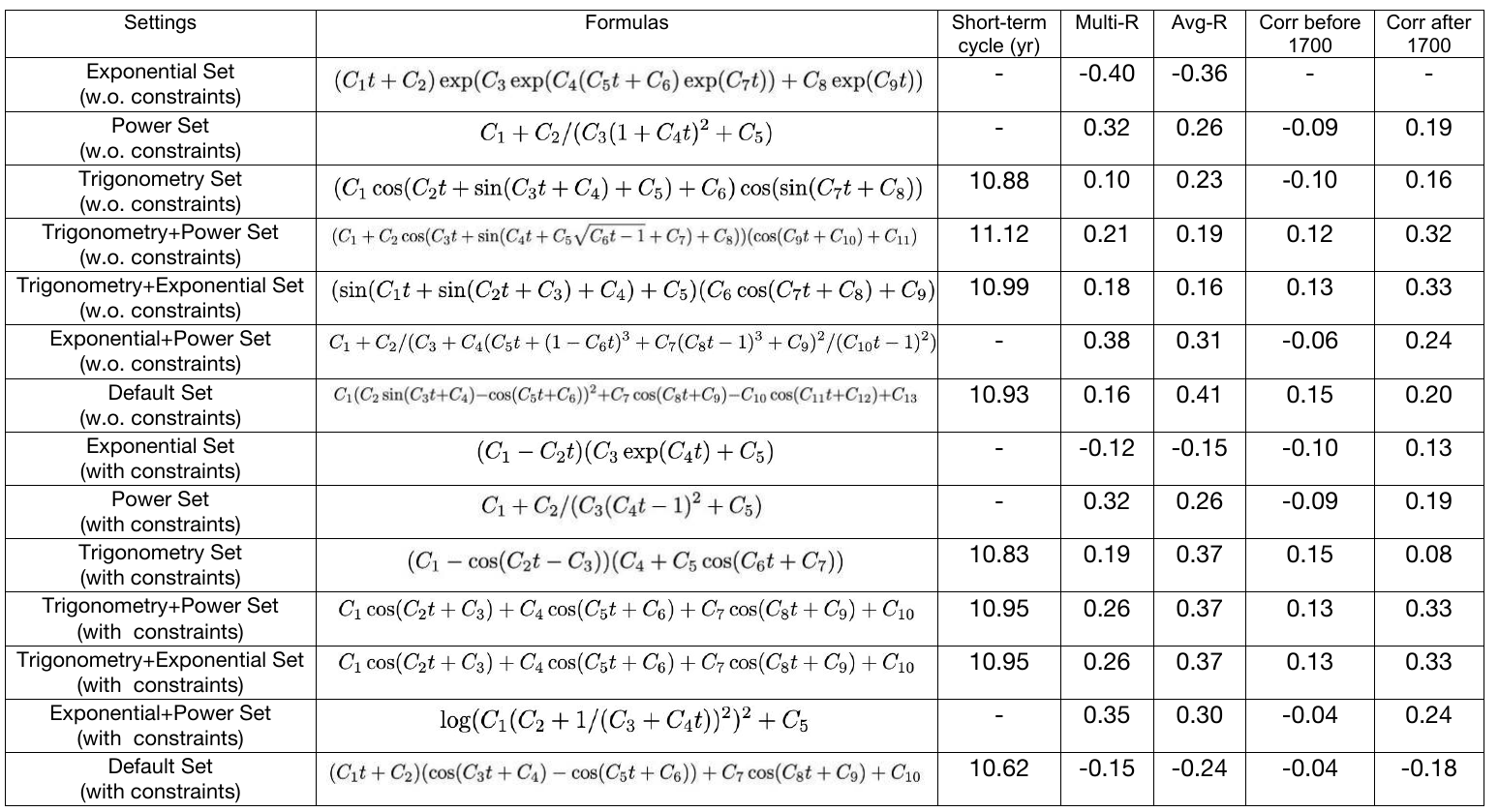}
\end{table}

\newpage

\begin{table}[!htb]
\centering
\caption{Constants of derived formulas from PySR under different operator set and constraint configurations for SSN prediction}
\label{tab:phy1_pysr_consts}
\includegraphics[width=1.0\textwidth]{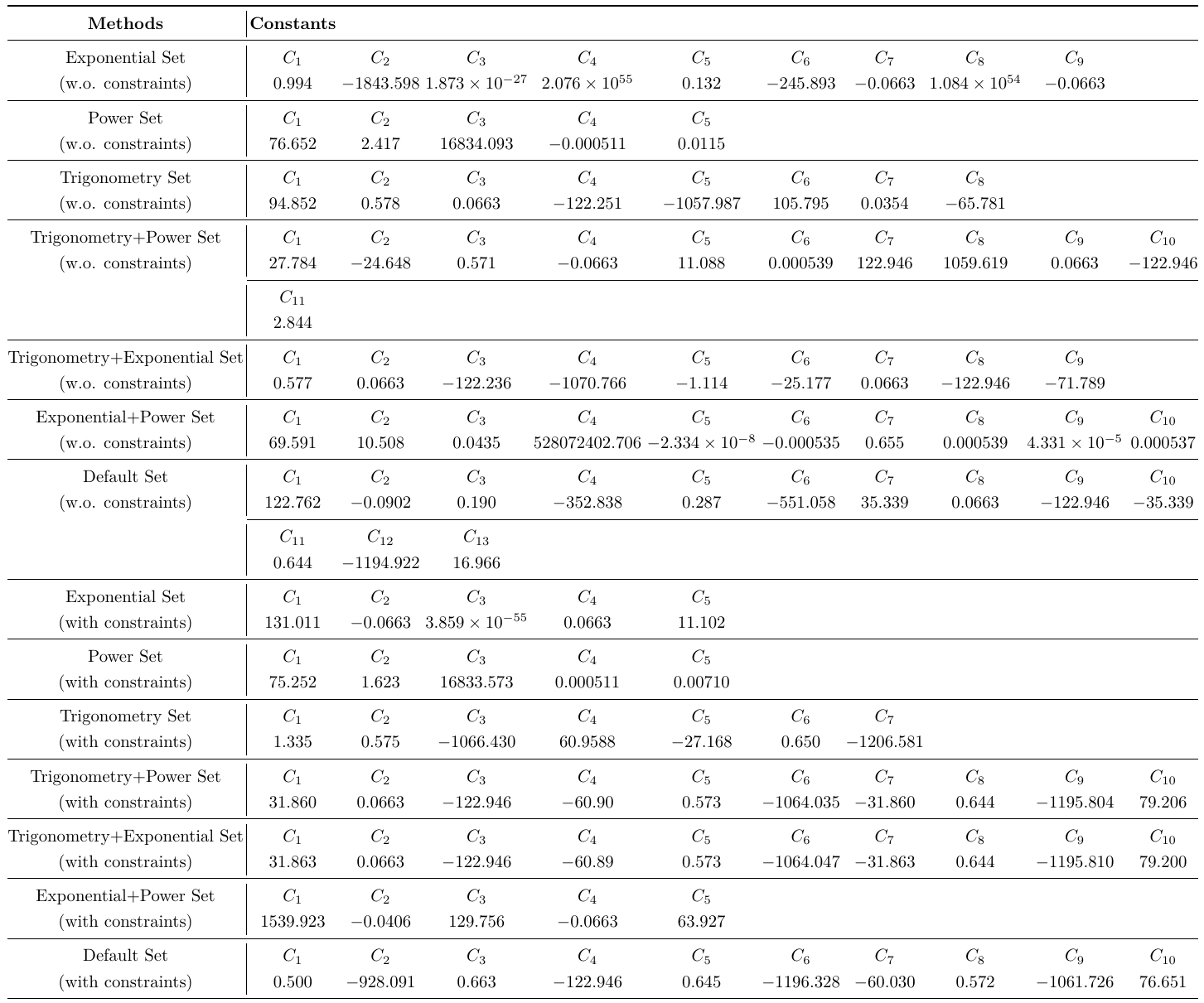}
\end{table}

\newpage

\begin{table}[!htb]
\centering
\caption{Derived formulas from all the baseline models for plasma pressure prediction}
\label{tab:phy2_baseline}
\includegraphics[width=0.85\textwidth]{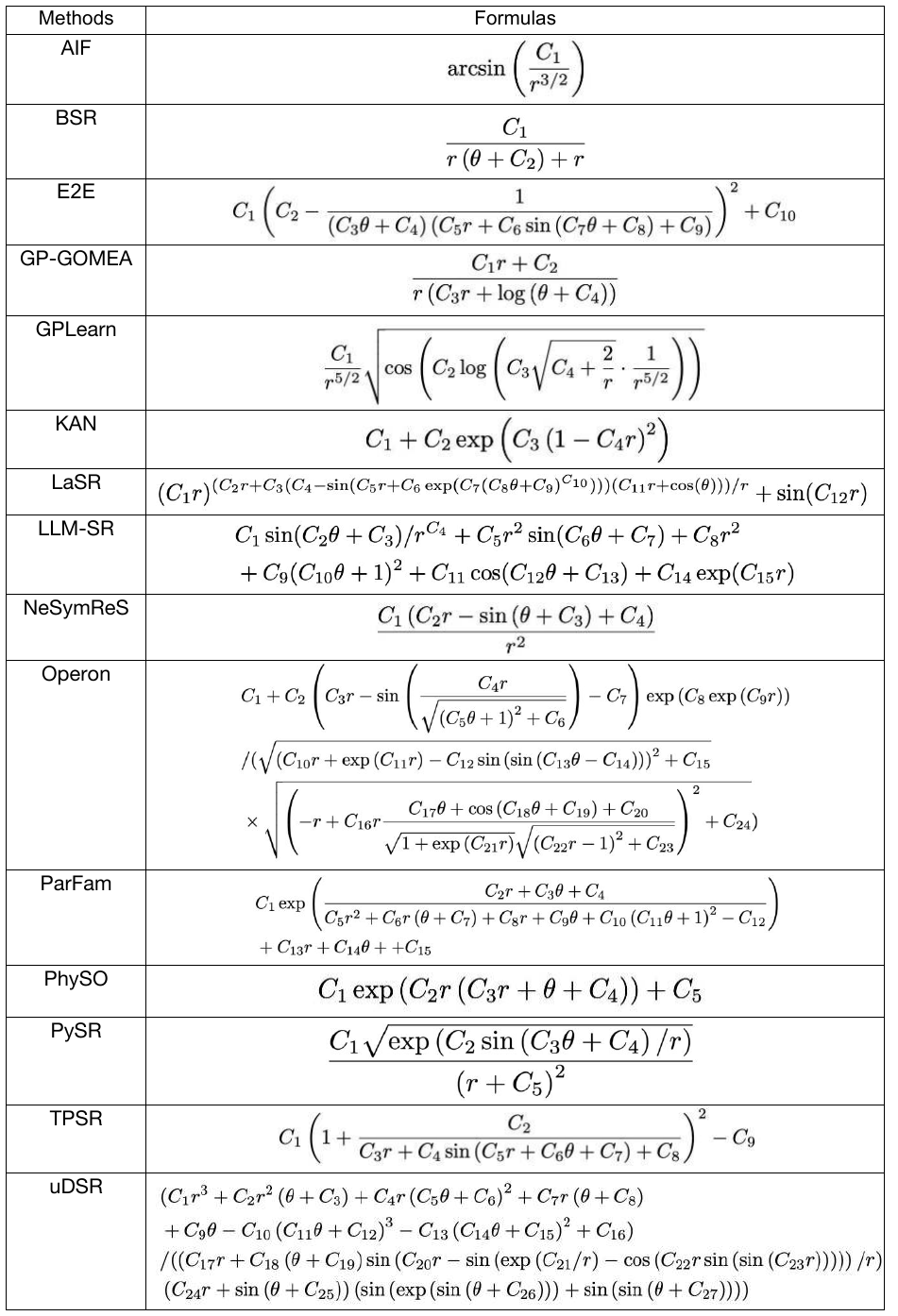}
\end{table}

\newpage

\begin{table}[!htb]
\centering
\caption{Constants of derived formulas from all the baseline models for plasma pressure prediction}
\label{tab:phy2_consts}
\includegraphics[width=0.85\textwidth]{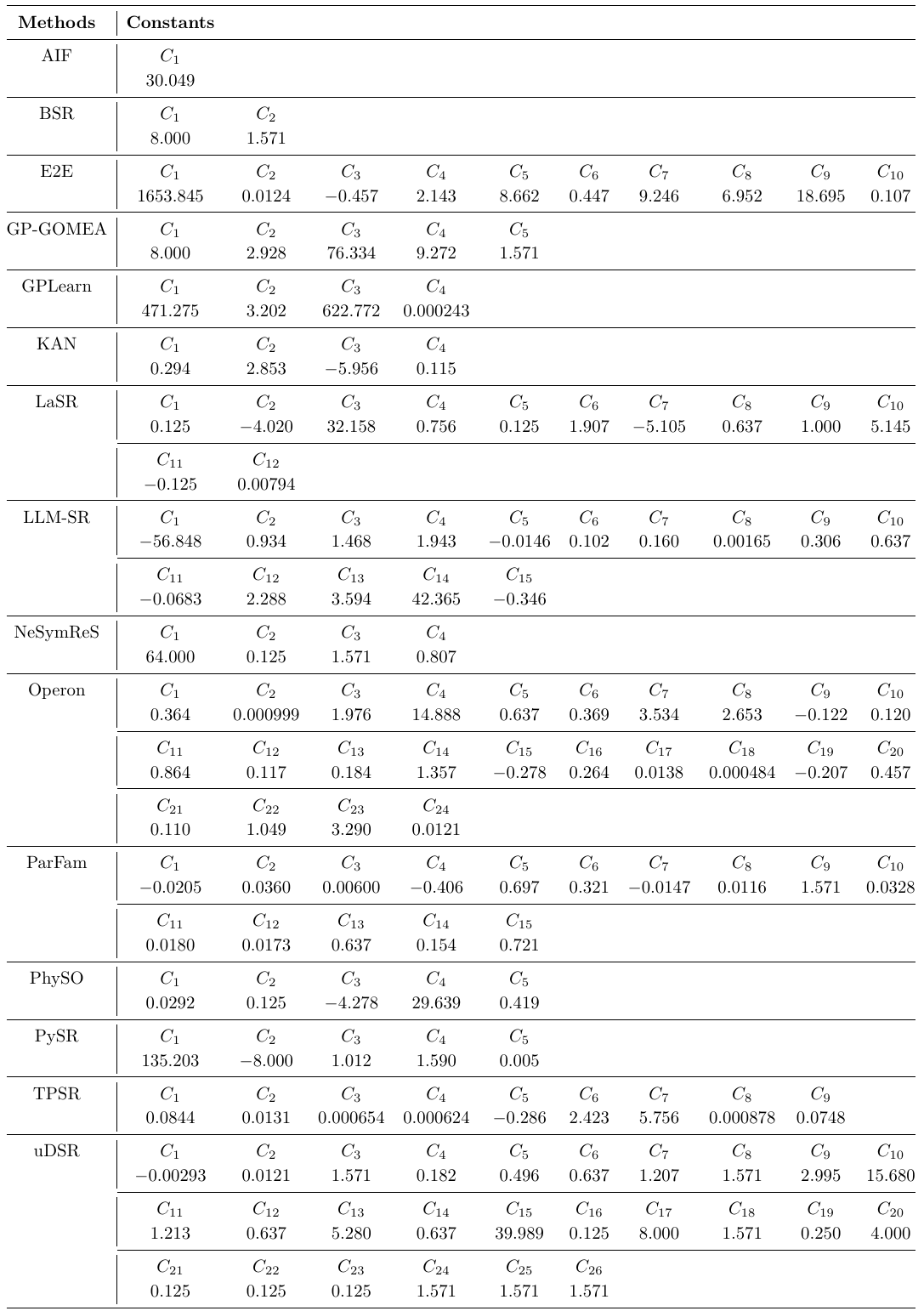}
\end{table}

\newpage

\begin{table}[!htb]
\centering
\caption{Derived formulas from all the baseline models for differential rotation prediction}
\label{tab:phy3_baseline}
\includegraphics[width=0.90\textwidth]{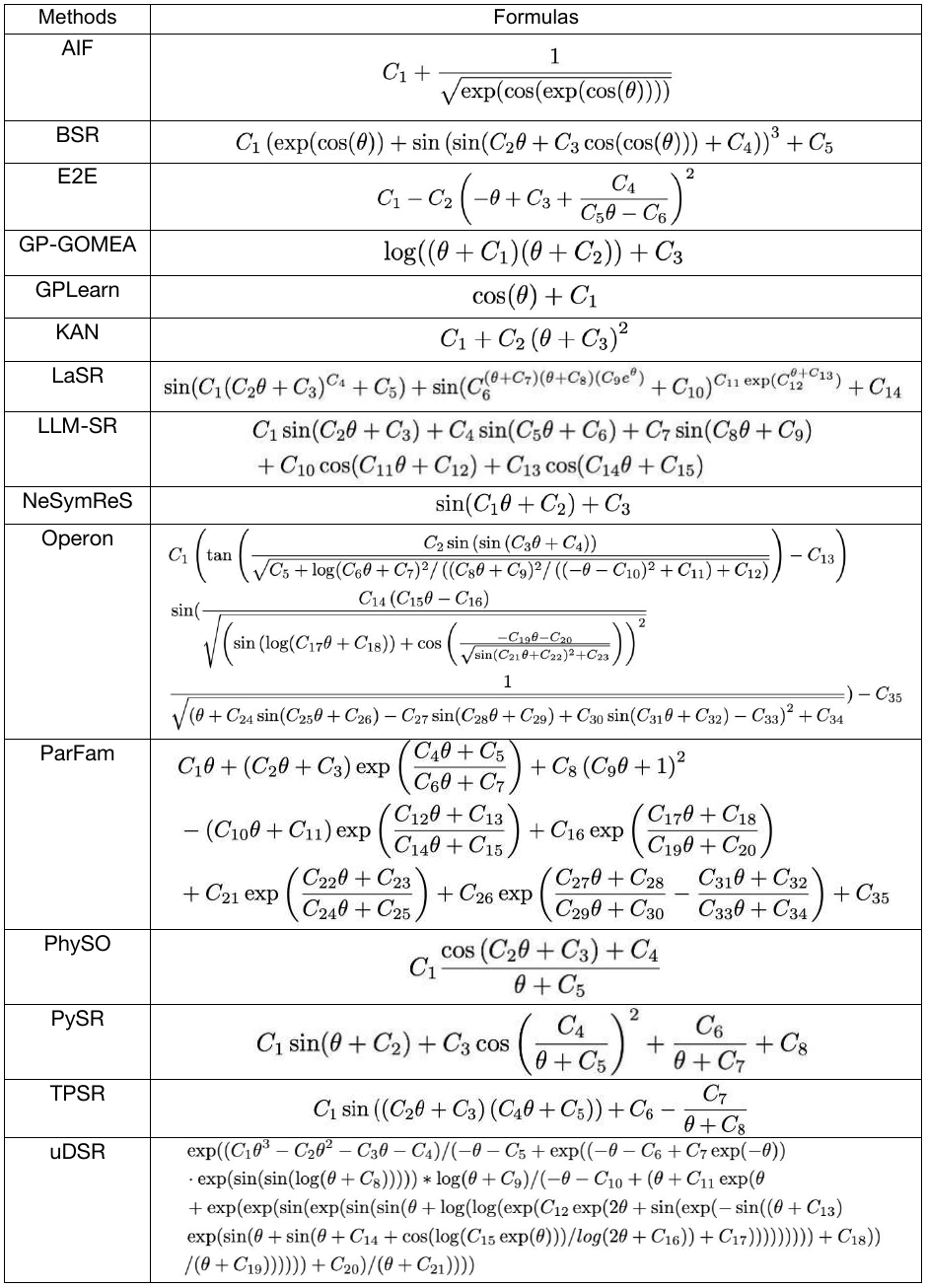}
\end{table}

\newpage

\begin{table}[!htb]
\centering
\caption{Constants of derived formulas from all the baseline models for differential rotation prediction}
\label{tab:phy3_consts}
\includegraphics[width=0.86\textwidth]{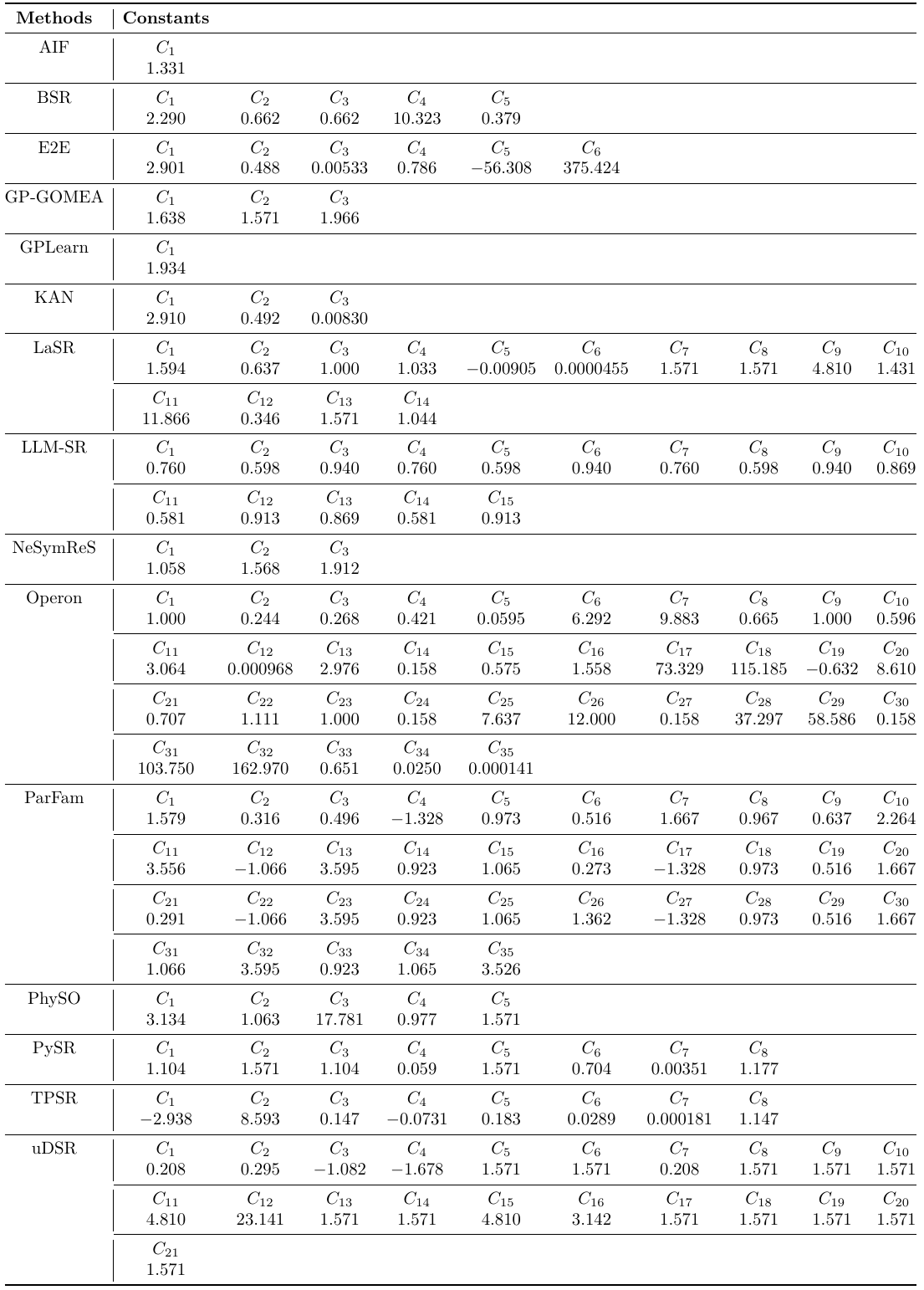}
\end{table}

\newpage

\begin{table}[!htb]
\centering
\caption{Derived formulas from all the baseline models for the prediction of contribution functions}
\label{tab:phy4_baseline}
\includegraphics[width=0.8\textwidth]{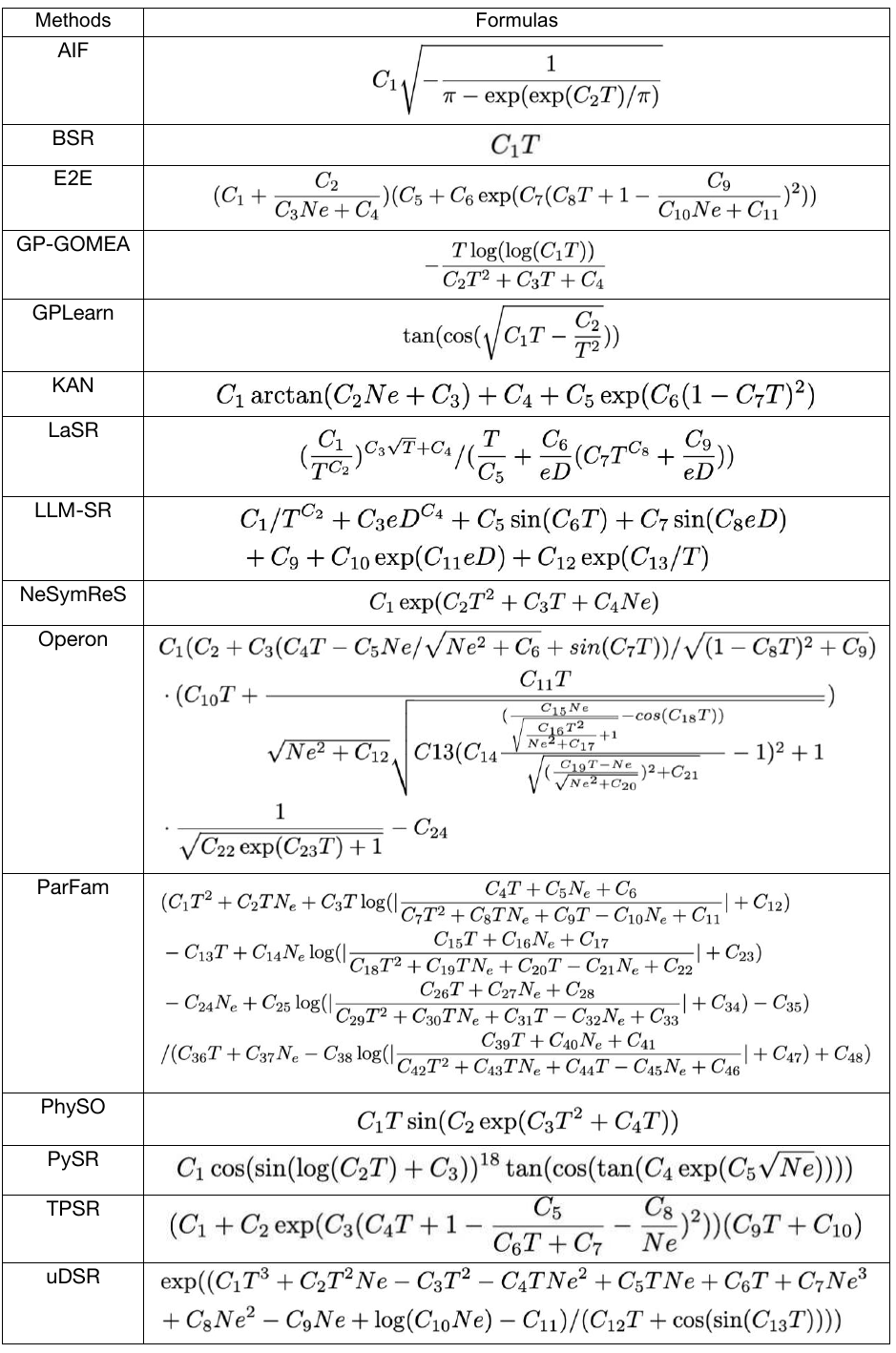}
\end{table}

\newpage

\begin{table}[!htb]
\centering
\caption{Constants of derived formulas from all the baseline models for the prediction of contribution functions}
\label{tab:phy4_consts}
\includegraphics[width=1.0\textwidth]{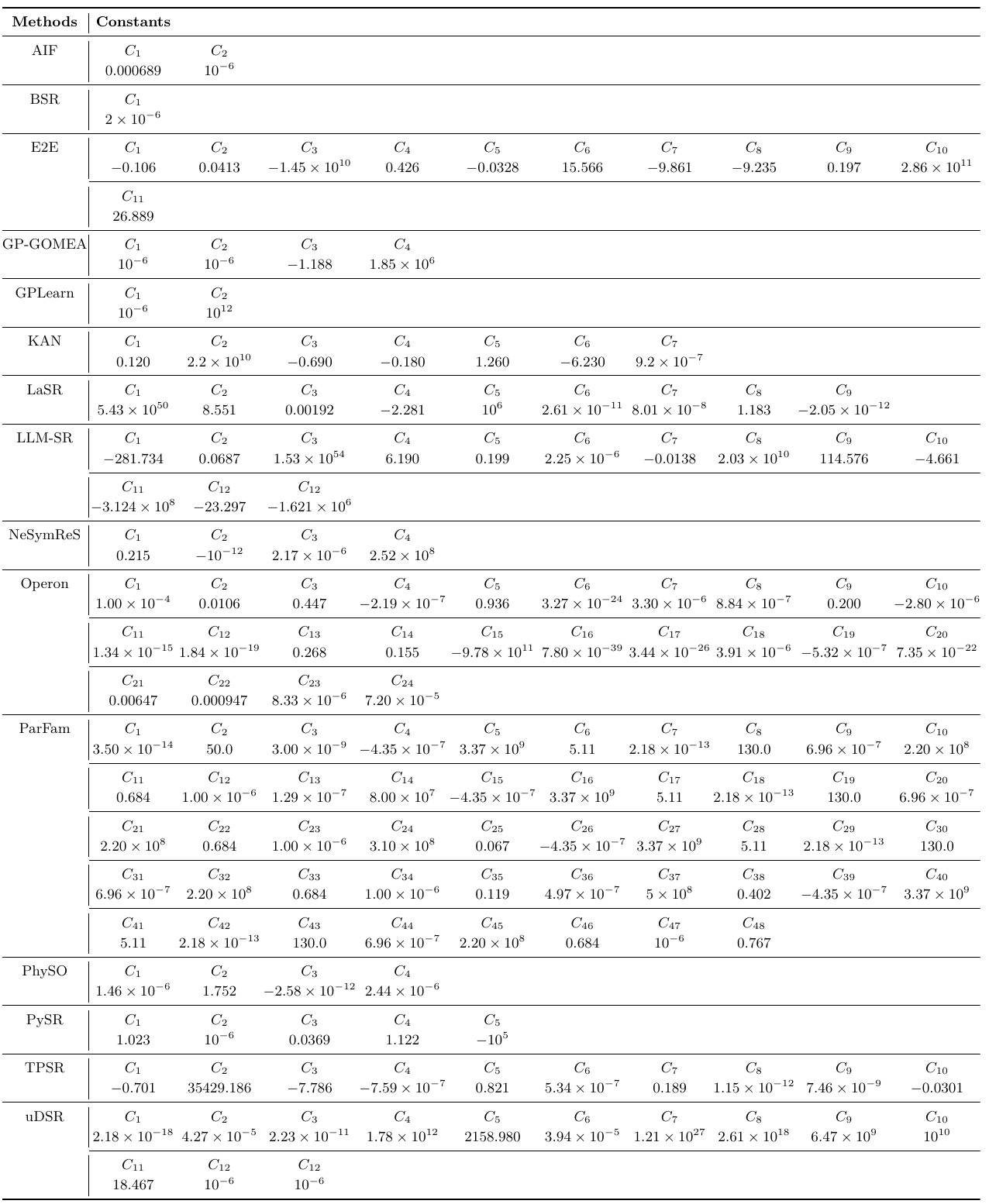}
\end{table}

\newpage

\begin{table}[!htb]
\centering
\caption{Derived formulas from all the baseline models for the prediction of lunar tide signal}
\label{tab:phy5_baselines}
\includegraphics[width=0.95\textwidth]{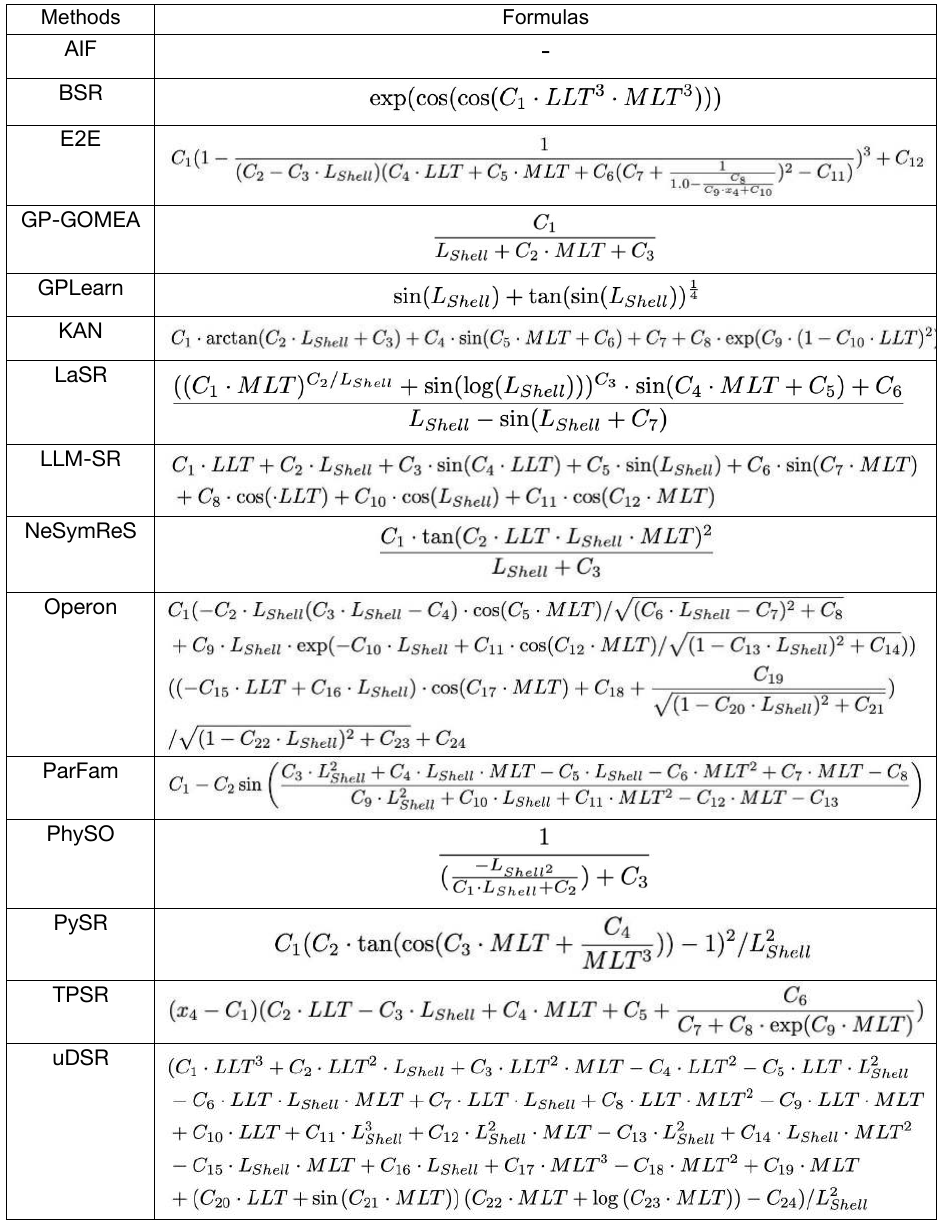}
\end{table}

\newpage

\begin{table}[!htb]
\centering
\caption{Constants of derived formulas from all the baseline models for the prediction of lunar tide signal}
\label{tab:phy5_consts}
\includegraphics[width=0.83\textwidth]{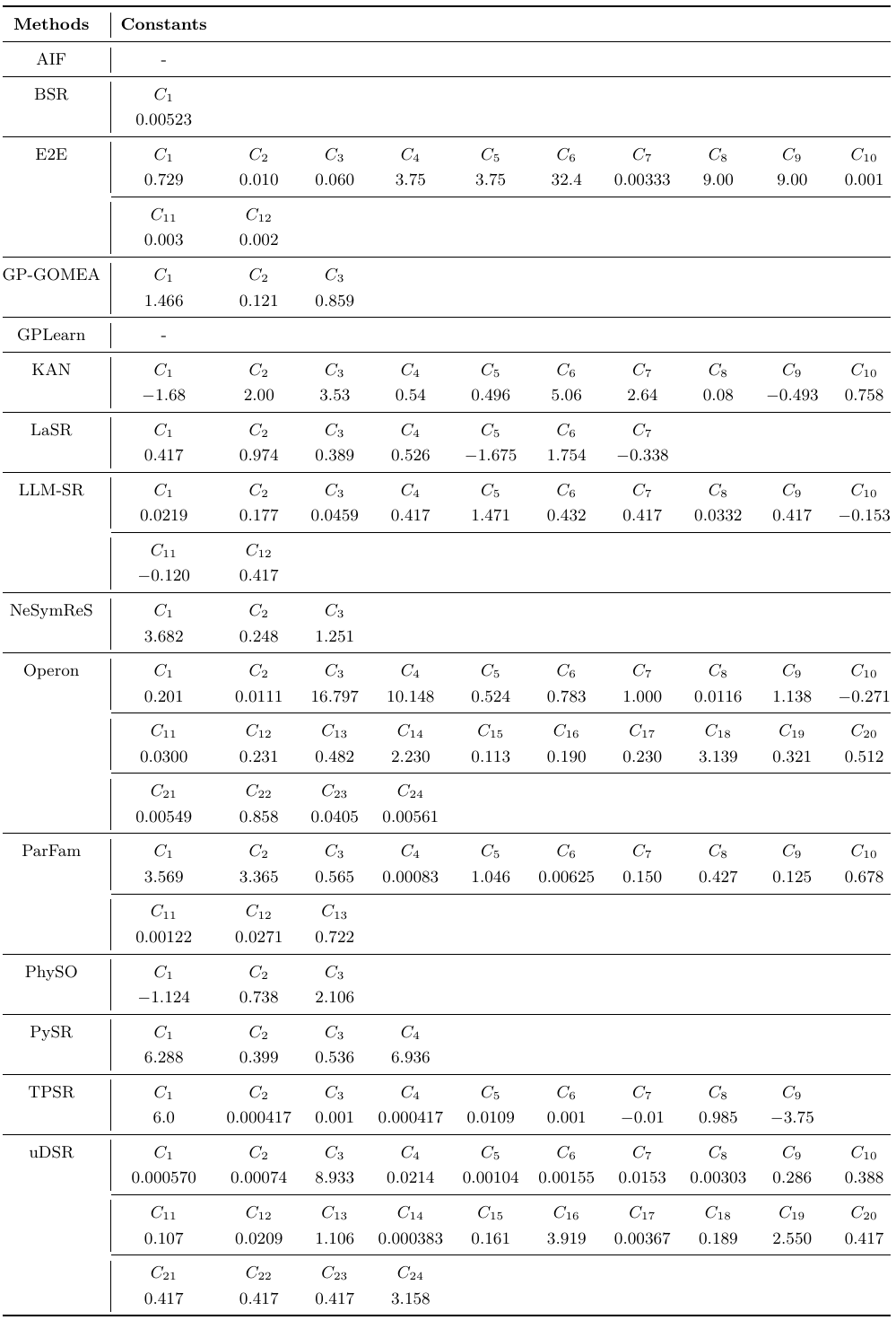}
\end{table}

\newpage

\section{Details settings of baseline Symbolic Regression methods}
\label{sec:baselines}

\paragraph{AIFeynman}
The code for AI Feynman is available at \url{https://github.com/SJ001/AI-Feynman}.
The parameters are set to be the same as their examples, allowing for 14 possible operators: 
$\{$+, *, -, /, +1, -1, neg, inv, sqrt, $\pi$, $\sin$, $\cos$, $\ln$, $\exp$$\}$, with a maximum time for brute-force search of 30 seconds, a maximum polynomial fitting degree of 3, and 500 training epochs for the neural network.
We run the model using the command \texttt{run\_aifeynman("../data/", "data.txt", 30, "14ops.txt")}.

\paragraph{BSR}
The Bayesian Symbolic Regression (BSR) is available at \url{https://github.com/ying531/MCMC-SymReg}, which includes a Bayesian framework and an efficient Markov Chain Monte Carlo (MCMC) algorithm. 
We use the default parameter, allowing for a total of 50 iterations and 3 output formula trees.
We run the model using the command \texttt{bsr.fit(X, y)}.

\paragraph{EndtoEnd}
The code for EndtoEnd model can be downloaded from the official repository at \url{https://github.com/facebookresearch/symbolicregression}. 
We also download the pre-trained model from \url{https://dl.fbaipublicfiles.com/symbolicregression/model1.pt} and use the default parameters provided in the demo scripts.
The model is allowed to search for operators from the following set: $\{$add, sub, mul, div, abs, inv, sqrt, log, exp, sin, arcsin, cos, arccos, tan, arctan, pow2, pow3$\}$.
We only allow at most 200 datapoints in a single bag to fit for each problem, same as default setting, while the remaining of data are split into at most 10 bags, each of which is fitted independently.
We run the model using the command \texttt{est.fit(X, y)}.

\paragraph{GP-GOMEA}
The Gene-pool Optimal Mixing Evolutionary Algorithm for Genetic Programming (GP-GOMEA) is available at \url{https://github.com/marcovirgolin/gpg}. The operators are allowed to search from the following set: $\{$add, sub, mul, div, sqrt, log$\}$.
Other default parameters follow their default configuration, including popsize=64, batchsize=64, time\_limit=100.
We run the model using the command \texttt{gpg.fit(X, y)}.

\paragraph{GPLearn}
The GPLearn model is also a GP-based method available at \url{https://github.com/trevorstephens/gplearn}. We use a function set of $\{$add, sub, mul, div, sin, cos, tan, sqrt, log, exp$\}$. Other parameters follow their default configuration, which include population\_size=5000, generations=20, a crossover probability $p_{\text{crossover}} = 0.7$, a subtree mutation probability $p_{\text{subtree mutation}} = 0.1$, a hoist mutation probability $p_{\text{hoist mutation}} = 0.05$, a point mutation probability $p_{\text{point mutation}} = 0.1$, and a maximum sample fraction of 0.9.
We run the model using the command \texttt{est\_gp.fit(X, y)}.

\paragraph{KAN}
The Kolmogorov-Arnold Networks (KAN) model is available at \url{https://github.com/KindXiaoming/pykan}.
The KAN trains a neural network with learnable activation functions, which is then fitted to the most likely operators selected from the following set: 
$\{x, x^2, x^3, x^4, 1/x, 1/x^2, $
$1/x^3, 1/x^4, \sqrt{x}, 1/\sqrt{x}, \exp(x), \log(x), \text{abs}(x), \sin(x), \tan(x), \tanh(x), \text{sigmoid}(x), \arcsin(x), \arctan(x), $
$\text{arctanh}(x), 0, \cosh(x), \text{gaussian}(x)\}$ to derive its symbolic representation.
Proper design on KAN's architecture is crucial when addressing practical problems.To balance its computation complexity and accuracy, we train the KAN model three times for each problem, using different network widths: $(nv, 1)$, $(nv, 5, 1)$, and $(nv, 3, 3, 1)$, respectively. These widths are among the most popular choices for solving Feynman equations. Other parameters follow their default configuration, including the number of grid intervals = 5, the order of piecewise polynomial = 3.
We run the model using the command \texttt{formula = learning(kan, dataset, verbose)}.

\paragraph{LaSR}
The Library Augmented Symbolic Regression (LaSR), built upon PySR and leveraging the strength of Large Language Models (LLMs), is available at 
\url{https://github.com/trishullab/LibraryAugmentedSymbolicRegression.jl}.
The Llama-3-8B model is used as local LLM engine. We employ a function set of $\{$add, sub, mul, div, pow, exp, log, sin, cos, sqrt$\}$ with constraints and nested constraints as specified in the default configuration. No additional hints are provided, except for the units of each variable, as we believe it is impractical to derive more meaningful task-specific hints for symbolic regression. We allow a maximum iterations of 100 and run the model using the command \texttt{hall\_of\_fame = equation\_search(X, y; niterations=100)}.

\paragraph{LLM-SR}
The LLM-SR, a novel approach leveraging the powers of LLMs for symbolic regression, is available at \url{https://github.com/deep-symbolic-mathematics/LLM-SR}.
We employ the Mixtral-8x7B model as our local LLM engine, setting \texttt{max\_sample\_num} to be 1000 to accommodate the large evaluation dataset. The maximum number of generated tokens is limited to 1024, with a search temperature of 0.8. The \texttt{top\_K} and \texttt{top\_p} parameters are set to be 30 and 0.9 respectively, following the default configuration.
For each different symbolic regression problem, we modify the general specification as: \texttt{Find the mathematical function skeleton that represents the relationship between input and output variables, given the\\ input variables with their physical units, and output variables with their physical units.}
We run the model using the command \texttt{pipeline.main(specification,
            inputs=dataset,
            config=config,
            max\_sample\_nums=max\_sample\_num,    \\ 
            class\_config=class\_config,
            log\_dir=args.log\_path)}.

\paragraph{NeSymRes}
The Neural Symbolic Regression that Scale (NeSymReS) is available at \url{https://github.com/SymposiumOrganization/NeuralSymbolicRegressionThatScales}. We use the 100M pre-trained model downloaded from \url{https://drive.google.com/drive/folders/1LTKUX-KhoUbW-WOx-ZJ8KitxK7Nov41G?usp=sharing}.
The model is allowed to search for operators from the following set: $\{$abs, acos, add, asin, atan, cos, cosh, coth, div, exp, ln, mul, pow, sin, sinh, sqrt, tan, tanh$\}$, same as default configuration.
We run the model using the command \texttt{output = fitfunc(X,y)}.

\paragraph{Operon}
The Operon is another GP-based symbolic regression methods available at \url{https://github.com/heal-research/operon}.
We use a function set of $\{$add, sub, mul, aq, sin, cos, tan, log, exp$\}$. Other parameters follow their default configuration.
We run the model using the command \texttt{reg.fit(X, y)}.

\paragraph{ParFam}
The ParFam model is available at \url{https://github.com/Philipp238/parfam}.
We use the default configuration, as suggested by the authors and run the model using the command \texttt{ParFamWrapper(config\_name='big', iterate=True).fit(X, y, time\_limit=500)}.

\paragraph{PhySO}
The Physical Symbolic Optimizer (PhySO) is available at \url{https://github.com/WassimTenachi/PhySO}.
We use the default \texttt{config0} for our experiment for the efficiency consideration.
The operators are allowed from the following set: $\{$add, sub, mul, div, inv, n2, sqrt, neg, log, exp, sin, cos, tan$\}$.
Each problem is assigned a fixed constant "1" and three free constants, all without physical units. The physical units of the input-output variables are also provided to the model.
We run the model using the command \texttt{expression, logs = physo.SR(X, y,
                                X\_units=x\_units, 
                                y\_units=y\_units, 
                                fixed\_consts=[1.],
                                fixed\_consts\_units = np.zeros((1, 5)),
                                free\_consts\_units =  np.zeros((3, 5)),
                                run\_config = config,
                                op\_names=["mul", "add", "sub", "div", "inv", "n2", "sqrt", "neg", "exp", "log", "sin", "cos", "tan"])}.

\paragraph{PySR}
The PySR model is available at \url{https://github.com/MilesCranmer/PySR}.
We use a function set of $\{$add, sub, mul, div, square, cube, exp, log, sin, cos, sqrt$\}$, with maximum iteration of 400, as specified in the default configuration.
We also incorporate the physical units of variables into the searching process, penalizing the incorrect units by adding a loss term with a coefficient of $10^5$, as provided in the default configuration for toy examples with dimensional constraints in PySR.
The model is executed using the command \texttt{model.fit(X, y, X\_units=x\_units, y\_units=y\_units)}.

\paragraph{TPSR}
The Transformer-based Planning for Symbolic Regression (TPSR) is available at \url{https://github.com/deep-symbolic-mathematics/TPSR}.
We use the EndToEnd model as backbone model for MCTS, hence adopting the same function set as the EndToEnd model. Other parameters are set to their default configuration. 
We run the model using the command \texttt{python tpsr.py --backbone\_model e2e --no\_seq\_cache True --no\_prefix\_cache True}.

\paragraph{uDSR}
The unified Deep Symbolic Regression (uDSO) is available at \url{https://github.com/dso-org/deep-symbolic-optimization}.
We use a function set of $\{$add, sub, mul, div, sin, cos, exp, log, poly$\}$.
Other parameters follow their default configuration.
We run the model using the command \texttt{model.fit(X, y)}.

\section{Detailed comparisons with PySR under different configurations}

In our initial experiments, we evaluated both PhyE2E and PySR using default parameters with physical units. To rigorously demonstrate PhyE2E's performance on both the Synthetic and Feynman datasets, we further conducted a comprehensive hyperparameter analysis for PySR based on the Tuning and Workflow Tips from its official documentation (\url{https://ai.damtp.cam.ac.uk/pysr/tuning/}) by systematically categorizing all the hyperparameters into three key dimensions, as follows,

\paragraph{Operator Sets}
To assess whether alternative operator sets could improve the performance of PySR, we explored five different unary operator sets for PySR during the search process. The Default set \{sin, cos, tan, exp, log, sqrt, square, cube\} includes all the operators and it was suggested by PySR to address most circumstances.  We also evaluated several subsets of the default operator sets categorized by operator types as the Trigonometry set \{sin, cos, tan\}, the Power set \{sqrt, square, cube\}, the Exponential set \{exp, log\}, Trigonometry+Power set, Trigonometry+Exponential set, Power+Exponential set. All the configurations contain the standard binary operator set \{add, sub, mul, div\}. 

We statistically analyzed the accuracy of formulas generated by PySR using different operator sets across datasets of varying complexity and difficulty. High- and low-complexity were defined based on whether the complexity of a formula was above or below the average complexity in the dataset. In addition, each formula was categorized into simple, medium, or hard difficulty levels, according to their similarity to the overall formula dataset (see Methods Section 4.4). 

On the Synthetic Dataset (Supplementary Figure ~\ref{fig:fig-S2}a), \ourModel demonstrated superior performance on formulas with high complexity, outperforming the best PySR configuration (using the Default unary set with 1000 iterations) by 27.61\%. For low-complexity formulas, only the PySR variant with the Trigonometry unary set and the Exponential+Trigonometry unary set using 1000 iterations surpassed the standard \ourModel, but it was still outperformed by \ourModelDCM. When evaluating across difficulty levels, both the standard PhyE2E and \ourModelDCM consistently outperformed all PySR configurations. Specifically, \ourModelDCM achieved improvements of 15.20\%, 20.15\%, and 17.50\% over the best PySR configuration on simple, medium, and hard formulas, respectively.

On the Feynman Dataset (Supplementary Figure ~\ref{fig:fig-S2}b), none of the PySR configurations outperformed \ourModelDCM. For both low-complexity and high-complexity formulas, several PySR configurations achieved higher symbolic accuracy than the standard \ourModel model. However, they were still outperformed by \ourModelDCM by 2.85\% and 9.70\%, respectively. Similar trends were also observed across different difficulty levels: \ourModelDCM outperformed the best PySR configuration by 6.81\%, 8.69\%, and 6.00\% on simple, medium, and hard formulas, respectively.

As a result, we found that different operator configurations for PySR can indeed improve  the symbolic accuracy of the detected formulas. However, the best operator sets vary on different complexity and difficulty levels. Our evaluation also showed that PySR with default operators and 1000 iterations generally delivers the optimal performance. Nevertheless, our \ourModelDCM model still outperformed this configuration, achieving symbolic accuracy improvement of 18.67\% and 9.00\% on the synthetic and Feynman datasets, respectively.

\paragraph{Computational Cost and Search Time}
We test the performance of PySR using different values of hyperparameters `niterations', `populations', `population\_size', and `ncycles\_per\_iteration'. The aim is to check whether additional computational effort could improve the performance of PySR. The operator set was set as the default value: {sin, cos, tan, exp, sqrt, square, cube, add, sub, mul, div}. We performed additional trials by adjusting the populations to \{10, 30, 50, 70\}, the population\_size to \{30, 45, 60\}, the ncycles\_per\_iteration to \{200, 380, 560\}, and the niterations to \{40, 100, 400, 1000\}. We selected the values of hyperparameters based on the default configuration and scaled within the range of 0.1× to 2.5× original values, all of which terminate before the 300 seconds time limit for \ourModel and PySR. All experiments were carried out on both the synthetic dataset and the AI Feynman dataset. 

All models showed improvements with increased computational effort (Supplementary Figure ~\ref{fig:fig-S3}a,b). On the synthetic dataset, the symbolic accuracy of PySR improved 8.00\% by varying populations from 10 to 70, improved 4.25\% by varying population\_size from 30 to 60, improved 2.33\% by varying ncycles\_per\_iterations from 200 to 560 and improved 13.83\% by varying niterations from 40 to 1000. And on the Feynman dataset, the symbolic accuracy of PySR improved 9.40\% by varying populations from 10 to 70, improved 0.00\% by varying population\_size from 30 to 60, improved 4.00\% by varying ncycles\_per\_iterations from 200 to 560 and improved 14.00\% by varying niterations from 40 to 1000.

Among all the hyperparameters, we found that increasing the niteartions yielded the best improvement. However, \ourModelDCM still outperformed all the PySR configurations by at least 18.25\% and 6.60\% improvement on the synthetic dataset and Feynman dataset, respectively. We did not further categorize the datasets into different complexity and difficulty levels, since all PySR configurations in this experiment exhibited consistent performance trends across varying levels.

Next, we analyzed the performance of both \ourModel configurations and PySR configuration when using the same computational time. 

All models showed improvements with increased computational costs (Supplementary Figure ~\ref{fig:fig-S3}c). On the Synthetic Dataset, the \ourModel family outperformed the PySR variants, leading by at least 15\% in symbolic accuracy across different levels of elapsed times. On the Feynman Dataset, \ourModel achieved a symbolic accuracy of 73.05\% with an elapsed time of 4.91s, which was surpassed by several PySR variants that required more than 50 seconds of computation. However, none of these PySR variants could outperform the performance of \ourModelDCM, which achieved higher accuracy with similar computational time resources. 

As a result, we found that increasing the search time by changing the configurations on the computational effort for PySR could indeed improve  the accuracy of the detected formulas. However, the best PySR configuration still could not match the performance of our \ourModelDCM model. Additionally, to achieve comparable symbolic accuracy, PySR required approximately 100× more search time. On the other hand, our \ourModel framework incorporates the MCTS and Genetic Programming (GP) refinement modules, which could also take advantage of increased computational time to further improve its accuracy.

\paragraph{Search Constraints}
We evaluated the performance of PySR using different values of ‘constraint’ and ‘nested\_constraint’. The ‘constraints’ parameter controls the complexity of the sub-formulas used within unary and binary operators, and the ‘nested\_constraints’ is used to limit the occurrence of nested unary operators to reduce the likelihood of deeply nested expressions.

Specifically, starting with the default parameters described above, we performed additional experiments by adding constraints=\{"sin": 10, "cos": 10, "tan": 10, "exp": 10, "log": 10\} to to restrict the complexity of sub-formulas used in unary operators to below 10, and adding nested\_constraints=\{"sin": \{"sin":0, "cos":0, "tan":0, "exp":0, "log":0\}, "cos": \{"sin":0, "cos":0, "tan":0, "exp":0, "log":0\}, "tan": \{"sin":0, "cos":0, "tan":0, "exp":0, "log":0\}, "exp": \{"sin":1, "cos":1, "tan":0, "exp":0, "log":1\}, "log": \{"sin":1, "cos":1, "tan":0, "exp":1, "log":0\}\} to prevent deeply nested unary expressions, allowing only a few simple compositions, such as exp(sin(A)) or exp(log(A)), while disallowing more complicated ones like exp(exp(A)) or sin(cos(A)). These constraints were reasonable as we verified that all the formulas in both datasets satisfied them. All the other hyperparameters were set to the default value.

We conducted additional experiments under four PySR settings: (a) without any constraints, (b) with only `constraints', (c) with only `nested\_constraints', and (d) with both constraints simultaneously. These configurations were evaluated on both the synthetic dataset and the AI Feynman dataset. We reported the results of four different PySR constraint configurations in terms of symbolic accuracy (Supplementary Figure ~\ref{fig:fig-S4}).

We found that none of the constrained variants of PySR showed comparable improvement over the unconstrained versions, yielding only a symbolic accuracy increase of 2.23\% and 3.00\% on the synthetic and Feynman datasets, respectively. Notably, both were achieved using the (c)nested\_constraints variant, while applying both types of constraints simultaneously did not lead to better performance. These results suggested that imposing varying levels of constraints offered limited benefit, and tighter constraints did not always lead to improved performance. We did not categorize the datasets into different complexity and difficulty levels either, since all PySR configurations in this experiment showed a similar performance trend across varying levels.

\paragraph{\quad}
\hspace{4pt} To summarize the above three classes of experiments, after an exhaustive investigation of all possible hyperparameter configurations in PySR, we identified only two factors that could meaningfully enhance performance. Firstly, employing a task-specific operator set, could help reduce the search space and improve efficiency for a specific symbolic regression task. However, there was not a single universal operator set that performs the best across all formula subclasses within the datasets, and the default operator set achieved competitive performance among all candidate operator sets in general. Secondly, allocating more search iterations could also improve performance. However, this would substantially increase the search time, while additional search time could also be utilized by our \ourModel using MCTS and GP refinement modules to further improve its accuracy. We also found that imposing PySR search constraints did not yield much improvement for both datasets.

\section{Divide-and-Conquer Algorithm}
\label{sec:algorism}

\begin{algorithm}
\caption{Construction of the possible $\sigma$-divisions}
\label{algorism1}
\textbf{Input:} the oracle neural network $\tilde{f}_\theta(\bm x)$, the uni-variate operator $\sigma$\\
\textbf{Output:} the set of possible $\sigma$-divisions \(\ \mathcal{B} = \{\mathcal{A}_k\}_{k=1}^s\)

\tcp{estimation of inner-variable relationship}
\For{each distinct features \(i\) and \(j\)}{
    calculate $J_{i, j}(\tilde{f}_\theta, \sigma) = \text{median}_{1\leq k\leq N}(|\frac{\partial^2 \sigma^{-1}\circ \tilde{f}_\theta}{\partial x_i\partial x_j}(\bm x_k)|)$\;
}
\tcp{adaptive threshold strategy}
Initialize the set of possible classes of $\sigma$-separable feature pars $\mathcal{S} = \{\}$\;
Initialize the class of $\sigma$-separable features pairs $S = \{\}$\;
Sort $J_{i, j}(\tilde{f}_\theta, \sigma)$ in non-decreasing order and calculate $\epsilon_0, \epsilon_1, \epsilon_2$\;
\For{each $J_{i, j}(\tilde{f}_\theta, \sigma)$ in the sorted order}{
     $S\gets S\cup \{(i, j)\}$\;
    \If{$\epsilon_1 \leq J_{i, j}(\tilde{f}_\theta, \sigma)  \leq \epsilon_2$}{
        $\mathcal{S} \gets \mathcal{S}\cup \{S\}$\;
    }
}
\tcp{construction of the possible $\sigma$-divisions}
Initialize the set of possible $\sigma$-divisions $\mathcal{B} = \{\}$\;
\For{each set of $\sigma$-separable features $S\in \mathcal{S}$}{
    Initialize $\sigma$-division $\mathcal{A} = \{\{1, 2, ..., n\}\}$\;
    \For{any feature pairs $(i, j)\in S$}{
        \For{each \(A \in \mathcal{A}\)}{
            \If{\(i \in A\) and \(j \in A\)}{
                $\mathcal{A} \gets (\mathcal{A} - \{A\}) \cup \{A-\{i\}, A-\{j\}\}$\;
            }
        }
    }
    $\mathcal{B} \gets \mathcal{B} \cup \{\mathcal{A}\}$\;
}
\For{each $\mathcal{A}_k \in \mathcal{B}$}{
    \For{each two element $A_i, A_j \in \mathcal{A}$}{
        \If{$A_i\subseteq A_j$}{
            $\mathcal{A}_k \gets \mathcal{A}_k - \{A_i\}$\;
        }
    }
}
\textbf{Return} \(\mathcal{B}\)\;
\end{algorithm}

\section{Proofs for the divide-and-conquer strategy}

\subsection{Proofs of the decomposition step}
\label{sec:proof-decomposition}

\begin{proof}[Proof of Lemma 1]
By the Inverse Function Theorem, the uni-variate operator $\sigma$ has an inverse $\sigma^{-1}$, such that $\sigma\circ\sigma^{-1} = \mathrm{Id}$ and $\sigma^{-1}\circ\sigma = \mathrm{Id}$.

The ``only if'' part is straightforward. Suppose two features $i$ and $j$ are $\sigma$-separable. By the definition of being $\sigma$-separable, we have that
\[
\sigma^{-1}\circ f(\bm x) = f_1(\bm x_{-i}) + f_2(\bm x_{-j}).
\]
Straightforward calculation shows that for each $\bm{x} \in \mathbb{R}^n$,
\[
\frac{\partial^2 \sigma^{-1}\circ f(\bm x)}{\partial x_i \partial x_j} = 0 .
\]

Now we turn to the ``if'' part. Suppose we have that
\[
\frac{\partial^2 \sigma^{-1}\circ f(\bm x)}{\partial x_i \partial x_j} = 0.
\]
Integrating both sides over $x_j$, we get that there exists $g_2(\bm{x}_{-j})$ such that
\[
\frac{\partial \sigma^{-1} \circ f(\bm{x})}{\partial x_i} = g_2(\bm{x}_{-j}) .
\]
Now integrate both sides over $x_i$. We have that there exists $g_1(\bm{x}_{-i})$ such that
\[
\sigma^{-1} \circ f(\bm{x}) = g_1(\bm{x}_{-i}) + \int g_2(\bm{x}_{-j}) \mathrm{d} x_i .
\]
Let $f_1(\bm{x}_{-i}) = g_1(\bm{x}_{-i})$ and $f_2(\bm{x}_{-j}) = \int g_2(\bm{x}_{-j}) \mathrm{d} x_i$, we conclude that features $i$ and $j$ are $\sigma$-separable.
\end{proof}

\begin{proof}[Proof of Lemma 2]
We prove the lemma by induction on the number of iterations $\ell$.

\medskip
\noindent \underline{\it Induction basis.} When $\ell = 1$, the lemma reduces to the definition of being $\sigma$-separable.

\medskip
\noindent \underline{\it Inductive step.} Suppose the induction hypothesis holds for any $\ell \geq 1$. Now we consider the iteration $(\ell+1)$. By the induction hypothesis, we can express $f(\bm x)$ as 
\begin{align}
f(\bm x)=\sigma\left(\sum_{k=1}^{m_\ell}f_k^{(l)}(\bm x_{A_k^{\ell}})\right) . \label{eq:proof-decomposition-lemma-1}
\end{align}

Suppose we select feature pair $(i, j)$ that is $\sigma$-separable at the $(\ell+1)$-th iteration. By the definition of being $\sigma$-separable, we can also express $f(\bm x)$ as 
\begin{align}
f(\bm x)=\sigma\left(f_1(\bm x_{-i})+f_2(\bm x_{-j})\right),    \label{eq:proof-decomposition-lemma-2}
\end{align}
where $\bm x_{-i}$ is the $(n - 1)$-dimensional vector obtained by removing $x_i$ from $\bm x$. We further define $\bm x_{(-i,-j)}$ as the $(n - 2)$-dimensional vector obtained by removing $x_i$ and $x_j$ from $\bm x$.

Combining Eqs.~(\ref{eq:proof-decomposition-lemma-1},\ref{eq:proof-decomposition-lemma-2}) and using the fact that $\sigma^{-1}$ exists, for any fixed $\alpha$ and $\beta$ (e.g., $\alpha = \beta = 0$),  we have 
\begin{align}
f_1(\bm x_{-i})+f_2(\bm x_{(-i,-j)},x_i=\alpha)&=\sum_{k=1}^{m_\ell}f_k^{(\ell)}\left((\bm x_{A_k^\ell-\{i\}}, x_i=\alpha)_{A_k^\ell}\right), \label{eq:proof-decomposition-lemma-3}\\
f_1(\bm x_{(-i,-j)},x_j=\beta)+f_2(\bm x_{-j})&=\sum_{k=1}^{m_\ell}f_k^{(\ell)}\left((\bm x_{A_k^\ell-\{j\}}, x_j=\beta)_{A_k^\ell}\right), \label{eq:proof-decomposition-lemma-4}\\
f_1(\bm x_{(-i,-j)},x_j=\beta)+f_2(\bm x_{(-i,-j)},x_i=\alpha)&=\sum_{k=1}^{m_\ell}f_k^{(\ell)}\left((\bm x_{A_k^\ell-\{i,j\}}, x_i=\alpha, x_j=\beta)_{A_k^\ell}\right). \label{eq:proof-decomposition-lemma-5}
\end{align}

Combining Eqs.~(\ref{eq:proof-decomposition-lemma-2},\ref{eq:proof-decomposition-lemma-3},\ref{eq:proof-decomposition-lemma-4},\ref{eq:proof-decomposition-lemma-5}), we further have
\begin{align}
f(\bm x)&=\sigma\left(f_1(\bm x_{-i})+f_2(\bm x_{-j})\right) \notag \\
&=\sigma\Big(
\sum_{k=1}^{m_\ell}f_k^{(\ell)}\left((\bm x_{A_k^\ell-\{i\}}, x_i=\alpha)_{A_k^\ell}\right)
+f_k^{(\ell)}\left((\bm x_{A_k^\ell-\{j\}}, x_j=\beta)_{A_k^\ell}\right) \notag \\
&\qquad\qquad\qquad\qquad\qquad\qquad\qquad\qquad
-f_k^{(\ell)}\left((\bm x_{A_k^\ell-\{i,j\}}, x_i=\alpha, x_j=\beta)_{A_k^\ell}\Big)
\right) . \label{eq:proof-decomposition-lemma-6}
\end{align}
Note that $f_k^{(\ell)}\left((\bm x_{A_k^\ell-\{i\}}, x_i=\alpha)_{A_k^\ell}\right)$ is a function of $\bm{x}_{A_k^\ell -\{i\}}$ and $f_k^{(\ell)}\left((\bm x_{A_k^\ell-\{j\}}, x_j=\beta)_{A_k^\ell}\right) - f_k^{(\ell)}\left((\bm x_{A_k^\ell-\{i,j\}}, x_i=\alpha, x_j=\beta)_{A_k^\ell}\right)$ is a function of $\bm{x}_{A_k^\ell - \{j\}}$. Therefore, Eq.~\eqref{eq:proof-decomposition-lemma-6} implies that $\{A_k^{\ell} - \{i\}, A_k^{\ell} - \{j\}\}$, after removing multiplicities in the class, forms a $\sigma$-division, which proves the lemma for iteration $(\ell + 1)$.
\end{proof}

\subsection{Proof of the aggregation theorem}
\label{sec:proof-aggregation}

Before proving Theorem 3, we first prove two useful lemmas as follows.

\begin{lemma}\label{lem:const-evaluation}
Suppose the uni-variate operator $\sigma$ : $\mathbb{R} \to \mathbb{R}$ is strictly monotonic.
If $\{A_i\}_{i=1}^{m}$ is a $\sigma$-partition of the target formula $f : \mathbb{R}^n \to \mathbb{R}$, i.e., $f$ can be expressed as $f(\bm x) = \sigma(f_1(\bm x_{A_1}) + \dots + f_m(\bm x_{A_m})) $. 
Then for any $\bm{\chi} \in \mathbb{R}^n$, and each $i \in \{1, 2, 3, \dots, m\}$, we have that
\[
f_i(\bm{\chi}_{A_i}) = \sum_{\emptyset \subsetneq I \subseteq [m]} (-1)^{|I|-1}f_i(\bm x_{A_i\cap \mathcal{A}_{I}}=\bm{\chi}_{A_i\cap \mathcal{A}_{I}}, \bm x_{A_i-\mathcal{A}_{I}}=\bm z_{A_i-\mathcal{A}_{I}}).
\]
\end{lemma}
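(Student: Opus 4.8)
\textbf{Proof proposal for Lemma~\ref{lem:const-evaluation}.}
The plan is to prove the identity by a purely combinatorial sign‑cancellation (inclusion–exclusion) argument on the index family $\{I : \emptyset \subsetneq I \subseteq [m]\}$, with no analytic input beyond the existence of the decomposition $f = \sigma(f_1 + \dots + f_m)$ supplied by the $\sigma$-partition hypothesis; indeed, observe first that only $f_i$ (and not $\sigma$ or the other sub-formulas $f_j$) occurs on either side, so the claim is really a statement about the single function $f_i$ of the coordinates in $A_i$. For each nonempty $I \subseteq [m]$, the corresponding summand on the right-hand side is the value of $f_i$ at the point whose coordinates indexed by $A_i \cap \mathcal{A}_I$ are taken from $\bm{\chi}$ and whose remaining coordinates (those in $A_i - \mathcal{A}_I$) are taken from $\bm{z}$; I will write this value as $f_i\big(\bm x_{A_i\cap\mathcal A_I}=\bm\chi,\ \bm x_{A_i-\mathcal A_I}=\bm z\big)$ throughout.

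The key structural observation I would isolate is that, writing $\mathcal A_\emptyset = [n]$ by convention, one has $\mathcal{A}_{\{i\}\cup J} = A_i \cap \mathcal{A}_J$ for every $J \subseteq [m]$; consequently the evaluation point attached to $I = \{i\}\cup J$ depends only on $A_i \cap \mathcal{A}_J$, and in particular it coincides with the evaluation point attached to $I' = J$ whenever $i \notin J$. Next I would split $\{I : \emptyset\subsetneq I\subseteq[m]\}$ into the singleton $\{i\}$, the sets $I$ with $i\in I$ and $|I|\ge 2$, and the nonempty sets $I$ with $i\notin I$. The map $I \mapsto I\setminus\{i\}$ is a bijection between the second and third families, and under it the two summands are $f_i$ evaluated at the \emph{same} point while the signs satisfy $(-1)^{|I|-1} = -(-1)^{\,|I\setminus\{i\}|-1}$ (since $|I\setminus\{i\}| = |I|-1$); hence every such pair contributes $0$. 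The only leftover term is $I = \{i\}$, for which $A_i\cap\mathcal A_{\{i\}} = A_i$ and $A_i-\mathcal A_{\{i\}}=\emptyset$, giving exactly $f_i(\bm\chi_{A_i})$; summing yields the claimed identity.

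The obstacle I anticipate is purely one of bookkeeping: I must be careful about exactly which coordinates of $A_i$ receive values from $\bm\chi$ versus $\bm z$ in each summand, and I must fix consistent conventions for $\mathcal A_I$ when the intersection is empty and for the (vacuous) assignment of coordinates in $\emptyset$, so that the pairing argument holds literally term by term rather than merely up to reindexing. I would also note explicitly that neither twice-differentiability of $f$ nor strict monotonicity of $\sigma$ is used in this lemma — those hypotheses are inherited from the surrounding setup and matter only when this lemma is combined with the decomposition results to prove Theorem~\ref{thm:aggregation}.
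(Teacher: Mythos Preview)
Your proposal is correct and follows essentially the same route as the paper: both proofs split the index family according to whether $i\in I$, use the identity $A_i\cap\mathcal A_{\{i\}\cup J}=A_i\cap\mathcal A_J$ to see that the summands for $I$ and $I\setminus\{i\}$ are evaluated at the same point, and then cancel them pairwise via the sign flip $(-1)^{|I|-1}=-(-1)^{|I\setminus\{i\}|-1}$, leaving only the $I=\{i\}$ term. Your remark that neither the monotonicity of $\sigma$ nor differentiability of $f$ is actually used here is also accurate.
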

\begin{proof}
For any $\bm{\chi} \in \mathbb{R}^n$, and each $i \in \{1, 2, 3, \dots, m\}$, we have that
\begin{align}
    & \sum_{\emptyset \subsetneq I \subseteq [m]} (-1)^{|I|-1}f_i(\bm x_{A_i\cap \mathcal{A}_{I}}=\bm{\chi}_{A_m\cap \mathcal{A}_{I}}, \bm x_{A_m-\mathcal{A}_{I}}=\bm z_{A_i-\mathcal{A}_{I}})\notag\\
    &\qquad \qquad =  \sum_{\{i\} \subseteq I \subseteq [m]} (-1)^{|I|-1}f_i(\bm x_{A_i\cap \mathcal{A}_{I}}=\bm{\chi}_{A_i\cap \mathcal{A}_{I}}, \bm x_{A_i-\mathcal{A}_{I}}=\bm z_{A_i-\mathcal{A}_{I}})\notag\\
    & \qquad \qquad \qquad \qquad +\sum_{\emptyset \subsetneq I \subseteq [m] - \{i\}} (-1)^{|I|-1}f_i(\bm x_{A_i\cap \mathcal{A}_{I}}=\bm{\chi}_{A_i\cap \mathcal{A}_{I}}, \bm x_{A_i-\mathcal{A}_{I}}=\bm z_{A_i-\mathcal{A}_{I}}) . \notag
\end{align}
Also note that
\begin{align*}
& \sum_{\{i\} \subseteq I \subseteq [m]}  (-1)^{|I|-1}f_i(\bm x_{A_i\cap \mathcal{A}_{I}}=\bm{\chi}_{A_i\cap \mathcal{A}_{I}}, \bm x_{A_i-\mathcal{A}_{I}}=\bm z_{A_i-\mathcal{A}_{I}}) \\
& \qquad \qquad  = f_i(\bm{\chi}_{A_i}) +  \sum_{\{i\} \subsetneq I \subseteq [m]} (-1)^{|I|-1}f_i(\bm x_{A_i\cap \mathcal{A}_{I}}=\bm{\chi}_{A_i\cap \mathcal{A}_{I}}, \bm x_{A_i-\mathcal{A}_{I}}=\bm z_{A_i-\mathcal{A}_{I}}) \\
& \qquad \qquad  = f_i(\bm{\chi}_{A_i}) +  \sum_{\{i\} \subsetneq I \subseteq [m]} (-1)^{|I|-1}f_i(\bm x_{A_i\cap \mathcal{A}_{I - \{i\}}} = \bm{\chi}_{A_i \cap \mathcal{A}_{I - \{i\}}}, \bm x_{A_i-\mathcal{A}_{I - \{i\}}}=\bm z_{A_i-\mathcal{A}_{I - \{i\}}}) \\
& \qquad \qquad  = f_i(\bm{\chi}_{A_i}) -  \sum_{\emptyset \subsetneq I \subseteq [m] - \{i\}} (-1)^{|I|-1}f_i(\bm x_{A_i\cap \mathcal{A}_{I}}=\bm{\chi}_{A_i\cap \mathcal{A}_{I}}, \bm x_{A_i-\mathcal{A}_{I}}=\bm z_{A_i-\mathcal{A}_{I}}) .
\end{align*}
Altogether, we prove the desired equality .
\end{proof}

\begin{lemma}\label{lem:lem-variance}
Under the condition of Theorem 3, for each $I \subseteq [m]$ and every $k \in I$, we have that 
\[
g_{k}( \bm x_{\mathcal{A}_{I}}, \bm{x}_{A_{k}-\mathcal{A}_{I}}=\bm z_{A_{k}-\mathcal{A}_I})
= \sigma\left(\sum_{i=1}^{m}f_i(\bm x_{A_i\cap \mathcal{A}_I}, \bm x_{A_i - \mathcal{A}_I} = \bm z_{A_i - \mathcal{A}_I})\right).
\]
\end{lemma}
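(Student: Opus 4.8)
The plan is to unfold the definition of the surrogate sub-formula $g_k$, reduce the left-hand side to a single evaluation of $f$ at one well-chosen point, and then invoke the $\sigma$-division representation of $f$.

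First I would record the key set-theoretic fact: since $k \in I$, we have $\mathcal{A}_I = \bigcap_{i \in I} A_i \subseteq A_k$, so the three sets $\mathcal{A}_I$, $A_k - \mathcal{A}_I$, and $\overline{A_k} = [n] - A_k$ are pairwise disjoint and partition $[n]$; in particular $A_k - \mathcal{A}_I$ is a genuine subset of $A_k$, so substituting $\bm{x}_{A_k - \mathcal{A}_I} = \bm{z}_{A_k - \mathcal{A}_I}$ into $g_k$ is well-defined. Recall from Eq.~\eqref{eq:def-surrogate-sub-formula} that $g_k(\bm{x}_{A_k}) = f(\bm{x}_{A_k}, \bm{x}_{\overline{A_k}} = \bm{z}_{\overline{A_k}})$. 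Performing the stated substitution, the argument of $f$ becomes the point $\bm{w} \in \mathbb{R}^n$ that agrees with $\bm{x}$ on the coordinates in $\mathcal{A}_I$ and with $\bm{z}$ on all the remaining coordinates $(A_k - \mathcal{A}_I) \cup \overline{A_k} = [n] - \mathcal{A}_I$. Hence the left-hand side of the claimed identity equals $f(\bm{w})$, and I would emphasize that $\bm{w}$ does not depend on the choice of $k \in I$ — this is exactly the phenomenon the lemma is isolating.

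Next I would apply the hypothesis that $\mathcal{A} = \{A_i\}_{i=1}^m$ is a $\sigma$-division of $f$: there exist sub-formulas $f_1, \dots, f_m$ such that $f(\bm{y}) = \sigma\big(\sum_{i=1}^{m} f_i(\bm{y}_{A_i})\big)$ for all $\bm{y}$. Evaluating at $\bm{y} = \bm{w}$ and observing that for each $i$ the restriction $\bm{w}_{A_i}$ has its $A_i \cap \mathcal{A}_I$-entries taken from $\bm{x}$ and its $A_i - \mathcal{A}_I$-entries taken from $\bm{z}$, the $i$-th summand is precisely $f_i(\bm{x}_{A_i \cap \mathcal{A}_I}, \bm{x}_{A_i - \mathcal{A}_I} = \bm{z}_{A_i - \mathcal{A}_I})$, which gives the desired equation.

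There is essentially no hard step here; the only thing requiring care is the bookkeeping of which coordinates of $\bm{w}$ inherit their value from $\bm{x}$ versus from $\bm{z}$, together with the containment $\mathcal{A}_I \subseteq A_k$ that legitimizes the substitution. I would note that strict monotonicity (hence invertibility) of $\sigma$ is not actually used in this lemma — only the existence of the additive-inside-$\sigma$ representation of $f$ — but I would retain the hypothesis as stated since Lemma~\ref{lem:lem-variance} is used inside the proof of Theorem~\ref{thm:aggregation}, where invertibility is needed.
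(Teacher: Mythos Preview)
Your proof is correct and follows essentially the same route as the paper: unfold the definition of $g_k$ via Eq.~\eqref{eq:def-surrogate-sub-formula}, use the containment $\mathcal{A}_I \subseteq A_k$ (since $k \in I$) to see that the resulting evaluation point of $f$ takes $\bm{x}$-values on $\mathcal{A}_I$ and $\bm{z}$-values elsewhere, and then apply the $\sigma$-division representation of $f$. Your explicit introduction of the point $\bm{w}$ and the remark that it is independent of $k \in I$ make the bookkeeping a bit more transparent than the paper's terse version, and your observation that monotonicity of $\sigma$ is not actually used here is correct.
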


\begin{proof}
This lemma follows directly from the definition of $g_k(\bm x_{A_k})$:
\[
g_k(\bm x_{A_k})=f(\bm x_{A_k}, \bm x_{\overline{A_k}}=\bm z_{\overline{A_k}})=\sigma\left(\sum_{i=1}^mf_i(\bm x_{A_i\cap A_k}, \bm x_{A_i-A_k}=\bm z_{A_i-A_k})\right).
\]
Thus, noting that $\mathcal{A}_I \subseteq A_k$, we get
\[
g_{k}(\bm x_{\mathcal{A}_{I}}, \bm{x}_{A_{k}-\mathcal{A}_{I}}=\bm z_{A_{k}-\mathcal{A}_I}) =\sigma\left(\sum_{i=1}^mf_i(\bm x_{A_i\cap \mathcal{A}_I}, \bm x_{A_i-\mathcal{A}_I}=\bm z_{A_i-\mathcal{A}_I})\right).
\]
\end{proof}

Now, we are ready to prove Theorem 3.

\begin{proof}[Proof of Theorem 3]
By Lemma~\ref{lem:const-evaluation}, for any $\bm{\chi} \in \mathbb{R}^n$, and each $i \in \{1, 2, 3, \dots, m\}$, we have
\[
f_i(\bm{\chi}_{A_i}) = \sum_{\emptyset \subsetneq I \subseteq [m]} (-1)^{|I|-1}f_i(\bm x_{A_i\cap \mathcal{A}_{I}}=\bm{\chi}_{A_i\cap \mathcal{A}_{I}}, \bm x_{A_i-\mathcal{A}_{I}}=\bm z_{A_i-\mathcal{A}_{I}}) .
\]
Summing the above equality over $1\leq i\leq m$, we have
\begin{align}
\sum_{i=1}^m f_i(\bm{\chi}_{A_i}) &= \sum_{i=1}^{m}\sum_{\emptyset \subsetneq I \subseteq [m]} (-1)^{|I|-1}f_i(\bm x_{A_i\cap \mathcal{A}_{I}}=\bm{\chi}_{A_i\cap \mathcal{A}_{I}}, \bm x_{A_i-\mathcal{A}_{I}}=\bm z_{A_i-\mathcal{A}_{I}})\notag\\
&= \sum_{\emptyset \subsetneq I\subseteq [m]} \frac{(-1)^{|I|-1}}{|I|}\sum_{t\in I}\sum_{i=1}^{m}f_i(\bm x_{A_i\cap \mathcal{A}_{I}}=\bm{\chi}_{A_i\cap \mathcal{A}_{I}}, \bm x_{A_i-\mathcal{A}_{I}}=\bm z_{A_i-\mathcal{A}_{I}})\notag\\
&= \sum_{\emptyset \subsetneq I\subseteq [m]} \frac{(-1)^{|I|-1}}{|I|}\sum_{t\in I}\sigma^{-1}\circ g_t(\bm x_{\mathcal{A}_{I}}=\bm{\chi}_{\mathcal{A}_{I}}, \bm x_{A_t-\mathcal{A}_{I}}=\bm z_{A_t-\mathcal{A}_{I}}), \notag
\end{align}
where the last equality is due to Lemma~\ref{lem:lem-variance}. Applying $\sigma$ to both sides of the equality above, we prove the theorem.
\end{proof}

\end{document}